\documentclass{article}
\usepackage[utf8]{inputenc}

\usepackage{xcolor}

\usepackage{tikz}
\usepackage{subcaption}

\usepackage[english]{babel}

\usepackage[letterpaper,top=2cm,bottom=2cm,left=3cm,right=3cm,marginparwidth=1.75cm]{geometry}

\usepackage{amsmath,amsfonts,amssymb,amsthm}
\usepackage{mathrsfs}
\usepackage{graphicx}
\usepackage[colorlinks=true, allcolors=blue]{hyperref}

\definecolor{knownblue}{RGB}{35,87,145}
\definecolor{openorange}{RGB}{184,96,16}
\definecolor{excludedred}{RGB}{165,48,48}
\definecolor{panelgray}{RGB}{248,249,251}

\newcommand{\cat}{\ensuremath{\mathsf{Cat}\text{-}\mathrm{PCG}}}
\newcommand{\catp}[2]{\ensuremath{\mathsf{Cat}\text{-}\mathrm{PCG}(#1,#2)}}

\newtheorem{definition}{Definition}
\newtheorem{theorem}{Theorem}
\newtheorem{lemma}{Lemma}
\newtheorem{corollary}{Corollary}
\newtheorem{proposition}{Proposition}

\newtheorem{problem}{Problem}
\newtheorem{observation}{Observation}
\newcommand{\ed}[2]{\draw (v#1)--(v#2);}

\newcommand{\tnil}{\Lambda}

\newcommand{\onegraph}[1]{%
  \node[v] (v1) at (0,0) {\tiny $#1$};
}

\newcommand{\twograph}[2]{%
  \node[v] (v1) at (-0.35,0) {\tiny $#1$};
  \node[v] (v2) at (0.35,0)  {\tiny $#2$};
}

\newcommand{\threegraph}[3]{%
  \node[v] (v1) at (-0.45,-0.25) {\tiny $#1$};
  \node[v] (v2) at (0.45,-0.25)  {\tiny $#2$};
  \node[v] (v3) at (0,0.55)      {\tiny $#3$};
}

\newcommand{\fourgraph}[4]{%
  \node[v] (v1) at (-0.5,0.5)  {\tiny $#1$};
  \node[v] (v2) at (0.5,0.5)   {\tiny $#2$};
  \node[v] (v3) at (0.5,-0.5)  {\tiny $#3$};
  \node[v] (v4) at (-0.5,-0.5) {\tiny $#4$};
}

\newcommand{\fivegraph}[5]{%
  \node[v] (v1) at (90:0.65)  {\tiny $#1$};
  \node[v] (v2) at (162:0.65) {\tiny $#2$};
  \node[v] (v3) at (234:0.65) {\tiny $#3$};
  \node[v] (v4) at (306:0.65) {\tiny $#4$};
  \node[v] (v5) at (18:0.65)  {\tiny $#5$};
}

\newcommand{\gcell}[4]{%
\begin{minipage}[t]{0.18\textwidth}
\centering
\begin{tikzpicture}[
  scale=1,
  v/.style={circle,draw,inner sep=0.8pt,minimum size=4.2mm},
  every edge/.style={line width=0.35pt}
]
#2
#3
\end{tikzpicture}

{\scriptsize #1}\\[-1mm]
{\tiny \(I=#4\)}
\end{minipage}
}

\title{Pairwise Compatibility Graphs of Unweighted Caterpillars}
\author{Cyriac Antony, Manuel Lafond, Angelo Monti and Blerina Sinaimeri}

\begin{document}
\maketitle

\begin{abstract}
In this paper we study pairwise compatibility graphs whose witness tree is required to be an unweighted caterpillar.  This is a natural restriction of PCGs, in the same line of research as the study of star witnesses, which led to the class of double-threshold graphs. In the caterpillar case, we show that this restriction is equivalent to a one-dimensional difference-labelling problem: a graph admits such a representation if and only if its vertices can be assigned integer labels so that adjacency is determined exactly by whether the absolute difference of two labels belongs to a given interval. 

Using this characterization, we derive structural properties of the class, including bounded twin-width, and study its relation with familiar graph families. We show that the class contains cycles and two-dimensional grids, while it excludes natural obstructions such as the subdivided star $S_{2,2,2,2,2}$, the cube graph $Q_3$, the wheel $W_6$, and the complement of the $3$-sunlet. We also consider restricted representations, including centipede witnesses and representations with a given interval and we give a polynomial-time recognition algorithm when the interval is fixed.

\end{abstract}

\section{Introduction}
Pairwise compatibility graphs (PCGs) were introduced in~\cite{KPM03} in the context of phylogenetic reconstruction and have since been studied as a graph class defined through distances in a tree; see, for example, the surveys~\cite{Calamoneri_survey_2025, Rahman2020}. Intuitively, in a PCG representation the vertices of the graph are placed at the leaves of an edge-weighted tree, called a witness tree, and two vertices are adjacent when the distance between the corresponding leaves belongs to a prescribed interval.

Several variants of PCGs have been considered by imposing restrictions on either the distance interval or the witness tree. Restrictions on the interval lead, for instance, to leaf powers, min-leaf powers, and exact leaf powers~\cite{B10,B10exact,L22}. Restrictions on the topology of the witness tree have also led to important graph classes. A well-known example is given by star witnesses introduced in~\cite{CALAMONERI2013}: in this case, distances between leaves are determined by sums of pendant edge weights, and the corresponding graphs coincide with double-threshold graphs~\cite{Xiao2020,Kobayashi22,Jamison20}. Caterpillar witnesses form another natural restriction and have appeared in the study of several graph families admitting PCG representations, including cycles, ladder graphs, interval graphs, grids, and circular graphs~\cite{calamoneri2014pairwise,Y10,Calamoneri_survey_2025}.

In this paper we study PCGs whose witness tree is required to be an unweighted caterpillar. We call this graph class $\mathrm{Cat\text{-}PCG}$. This combines two restrictions: the topology of the witness tree is fixed to be a caterpillar, and all its edges have unit weight. The latter condition is meaningful only when the topology is prescribed. Indeed, for unrestricted PCGs, integer weights can be eliminated by subdividing edges, but this operation does not preserve the original topology of the witness tree~\cite{calamoneri_etal2013}. Thus the class of PCGs admitting an unweighted caterpillar witness is not simply a reformulation of the general PCG model.

The reason why this class can be studied effectively is that distances in an unweighted caterpillar have a very simple form. Once the leaves are attached to the spine, the distance between two leaves depends only on the difference between their attachment positions. This leads naturally to interval difference labellings: vertices are placed on the integer line, and adjacency is determined by whether the absolute difference of two labels belongs to a fixed interval. We show that a graph admits an unweighted caterpillar witness if and only if it admits such a difference labelling. This characterization is the main tool used throughout the paper. In this sense, unweighted caterpillar witnesses provide a difference-based analogue of the sum-based representations arising from star witnesses.

Using this viewpoint, we investigate the structure of the resulting class. We prove that every graph admitting an unweighted caterpillar witness has bounded twin-width. This is in contrast with treewidth and clique-width, which are unbounded for the class, since two-dimensional grids admit such representations. We also study the relation with familiar graph families. In particular, we show that cycles and two-dimensional grids belong to the class. Note that the latter result is a strengthening of~\cite[Theorem 4]{hakim2022newresults}, where grid graphs are shown to be PCGs of \emph{weighted} caterpillars. Thus the unweighted restriction still allows graph families of unbounded treewidth and clique-width.

On the other hand, several natural graphs are not in $\mathrm{Cat\text{-}PCG}$, including the subdivided star $S_{2,2,2,2,2}$, the cube graph $Q_3$, the wheel $W_6$, and the complement of the $3$-sunlet. These results show that the class has a non-trivial boundary: it is large enough to contain graph families of unbounded treewidth and clique-width, but still constrained by the one-dimensional nature of the representation. In particular, even some small and structurally simple graphs fail to admit an unweighted caterpillar witness.

We also consider more restricted forms of the model. We characterize the graphs admitting centipede witnesses, where each spine vertex has exactly one leaf, and we study the case of single-point intervals. These results show that further restrictions on the caterpillar topology or on the distance interval lead to explicit and structured subclasses.

Finally, we address the recognition problem for a given interval. While the recognition of $\mathrm{Cat\text{-}PCG}$ in general remains open, we show that, for fixed integers $a,b$, membership in $\mathrm{Cat\text{-}PCG}(a,b)$ can be decided in polynomial time. Polynomial-time recognition is known for star-PCGs, which coincide with double-threshold graphs and can be recognized in linear time~\cite{Kobayashi22}; our result provides a polynomial-time recognition algorithm for unweighted caterpillar witnesses when the distance interval is fixed.

The paper is organized as follows. Section~\ref{sec:preliminaries} contains the definitions and notation used throughout the paper. In Section~\ref{sec:difference_labellin} we introduce interval difference labellings and relate them to unweighted caterpillar witnesses; we also prove the bounded twin-width result. Section~\ref{sec:graph_classes} is devoted to examples and obstructions. In Section~\ref{sec:restrictions} we study restricted representations, including centipede witnesses and restrictions on the admissible interval. In Section~\ref{sec:recognition} we consider the recognition problem for a prescribed interval. Section~\ref{sec:graph_operations} studies closure properties of $\mathrm{Cat\text{-}PCG}$, and Section~\ref{sec:concl} concludes with open problems.

\section{Preliminaries}\label{sec:preliminaries}
For a graph $G$, we denote by $V(G)$ and $E(G)$ its vertex set and edge set, respectively. 
For a vertex $v\in V(G)$, we define $N_G(v)=\{u: \{u,v\}\in E(G)\}$  the \emph{neighborhood} of $v$ and we omit the subscript when the graph is clear from the context. Two distinct vertices \(u,v\in V(G)\) are called \emph{false twins} if $N(u)=N(v)$, and \emph{true twins} if $N(u) \setminus \{v\}=N(v) \setminus \{u\}$ and $uv \in E(G)$. We say that \(u\) and \(v\) are \emph{twins} if they are either true twins or false twins. For a tree $T$, we denote by $L(T)$ the set of leaves of $T$, that is the set of vertices of degree 1. 

\begin{definition}[\cite{KPM03}]\label{def.PCG}
A \emph{pairwise compatibility graph} (PCG) is a graph $G$ for which there exist a weighted tree $T$, whose edges have positive real weights, and an interval $I$ of nonnegative real numbers, such that
$V(G)=L(T)$ and, for every two distinct leaves $u,v\in L(T)$,
$uv\in E(G)$  if and only if $d_T(u,v)\in I$ (where $d_T(u,v)$ denotes the weighted distance between $u$ and $v$ in $T$).
In this case, $T$ is called a witness tree of $G$ and we denote $G=PCG(T,I).$
\end{definition}

\noindent
When all edges of $T$ have unit weight, we say that $T$ is \emph{unweighted}, and $d_T(u,v)$ denotes the number of edges in the unique path between $u$ and $v$ in $T$. A \emph{caterpillar} is a tree $ T $ such that removing leaves of $ T $ results in a path. We call this path the \emph{spine} of $ T $. 


\begin{definition}[\cat]\label{def:cat_PCGs}
For non-negative integers $0\leq a\leq b$, we denote by $\catp{a}{b}$ the class of graphs $G$ for which $G=\mathrm{PCG}(T,[a,b])$ for some unweighted caterpillar $T$.
We define
\[
\mathrm{Cat\mbox{-}PCG}
   =\bigcup_{0\leq a\leq b}\catp{a}{b}.
\]
\end{definition}

Let $T$ be an unweighted caterpillar witness for $G$. For every
$v\in V(G)$, we denote by $s_T(v)$ the unique neighbour of $v$ in $T$;
this vertex lies on the spine of $T$. If the spine vertices are ordered from left to right as $s_1,s_2,\ldots,s_k$, we define $\pi_T(v)$ to be the position on the spine where the leaf $v$ is attached. That is, $\pi_T(v)=i$ whenever $s_T(v)=s_i$. In particular, for distinct $u,v\in V(G)$, $d_T(u,v)=|\pi_T(u)-\pi_T(v)|+2.$

We denote by $[n]=\{1,2,\ldots, n\}$. Throughout the paper, $\mathbb{N}=\{1,2,\ldots\}$.

The $n \times m$ \emph{grid} is the graph whose vertex set is $\{ (i, j) : i \in [n], j \in [m] \}$,  and two vertices $(i,j)$ and $(k,\ell)$ are adjacent if and only if
$|i-k|+|j-\ell|=1$. The \emph{cube graph} \(Q_3\) is the graph whose vertex set is \(\{0,1\}^3\), where two vertices are adjacent if and only if they differ in exactly one coordinate.

For every $n\geq 3$ we denote by $C_n$ the cycle on $n$ vertices. For every $n\geq 4$, we denote by $W_n$ the \emph{wheel graph}  on $n$ vertices (i.e., the graph obtained by adding a universal vertex to the cycle $C_{n-1}$).

We denote by $G\dot\cup H$ the disjoint union of $G$ and $H$, and by
$\overline{G}$ the complement of $G$.  For two graphs \(H_1,H_2\) on the same vertex set, with \(E(H_2)\subseteq E(H_1)\), we denote by \(H_1-H_2\) the graph with edge set \(E(H_1)\setminus E(H_2)\).

\section{Difference labellings and structural properties}\label{sec:difference_labellin}
Motivated by two notions of difference graphs~\cite{hammer_etal1990,haray1990sum}, for non-negative integers $ a $ and $ b $ with $ a\leq b $, we define an \emph{$ [a,b] $-difference labelling} of a graph $ G $ as follows.

\begin{definition}[${[a,b]}$-difference labelling]\label{def:ab-labelling}
Let $a, b$ be two non-negative integers with $ 0\leq a\leq b $. An \emph{$ [a,b] $-difference labelling} of $ G $ is a function $ p\colon V(G)\to \mathbb{N} $ such that $ uv\in E(G) $ if and only if $ |p(u)-p(v)|\in [a,b] $. 
\end{definition}

To the best of our knowledge, $[a,b] $-difference labellings have not been studied before.  Notice that in a caterpillar witness tree,  the distance between two leaves is determined by the absolute difference of their spine positions and this is the basic reason why Cat-PCGs can be studied through  $[a,b]$-difference labelling as shown by the following theorem.  

\begin{theorem}\label{thm:char}
Let $G$ be a graph and let $a,b$ be two non-negative integers with $0\leq a\leq b$. Then $G\in \catp{a+2}{b+2}$  if and only if $G$ admits an $[a,b]$-difference labelling.
\end{theorem}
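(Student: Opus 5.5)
Both directions rest on the identity $d_T(u,v)=|\pi_T(u)-\pi_T(v)|+2$ for distinct leaves $u,v$ of an unweighted caterpillar $T$, recorded in the preliminaries. The plan is to show that spine positions and labels are interchangeable: positions give labels directly, and labels give positions once the labels are made contiguous.

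\emph{Forward direction.} Suppose $G=\mathrm{PCG}(T,[a+2,b+2])$ for an unweighted caterpillar $T$. Set $p(v)=\pi_T(v)$ for every $v\in V(G)$. This takes values in $\{1,\dots,k\}\subseteq\mathbb{N}$. For distinct $u,v$ we have $uv\in E(G)$ iff $d_T(u,v)\in[a+2,b+2]$, iff $|p(u)-p(v)|+2\in[a+2,b+2]$, iff $|p(u)-p(v)|\in[a,b]$. So $p$ is an $[a,b]$-difference labelling.

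\emph{Backward direction.} Let $p$ be an $[a,b]$-difference labelling and put $m=\min p$ and $M=\max p$.
\begin{itemize}
\item Shift the labels to $p'(v)=p(v)-m+1$. Differences are unchanged, so $p'$ is still a valid labelling, and its values lie in $\{1,\dots,k\}$ with $k=M-m+1$.
\item Build a path $s_1,\dots,s_k$ and attach to $s_{p'(v)}$ a new pendant leaf identified with $v$, for each $v$. Several leaves may share a spine vertex; this is exactly the situation $p(u)=p(v)$, where the distance between them is $2$.
\item Check that the result is a caterpillar whose leaf set is exactly $V(G)$.
\end{itemize}

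The main obstacle is this last check, which can fail in two ways.
\begin{itemize}
\item \emph{Spine vertices with no attached leaf.} Some $s_i$ may receive no vertex of $G$. These are fine as long as they are internal vertices of the path, since they then have degree $2$. The end vertices $s_1$ and $s_k$ do carry leaves, because the labels $1$ and $k$ are both attained. So an end vertex has degree at least $2$ when $k\ge2$, and none of them becomes a spurious leaf.
\item \emph{Degenerate case $k=1$.} Here all labels are equal. If $|V(G)|\ge2$, the single spine vertex is a star centre, which is not a leaf, and distances are $2$. If $|V(G)|=1$, the tree is a single edge whose spine vertex is itself a leaf. One handles this by extending the spine with an extra vertex, or one notes that the definitions are taken with $|V(G)|\ge2$.
\end{itemize}

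Once $T$ is a caterpillar with $L(T)=V(G)$ and $\pi_T=p'$, the same chain of equivalences as in the forward direction, run in reverse, gives $G=\mathrm{PCG}(T,[a+2,b+2])$.
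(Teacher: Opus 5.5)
Your proof is correct and follows the paper's argument: take $p=\pi_T$ in one direction, and in the other shift the labels to start at $1$ and hang each vertex on $s_{p'(v)}$ of a path $s_1,\dots,s_k$; you are more careful than the paper in checking that $s_1$ and $s_k$ carry leaves, so that $L(T)=V(G)$. The only slip is the case $|V(G)|=1$: extending the spine does not help, because every tree with at least two vertices has at least two leaves, so some leaf would lie outside $V(G)$, and that case can only be settled by the convention you mention.
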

\begin{proof}
Suppose first $G=PCG(T,[a+2,b+2])$ then for any two distinct vertices $u,v$ we have that $uv\in E(G)$ if and only if $|\pi_T(u)-\pi_T(v)|\in [a,b]$ which means that $\pi_T$ is an $[a,b]$-difference labelling. 

Conversely, given an $[a,b]$-difference labelling $p$, let $c=\min_{v\in V(G)} p(v)$ and define for every $v$, $q(v)=p(v)-c+1$. This is just a shifting of the labels so they start from $1$. Notice that $q$ is again an $[a,b]$-labelling as $|q(v)-q(u)|=|p(v)- p(u)|$.  Now, let $r=\max_{v\in V(G)} q(v)$,and take a path$ s_1,\dots,s_r$, and attach each vertex $v\in V(G)$ as a leaf to $s_{q(v)}$. The resulting tree is a caterpillar and satisfies $d_T(u,v)=|p(u)-p(v)|+2$ for all distinct $u,v$. Hence it is a witness for $G$ with interval $[a+2,b+2]$.
\end{proof}

We have the following.

\begin{observation}\label{obs:same_label_twins}
Let \(G\) have an \([a,b]\)-difference labelling \(p\). If \(p(u)=p(v)\) for two distinct vertices \(u,v\), then \(u\) and \(v\) are twins. More precisely, they are false twins if \(a>0\), and true twins if \(a=0\).  In particular, if $G$ is twin-free then $p$ is injective.
\end{observation}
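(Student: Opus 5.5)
The argument is a direct check from Definition~\ref{def:ab-labelling}: if two vertices receive the same label, then every third vertex sees them at the same distance. Fix distinct $u,v$ with $p(u)=p(v)$. For any $w\in V(G)\setminus\{u,v\}$ we have $|p(w)-p(u)|=|p(w)-p(v)|$. By the definition of an $[a,b]$-difference labelling, this gives $wu\in E(G)$ if and only if $wv\in E(G)$. Hence
\[
N(u)\setminus\{v\}=N(v)\setminus\{u\}.
\]

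Next we decide whether $u$ and $v$ are themselves adjacent, using $|p(u)-p(v)|=0$.

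\textbf{Case $a=0$.} Here $0\in[a,b]$, so $uv\in E(G)$. Together with the equality above, $u$ and $v$ are true twins.

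\textbf{Case $a>0$.} Here $0\notin[a,b]$, so $uv\notin E(G)$. Then $v\notin N(u)$ and $u\notin N(v)$, so the equality upgrades to $N(u)=N(v)$. Thus $u$ and $v$ are false twins.

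The final claim is the contrapositive. If $p$ were not injective, some two distinct vertices would share a label and so, by the above, be twins. This is impossible when $G$ is twin-free, so $p$ must be injective.

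There is no real obstacle here; the only point needing care is matching the two cases to the paper's definitions of true and false twins. True twins are defined via neighbourhoods with the partner removed, plus adjacency; false twins are defined via equality of open neighbourhoods.
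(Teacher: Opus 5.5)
Your proposal is correct and is exactly the direct verification the paper leaves implicit. The paper states this observation without proof; the same reasoning (equal labels force equal adjacency to every third vertex, and the zero difference decides adjacency according to whether $a>0$) reappears in its proofs of Proposition~\ref{prop:close} and Theorem~\ref{thm:exact_interval}.
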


\begin{proposition}\label{prop:close}
Let \(0\leq a\leq b\). The class of graphs admitting an
\([a,b]\)-difference labelling is closed under:
\begin{enumerate}
    \item taking induced subgraphs;
    \item taking disjoint unions;
    \item adding false twins, when \(a>0\);
    \item adding true twins, when \(a=0\).
\end{enumerate}
\end{proposition}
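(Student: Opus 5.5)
I will verify the four closure properties one at a time. Each proof directly modifies or reuses a given $[a,b]$-difference labelling $p$. In every case the only thing to check is that the new labelling takes values in $\mathbb{N}$ and reproduces adjacency exactly via the condition $|p(u)-p(v)|\in[a,b]$.

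For item 1, let $H$ be an induced subgraph of $G$. The restriction $p|_{V(H)}$ is an $[a,b]$-difference labelling of $H$. Indeed, for $u,v\in V(H)$ we have $uv\in E(H)$ iff $uv\in E(G)$ iff $|p(u)-p(v)|\in[a,b]$.

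For item 3, suppose $a>0$ and add a vertex $v'$ that is a false twin of $v$. We set $p(v')=p(v)$. For any $w\neq v,v'$, the pair $v'w$ is an edge iff $vw$ is an edge iff $|p(v')-p(w)|\in[a,b]$. For the pair $v,v'$ itself, the difference is $0\notin[a,b]$ because $a>0$, and indeed $vv'$ is not an edge, as required for false twins. Item 4 is identical with $a=0$. There the difference $0$ lies in $[0,b]$, so $vv'$ is an edge, which is exactly what true twins require. The converse direction is recorded in Observation~\ref{obs:same_label_twins}.

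Item 2 is the only step needing a small construction. Take labellings $p_1$ of $G_1$ and $p_2$ of $G_2$, and shift the second far away. Let $M=\max_{v\in V(G_1)}p_1(v)$, and set $p(v)=p_1(v)$ on $G_1$ and $p(v)=p_2(v)+M+b$ on $G_2$. Differences within each part are unchanged. For $u\in V(G_1)$ and $w\in V(G_2)$ we have
\[
p(w)-p(u)\ge 1+M+b-M>b,
\]
so no cross pair is an edge. This matches the disjoint union, and all labels stay positive. Thus item 2 works whenever $b$ is finite, which holds because $a,b$ are integers. I do not expect any step to be a real obstacle. The only places needing care are the sign of $a$ in the twin cases and choosing the shift large enough in item 2.
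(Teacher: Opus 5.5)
Your proof is correct and follows essentially the same route as the paper. You restrict the labelling for induced subgraphs, translate one labelling past the other for disjoint unions (your explicit shift by $M+b$ makes the paper's ``sufficiently far'' precise), and give a new twin the same label as its partner, so that the difference $0$ lies outside or inside $[a,b]$ according to whether $a>0$ or $a=0$.
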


\begin{proof}
An induced subgraph inherits the restriction of the original
labelling. For the disjoint union of two graphs, translate all labels of one component sufficiently far to the right so that every
difference between vertices of distinct components is larger than
\(b\). Finally, a twin of a vertex \(u\) can be assigned the same label as \(u\). The two vertices then have the same neighbours outside the pair. Their mutual difference is zero, so they are nonadjacent when \(a>0\) and adjacent when \(a=0\).
\end{proof}

Consequently, if two graphs belong to
\(\mathrm{Cat\text{-}PCG}(a+2,b+2)\) for the same pair \(a,b\),
then their disjoint union also belongs to
\(\mathrm{Cat\text{-}PCG}(a+2,b+2)\).


We finish this section with a structural consequence of the difference-labelling characterization.  We will see in Section~\ref{sec:graph_classes}  that two-dimensional grids are Cat-PCGs.
Since square grids have unbounded treewidth and unbounded clique-width
\cite{GOLUMBIC2000}, it follows that Cat-PCG has unbounded
treewidth and unbounded clique-width. Nevertheless, in the next theorem we show that \(\cat\) has bounded twin-width.

We use the following equivalent formulation of twin-width~\cite{Bonnet2022}. 
A contraction sequence of a graph \(G\) is a sequence of partitions of \(V(G)\), starting from the partition into singletons and ending with the single part \(V(G)\), where each step is obtained by merging two parts. 

For a partition \(\mathcal P\), two distinct parts \(X,Y\in\mathcal P\) are called \emph{mixed} if the bipartite graph between \(X\) and \(Y\) is neither empty nor complete. The \emph{width} of \(\mathcal P\) is the maximum, over all parts \(X\in\mathcal P\), of the number of parts mixed with \(X\). The \emph{width} of a contraction sequence is the maximum width of any partition appearing in the sequence. The \emph{twin-width} of \(G\) is the minimum width of a contraction sequence of \(G\).

\begin{theorem}\label{thm:tww_cat}
Every graph in \(\cat\) has twin-width at most \(10\).
\end{theorem}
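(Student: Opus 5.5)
By Theorem~\ref{thm:char}, it suffices to show that any graph $G$ with an $[a,b]$-difference labelling $p$ has a contraction sequence of width at most $10$. If $a=0$ the condition is $|p(u)-p(v)|\le b$, and otherwise adjacency is the difference of two such threshold conditions. So the plan is to contract along the integer line in the order given by $p$ and control mixedness using only the two thresholds $a$ and $b$.

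First, by Observation~\ref{obs:same_label_twins}, vertices with equal labels are twins. We contract each label class into a single part at the start. Contracting twins creates no mixed pairs, so after this step we may assume $p$ is injective.

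Next, order the vertices by label, $v_1,\dots,v_n$, and contract in rounds that keep every part an interval of labels. Here $X$ is mixed with $Y$ only if some pair $x\in X$, $y\in Y$ has $|p(x)-p(y)|\in[a,b]$ and some pair does not. For two label intervals $X=[x_1,x_2]$ and $Y=[y_1,y_2]$, the difference set is an interval $[y_1-x_2,\,y_2-x_1]$ (up to sign). The pair is mixed only if this interval straddles one of the boundaries $a-\tfrac12$ or $b+\tfrac12$.

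The naive sequence that merges consecutive singletons left to right produces one large part $[1,k]$, and this part is mixed with everything whose label lies within roughly $b$ of it. That can be unbounded, so this sequence fails. Instead, use the standard twin-width scheme for ``unit-interval-like'' orderings, namely a balanced hierarchical contraction. At each level, every part is an interval of length $L$ (in label space, after a suitable rescaling), and we merge adjacent parts pairwise to reach length $2L$. A part $X$ of length $L$ can only be mixed with parts $Y$ whose difference interval, of length at most $2L$, contains $a$ or $b$. For a fixed boundary and fixed $X$, these $Y$ lie in a label window of width about $2L$ on each side, which meets at most about $3$ parts of length $L$ per side and per boundary. With two boundaries and two sides this gives roughly $8$ parts. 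Performing the merges one pair at a time, the transitional part of length $2L$ adds a couple more, giving the bound $10$.

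The main obstacle is the counting step: pinning down exactly how many length-$L$ intervals can straddle each threshold relative to a fixed part. This must be done during the intermediate stage of a round, when parts of lengths $L$ and $2L$ coexist, and we must also handle empty label positions, since labels need not be contiguous. I would define parts by label ranges $[kL+1,(k+1)L]$ and allow empty ranges, so that the geometry is uniform. I would then bound, for each boundary $a$ or $b$ and each direction, the number of mixed partners by $\lceil (|X|+|Y|)/L\rceil+1$, and sum these bounds. If this count exceeds $10$, I would align rounds by merging from left to right so that at most one part per round has length $2L$.
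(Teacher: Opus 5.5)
Your plan coincides with the paper's: merge equal-label classes, then contract dyadically by pairing consecutive label intervals from left to right within each round. However, the only nontrivial step, the count that produces the constant $10$, is not carried out, and the count you sketch does not reach $10$.

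\textbf{Where the sketched count fails.} Your arithmetic is inconsistent: ``about $3$ parts per side and per boundary'' over the four side/boundary combinations gives $12$, not $8$. The bound you propose, $\lceil(|X|+|Y|)/L\rceil+1$ per zone, already equals $3$ when every part has length $L$. It therefore gives $12$ even at a completed level, and more for parts of length $2L$. The fallback cannot be implemented either. When merging left to right, after $k$ pairs of a round have been processed there are $k$ parts of length $2L$, so you cannot keep at most one such part per round. Moreover, every processed part, not just the transitional one, can attain the worst case.

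\textbf{What is missing.}

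\emph{Count the right zone.} Measure the zone that $Y$ must \emph{meet}, not a window that must \emph{contain} $Y$. For $X=[q,r]$ and a part $Y$ to its right, mixedness forces $Y$ to intersect $[q+a,r+a]$ or $[q+b,r+b]$. Symmetrically, a part on the left must meet $[q-a,r-a]$ or $[q-b,r-b]$. Each zone has size $|X|$, independent of $|Y|$.

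\emph{Counting parts in a zone.} A zone of $m$ consecutive integers meets at most $2$ interval parts of size at least $m$, and at most $3$ of size at least $m/2$.

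\emph{Use the left-to-right order.} Processing left to right keeps the parts of length $2L$ as a prefix. So a part of length $2L$ has only parts of length $2L$ on its left, giving at most $2$ per left zone. On its right it has parts of length at least $L$, giving at most $3$ per right zone. For a part of length $L$, every zone has size $L$ and meets at most $2$ parts. This gives the bound $3+3+2+2=10$.

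\emph{Empty label positions.} Parts of a contraction sequence are nonempty, so your ``empty ranges'' need justification, as in the paper. View $G$ as an induced subgraph of the graph on $[n]$, with $n$ a power of two and $ij\in E$ if and only if $|i-j|\in[a,b]$. Then use that twin-width does not increase under taking induced subgraphs.
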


\begin{proof}
Let \(G'\in\cat\). By Theorem~\ref{thm:char}, \(G'\) admits an
\([a,b]\)-difference labelling $p$.

If two vertices receive the same label, then they are twins by
Observation~\ref{obs:same_label_twins}. Merging vertices with the same label
does not create any mixed pair, since all vertices in such a class have the same
adjacency to every vertex outside the class. Therefore, it is enough to consider
the case where $p$ is injective.

By shifting all labels if necessary, assume that \(p(v)\geq 1\) for every
\(v\in V(G')\). Let \(n\) be a power of two such that
\[
n\geq \max\{p(v):v\in V(G')\}.
\]
Let \(G\) be the graph with vertex set \([n]\), where
\[
ij\in E(G)
\quad\Longleftrightarrow\quad
|i-j|\in [a,b].
\]
Then \(G'\) is an induced subgraph of \(G\). Since twin-width does not increase
when taking induced subgraphs, it is enough to give a contraction sequence for
\(G\) of width at most \(10\).

We use a contraction sequence that repeatedly merges consecutive pairs of
intervals. At every step, the current parts are intervals of \([n]\). Initially,
\[
\mathcal I=([1],[2],\ldots,[n]).
\]
In the first round, we merge consecutive pairs \([1]\) with \([2]\), \([3]\)
with \([4]\), and so on. In the next round, we merge consecutive pairs of the
new intervals, and we continue in this way until only one interval remains.

Thus, at the beginning of round \(t\), every interval has size \(2^{t-1}\).
During round \(t\), we process the intervals from left to right. Hence, at any
intermediate step of the round, the current interval partition consists of
intervals of size \(2^t\) in the processed prefix and intervals of size
\(2^{t-1}\) in the unprocessed suffix.

We show that every current interval has mixed degree at most \(10\). Consider
an intermediate partition
\[
\mathcal I^*=(I_1,\ldots,I_\ell),
\]
and fix an interval
\[
I_c=[q,r].
\]
Let \(\delta=r-q\). Hence
\[
|I_c|=\delta+1\in\{2^{t-1},2^t\}.
\]

First consider intervals to the right of \(I_c\). Let \(I_d\) be such an
interval. If all distances between vertices of \(I_c\) and vertices of \(I_d\)
are smaller than \(a\), then there is no edge between \(I_c\) and \(I_d\). If
all such distances lie in \([a,b]\), then all possible edges between \(I_c\) and
\(I_d\) are present. If all such distances are larger than \(b\), then again
there is no edge between \(I_c\) and \(I_d\).

Therefore \(I_c\) and \(I_d\) can be mixed only when \(I_d\) intersects one of
the two boundary zones
\[
[q+a,r+a]
\qquad\text{or}\qquad
[q+b,r+b].
\]
Each of these zones has size \(|I_c|\leq 2^t\), while every current interval has
size at least \(2^{t-1}\). Hence each boundary zone intersects at most three
intervals of \(\mathcal I^*\). Thus \(I_c\) has at most \(6\) mixed neighbours
to its right.

Now consider intervals to the left of \(I_c\). The same argument shows that
mixed neighbours to the left can occur only among intervals intersecting one of
the two boundary zones
\[
[q-a,r-a]
\qquad\text{or}\qquad
[q-b,r-b].
\]
We claim that each of these zones intersects at most two intervals to the left.
If \(|I_c|=2^{t-1}\), then the zone has size \(2^{t-1}\), and every current
interval has size at least \(2^{t-1}\). If \(|I_c|=2^t\), then \(I_c\) lies in
the processed prefix, and all intervals to its left also have size \(2^t\),
because contractions are performed from left to right. Thus, in either case,
each boundary zone contributes at most two intervals. Therefore \(I_c\) has at
most \(4\) mixed neighbours to its left.

Finally, every current interval has mixed degree at most $6+4=10$. Thus this contraction sequence has width at most \(10\). Hence every graph in
\(\cat\) has twin-width at most \(10\).
\end{proof}

\section{Connections with known graph classes}\label{sec:graph_classes}

\subsection{Graphs that are Cat-PCGs}\label{subsec:catPCGs}
Here we show that some simple class graphs are subsets of \cat. A graph $G$ is a \emph{unit interval graph} if there exists a family
$\{I_v : v\in V(G)\}$ of intervals of length one on the real line such that, for every two distinct vertices $u,v$,
$uv\in E(G)$ if and only if $I_u\cap I_v\neq\emptyset.$  

Notice first that  in \cite{brandsadt2008ptolemaic} it is proved that (unweighted) caterpillar leaf power graphs (i.e., the union of \catp{2}{b} for all integers $b$) are exactly unit interval graphs. So we have the following.

\begin{theorem}\cite{brandsadt2008ptolemaic}\label{theo:unit_inerval}
Every unit interval graph is in \cat. 
\end{theorem}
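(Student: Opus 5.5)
Via Theorem~\ref{thm:char}, it suffices to show that every unit interval graph $G$ admits a $[0,b]$-difference labelling for some integer $b\geq 0$. Then $G\in\catp{2}{b+2}\subseteq\cat$. This is also what the cited result of \cite{brandsadt2008ptolemaic} gives, but a direct argument is short.

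Fix a unit interval representation $\{I_v\}$ with $I_v=[x_v,x_v+1]$. Two distinct vertices are adjacent iff $|x_u-x_v|\le 1$. We now need to replace the real left endpoints by integers while preserving every comparison of a difference with the threshold $1$.

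\textbf{Step 1.} Perturb the representation so that no difference $|x_u-x_v|$ equals exactly $1$ and all $x_v$ are rational. Since $G$ is finite, each pair satisfies either $|x_u-x_v|\le 1$ or $|x_u-x_v|>1$. Let $\varepsilon>0$ be smaller than the minimum of $|x_u-x_v|-1$ over non-adjacent pairs. Rescale all endpoints by the factor $1/(1+\varepsilon/3)$. Adjacent pairs then have difference strictly below $1$, and non-adjacent pairs stay strictly above $1$, provided $\varepsilon$ is chosen small relative to the largest difference. Next, move each point by less than a small $\eta$ to a rational value. This preserves all strict inequalities.

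\textbf{Step 2.} Let $D$ be a common denominator of the rationals $x_v$ and of the threshold $1$. Set $p(v)=Dx_v+C$, with $C$ chosen so that all labels are positive integers, and set $b=D$. Then $|p(u)-p(v)|\le b$ iff $|x_u-x_v|\le 1$. This is exactly adjacency, and every difference is at least $0=a$. So $p$ is a $[0,b]$-difference labelling.

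The only delicate point is Step~1: we must make sure the perturbation keeps non-adjacent pairs strictly above the threshold and adjacent pairs at most at it. Choosing $\eta$ smaller than one quarter of the minimum gap $\min\{|\,|x_u-x_v|-1\,|\}$ (after the rescaling) handles this, since each difference changes by at most $2\eta$. Alternatively, one can invoke \cite{brandsadt2008ptolemaic} directly, as the statement indicates.
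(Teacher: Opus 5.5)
Your proof is correct, but it takes a different route from the paper. The paper gives no argument of its own: it simply invokes the result of \cite{brandsadt2008ptolemaic} that unweighted caterpillar leaf powers, i.e.\ the union of the classes $\catp{2}{b}$, are exactly the unit interval graphs. You instead prove the needed inclusion directly, using only the paper's own characterization. You make every comparison with the threshold strict by a multiplicative rescaling, move the left endpoints to rationals, and clear denominators to get a $[0,D]$-difference labelling. Theorem~\ref{thm:char} then places $G$ in $\catp{2}{D+2}$.

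One small point: your caveat ``provided $\varepsilon$ is chosen small relative to the largest difference'' is unnecessary. Because the rescaling is multiplicative, every non-adjacent pair ends at distance at least $(1+\varepsilon)/(1+\varepsilon/3)>1$ regardless.

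The cited result also gives the reverse inclusion. A $[0,b]$-difference labelling $p$ yields the closed unit intervals $[p(v)/b,\,p(v)/b+1]$. The paper relies on this direction later, when it asserts that graphs with a $[0,b]$-difference labelling are unit interval graphs in the proof that $K_4\mathbin{\dot\cup}C_6\notin\cat$. That direction is a one-line observation, so your self-contained argument loses nothing essential for the statement at hand.
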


\noindent
From Theorem~\ref{theo:unit_inerval} we have that paths are clearly \cat. 
We can extend this to caterpillars (note that in the next subsection, we show that not every tree is in \cat).

\begin{theorem}
Every caterpillar is in \cat.
\end{theorem}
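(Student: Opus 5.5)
The plan is to use Theorem~\ref{thm:char} and build, for any caterpillar $G$, an $[a,b]$-difference labelling with $a>0$. Let the spine of $G$ be the path $v_1,\dots,v_k$, and call every other vertex a pendant leaf attached to some $v_i$. A leaf attached to an endpoint $v_1$ or $v_k$ can be treated like any other pendant leaf. I take $b=a+1$ with $a$ large (say $a\ge 5$), and define the labelling as follows:
\[
p(v_i)=ia,\qquad p(\ell)=ia+(-1)^i b \quad\text{for every leaf } \ell \text{ attached to } v_i .
\]
I then shift all labels by a large constant so that they lie in $\mathbb{N}$. All leaves attached to the same $v_i$ get the same label. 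This is allowed because they are false twins and $a>0$, which is the mechanism of Observation~\ref{obs:same_label_twins} and Proposition~\ref{prop:close}.

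The verification is a case check on the difference $|p(x)-p(y)|$, with $d=j-i\ge 1$ throughout.
\begin{enumerate}
\item \emph{Spine--spine pairs.} The difference is $da$. It lies in $[a,a+1]$ exactly when $d=1$, provided $2a>a+1$.
\item \emph{A leaf of $v_i$ and its own spine vertex.} The difference is $b$, so they are adjacent.
\item \emph{A leaf of $v_i$ and $v_j$ with $j\neq i$.} Write $m=|j-i|$. The difference is $|ma \pm b|$. For $m=1$ it is $b+a$ or $b-a=1$, and both lie outside $[a,b]$. For $m=2$ it is at least $2a-b=a-1<a$ on the small side and at most\dots\ well above $b$ on the large side, so it is again outside. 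For $m\ge 3$ it is at least $3a-b=2a-1>b$.
\item \emph{Leaves of $v_i$ and of $v_j$.} When $d$ is even the two signs agree and the difference is $da\ge 2a>b$. When $d$ is odd the signs are opposite and the difference is $|da\pm 2b|$. For $d=1$ this is $2b\pm a$, that is, $a+2$ or $3a+2$, and both exceed $b$. For $d\ge 3$ the smallest value $da-2b$ is either below $a$ or above $b$; for instance $3a-2b=a-2<a$, and $5a-2b=3a-2>b$.
\end{enumerate}
So adjacency holds exactly along spine edges and leaf--spine attachments. Theorem~\ref{thm:char} then gives $G\in\catp{a+2}{a+3}$.

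The main obstacle, and the point I would check most carefully, is the odd-$d$ leaf--leaf case together with the $m=2$ leaf--spine case. These are where the difference can fall just below $a$, for example $a-1$ or $a-2$, and one must confirm it never lands inside $[a,a+1]$. The alternating sign $(-1)^i$ is what rescues leaves of consecutive spine vertices from being at distance exactly $a$. I expect these cases reduce to the inequalities $a-2\ge 0$ and $3a-2>a+1$, which is why $a\ge 5$ is a comfortable choice.

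Finally, if $G$ has at most two vertices, or is a star, the same labelling still applies. It would also suffice to invoke Theorem~\ref{theo:unit_inerval} or Proposition~\ref{prop:close} for such degenerate cases, so no separate argument is needed.
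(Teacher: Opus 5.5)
Your labelling is correct, and it is a genuinely different construction from the paper's.

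\textbf{The paper's construction.} It puts $s_i$ at $1+(i-1)M$ with $M=2k+1$. Every leaf of $s_i$ goes on the \emph{same} side, just before $s_{i+1}$, at offset $c_i=k-i+1$. The decreasing offsets make every leaf--leaf difference equal to $(j-i)(M+1)>M$, so no parity analysis is needed. The cost is that the interval $[k+1,2k+1]$ grows with the spine length. Each caterpillar therefore lands only in its own class $\catp{k+3}{2k+3}$, and the labels span a range of order $k^2$.

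\textbf{Your construction.} You space the spine by $a$ and put each leaf at distance exactly $b=a+1$ from its spine vertex. You alternate sides with $(-1)^i$, so leaves of consecutive spine vertices end up at difference $a+2$ or $3a+2$ rather than $a$. The price is the parity case split. The gain is uniformity: one interval of width one serves every caterpillar at once. So all caterpillars lie in a single class $\catp{a+2}{a+3}$, and by Proposition~\ref{prop:close} so do all caterpillar forests. Your labels also span only $O(k)$.

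\textbf{Small corrections to the write-up.}
\begin{enumerate}
\item In case~3 with $m=2$, the two values are exactly $a-1$ and $3a+1$; your sentence there is garbled.
\item In case~4, replace ``for instance'' by the bound $da-2b\ge 5a-2b=3a-2>a+1$ for all odd $d\ge 5$. The other value, $da+2b$, always exceeds $b$.
\item The binding inequality is not the pair you name. It is $2a-1>a+1$, from case~3 with $m=3$ (e.g.\ the leaf of $v_2$ against $v_5$, at difference $2a-1$). Hence $a\ge 3$ already suffices, which puts every caterpillar in $\catp{5}{6}$. For $a=2$ that pair would sit at difference $3=b$ and the labelling would fail.
\end{enumerate}
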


\begin{proof}
Let $T$ be a caterpillar.
The cases \(|V(T)|\leq 2\) are immediate, so assume $|V(T)| \geq 3$. Assume also that the
spine of \(T\) is the path $s_1,s_2,\ldots,s_k$ ordered from left to right. For every \(i\in[k]\), let \(L_i\) be the
set of leaves adjacent to \(s_i\). Set $M=2k+1$ and, for every \(i\in[k]\), define $c_i=k-i+1$. Thus, $c_1=k,\quad c_2=k-1,\quad\ldots,\quad c_k=1$. Define
$$
p(v)=
\begin{cases}
1+(i-1)M, \textrm{ if $v=s_i$ where $s_i$ is a spine vertex, }\\
1+iM-c_i, \textrm{ if $v=x$ where  \(x\in L_i\).}
\end{cases}
$$
\noindent
We prove that \(p\) is a \([k+1,M]\)-difference labelling of \(T\). First, consider the edges of \(T\). If \(s_i s_{i+1}\) is a spine
edge, then  $|p(s_{i+1})-p(s_i)|=M \in [k+1,M]$. Next, for any edge $(x, s_i)$, with $x \in L_i$, we have  $|p(x)-p(s_i)|=M-c_i \in [k+1,M]$ as $1 \leq c_i \leq k $.

We now show that no non-edge has label difference in this interval. If \(s_i\) and \(s_j\) are not consecutive spine vertices, then
$|p(s_i)-p(s_j)|=|i-j|M\geq 2M>M$. If \(x,y\in L_i\), then \(p(x)=p(y)\), and hence $|p(x)-p(y)|=0<k+1$. Next, let \(x\in L_i\), and consider a spine vertex \(s_j\neq s_i\).
If \(j=i+1\), then $|p(s_{i+1})-p(x)|=c_i\leq k$. If \(j\leq i-1\), then  $|p(x)-p(s_j)|=(i-j+1)M-c_i \geq 2M-k>M$. If \(j\geq i+2\), then $|p(s_j)-p(x)|=(j-i-1)M+c_i>M$. Finally, let \(x\in L_i\) and \(y\in L_j\), where \(i<j\). Then $p(y)-p(x)=(j-i)M+c_i-c_j$.
Since $c_i-c_j=j-i$, we obtain $p(y)-p(x)=(j-i)(M+1)>M$.

Thus \(p\) is a \([k+1,2k+1]\)-difference labelling of \(T\).
By Theorem~1, $T\in \mathrm{Cat\text{-}PCG}(k+3,2k+3)$.
\end{proof}

\noindent
Since $K_{1,3}$ is a caterpillar but is not a unit interval graph,
the class of caterpillar graphs is not contained in the class of
unit interval graphs.


\begin{theorem}\label{theo:cycles}
For every $n\ge 3$, the cycle $C_n$ is a \cat. More specifically,
$C_3\in\catp{2}{2} $, and for every $ n\ge4$,
 $C_n\in\catp{5}{7}$.
\end{theorem}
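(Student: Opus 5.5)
The case $C_3$ is immediate. Since $K_3$ consists of three pairwise true twins, the constant labelling $p\equiv 1$ is a $[0,0]$-difference labelling of $C_3$. Theorem~\ref{thm:char} then gives $C_3\in\catp{2}{2}$. For $n\geq 4$, Theorem~\ref{thm:char} reduces the claim to constructing a $[3,5]$-difference labelling of $C_n$. I would build it by placing the cycle on two interleaved ``tracks'' that go right and come back, joined by two turning vertices.

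\textbf{Two tracks.} The forward track uses labels $1,5,9,\ldots$, so every label is $\equiv 1 \pmod 4$. The return track uses labels $3,7,11,\ldots$, so every label is $\equiv 3\pmod 4$. Within a track, consecutive labels differ by $4\in[3,5]$, and non-consecutive labels differ by at least $8$, so each track induces a path. Between the tracks every difference is $\equiv 2\pmod 4$, so it lies in $\{2,6,10,\ldots\}$ and avoids $[3,5]$. Hence the two tracks together induce two disjoint paths, and it remains to close them into a cycle.

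\textbf{The two turns.} At the left end I add a vertex $s$ with label $-2$. Its differences to the forward track are $3,7,11,\ldots$ and to the return track are $5,9,\ldots$. So $s$ is adjacent exactly to the forward vertex $1$ and the return vertex $3$. At the right end, let $f=1+4k$ be the last forward label. The parity of $n$ decides the rest of the construction:
\begin{itemize}
\item If $n$ is odd, the return track is $3,7,\ldots,f-2$ (so it has $k$ vertices), and I add a turn vertex $t=f+3$. Then $t-f=3$ and $t-(f-2)=5$, while all other differences from $t$ are at least $7$. Also $t-s=f+5\geq 6$. This gives $n=(k+1)+k+2=2k+3$.
\item If $n$ is even, the return track is $3,7,\ldots,f+2$ (so it has $k+1$ vertices), and I take $t=f+5$. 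Then $t-f=5$ and $t-(f+2)=3$, while the other differences from $t$ are at least $7$. This gives $n=2k+4$.
\end{itemize}
In both cases $s$ and $t$ are non-adjacent, since $t-s\geq 6$.

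\textbf{Verification and normalisation.} By construction, every vertex has exactly two neighbours, and the adjacencies trace the single cycle
\[
s,\ 1,\ 5,\ \ldots,\ f,\ t,\ \text{(return track from right to left)},\ 3,\ s.
\]
The smallest cases to check are $n=4$ (labels $-2,1,3,6$) and $n=5$ (labels $-2,1,5,3,8$), both of which are quick to confirm. Finally, I shift every label by $+3$ so that the labels lie in $\mathbb N$, as in the proof of Theorem~\ref{thm:char}; shifting does not change any difference. Theorem~\ref{thm:char} then yields $C_n\in\catp{5}{7}$.

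The main obstacle is the pair of turns. The residues modulo $4$ prevent the two tracks from seeing each other, but the same residues also block any direct closing edge between them. The turning labels must therefore be chosen in the residue classes $2$ and $0$ modulo $4$, so that they hit exactly one end of each track. Handling the parity of $n$ by lengthening the return track by one vertex is the step that needs the most care, in particular for the small cases $n=4,5$.
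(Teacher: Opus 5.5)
Your proof is correct and is essentially the paper's proof. After your final shift by $+3$, your labelling coincides exactly with the paper's: two step-$4$ progressions $4,8,\ldots$ and $6,10,\ldots$, joined by turning labels $1$ and $2n+1$. Your residue-mod-$4$ check is the same verification the paper uses, spelled out a little more explicitly, including the small cases $n=4,5$.
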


\begin{proof}
The statement for $C_3$ is immediate, since $C_3$ is complete and thus
admits a $[0,0]$-difference labelling.

Let $n\ge4 $. By Theorem~\ref{thm:char}, it suffices to construct a
$[3,5]$-difference labelling of $C_n$.

Assume first that $n=2k$ is even. Let the vertices of $C_n$ be
$(v_0,v_1,\ldots,v_{2k-1})$ in cyclic order and define
$$
p(v_i)=
\begin{cases}
1, & i=0,\\
4i, & 1\le i\le k-1,\\
4k+1, & i=k,\\
4(2k-i)+2, & k+1\le i\le 2k-1.
\end{cases}
$$

Thus the labels around the cycle are:
$$
1,4,8,\ldots,4k-4,4k+1,4k-2,4k-6,\ldots,6.
$$

Consecutive vertices on the cycle have label difference $3$, $4$, or $5$,
and hence every edge of $C_n$ corresponds to a difference in $[3,5]$.

To prove the converse, observe that all labels except $1$ and $4k+1$
belong to one of the arithmetic progressions
$$
4,8,\ldots,4k-4,
\qquad
6,10,\ldots,4k-2,
$$
both having common difference $4$.
Therefore, for two distinct labels different from $1$ and $4k+1$, a
difference in $[3,5]$ can only be equal to $4$, which occurs precisely
between consecutive terms of one of the two progressions. By construction,
these pairs are exactly edges of the cycle.

Moreover, the only labels at distance in $[3,5]$ from $1$ are $4$ and
$6$, while the only labels at distance in $[3,5]$ from $4k+1$ are
$4k-4$ and $4k-2$. These are exactly the four neighbours of the
corresponding vertices on the cycle.

Hence
$$
v_iv_j\in E(C_n)
\quad\Longleftrightarrow\quad
|p(v_i)-p(v_j)|\in[3,5],
$$
and $p$ is a $[3,5]$-difference labelling of $C_n$.

The odd case $n=2k+1$ is identical. Define
$$
p(v_i)=
\begin{cases}
1, & i=0,\\
4i, & 1\le i\le k,\\
4k+3, & i=k+1,\\
4(2k+1-i)+2, & k+2\le i\le 2k.
\end{cases}
$$

The labels are
$$
1,4,8,\ldots,4k,4k+3,4k-2,4k-6,\ldots,6.
$$

Again, consecutive vertices have differences $3$, $4$, or $5$. As before,
all labels except $1$ and $4k+3$ belong to two arithmetic progressions of
difference $4$, so the only pairs with difference in $[3,5]$ are
consecutive terms of one of these progressions together with the pairs
$$
{1,4},\ {1,6},\ {4k+3,4k},\ {4k+3,4k-2}.
$$
These are precisely the edges of the cycle.

Therefore $p$ is a $[3,5]$-difference labelling of $C_n$, and hence
$C_n\in\catp{5}{7}$.
\end{proof}

Next, we show that (2-dimensional) grids are Cat-PCGs.  It is convenient to first consider a skewed version. The $n\times m$
\emph{skewed grid} is the graph with vertex set
$$
\bigcup_{i\in[n]}\{(i,n-i+1),(i,n-i+2),\ldots,(i,n-i+m)\},
$$
with the same adjacency rule:
$(i,j)$ and $(k,\ell)$ are adjacent if and only if
$|i-k|+|j-\ell|=1$.

Figure~\ref{fig:skewed-grid} gives a $[5,6]$-difference labelling of the
$4\times 5$ skewed grid. The next lemma generalizes this construction and
shows that every $n\times m$ skewed grid admits an
$[n+1,n+2]$-difference labelling.

\begin{figure}[hbt] 
\centering
\includegraphics[width=.5\textwidth]{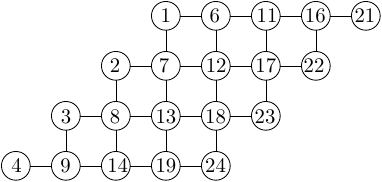}
\caption{%
    A $ [5,6] $-difference labelling of the $ 4\times 5 $ skewed grid. 
}
\label{fig:skewed-grid}
\end{figure}

\begin{lemma}\label{lem:skewed grid}
For any $n, m\in \mathbb{N}$, the $n \times m$ skewed grid is a \catp{n+3}{n+4}.
\end{lemma}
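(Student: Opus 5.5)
The plan is to write down an explicit $[n+1,n+2]$-difference labelling of the $n\times m$ skewed grid and then invoke Theorem~\ref{thm:char} with $a=n+1$, $b=n+2$, which gives membership in $\catp{n+3}{n+4}$. The natural candidate is linear in the coordinates, with the two coefficients chosen to make the horizontal and vertical edges realize the two allowed differences:
\[
p(i,j)=(n+1)\,i+(n+2)\,j .
\]
This labelling should also agree with the $[5,6]$-labelling of Figure~\ref{fig:skewed-grid} for $n=4$. All labels are positive integers because $i,j\ge 1$.

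To verify the labelling, I will reparametrize each vertex by its anti-diagonal index $s=i+j$ and its row $i$. Then
\[
p(i,j)=(n+2)s-i ,
\]
and the map $(i,j)\mapsto(s,i)$ is a bijection. For two vertices $(i,j)$ and $(i',j')$, write $\Delta s=s'-s$ and $\Delta i=i'-i$. Assume $p(i',j')\ge p(i,j)$, so that the label difference is
\[
(n+2)\Delta s-\Delta i, \qquad \Delta s\ge 0 .
\]
Since the rows range over $[n]$, we always have $|\Delta i|\le n-1$.

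The case analysis is on $\Delta s$.
\begin{itemize}
\item If $\Delta s=0$, the difference is $|\Delta i|\le n-1<n+1$. So two distinct vertices on the same anti-diagonal are never labelled as adjacent, which is correct: such vertices differ in both coordinates.
\item If $\Delta s=1$, the difference is $n+2-\Delta i$. This lies in $[n+1,n+2]$ exactly when $\Delta i\in\{0,1\}$. The case $\Delta i=0$ gives the pair $(i,j),(i,j+1)$, a horizontal edge. The case $\Delta i=1$ gives the pair $(i,j),(i+1,j)$, a vertical edge. Any other value of $\Delta i$ gives a non-adjacent pair, whose difference falls outside the interval.
\item If $\Delta s\ge 2$, the difference is at least $2(n+2)-(n-1)=n+5>n+2$, and the two vertices are indeed non-adjacent.
\end{itemize}
Conversely, every grid edge changes $i+j$ by exactly one and $i$ by $0$ or $1$ in the same direction. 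So every edge falls into the case $\Delta s=1$ with $\Delta i\in\{0,1\}$ and is correctly realized.

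The main (mild) obstacle is making sure no non-edge sneaks into the interval. This is exactly where the bound $|\Delta i|\le n-1$, coming from the row range $[n]$, is used. I would double-check the sign conventions in the $\Delta s=1$ case, so that $\Delta i=-1$ is correctly excluded (difference $n+3$). I would also confirm that the skewed vertex set imposes no further constraint beyond keeping the labels positive. With these checks done, $p$ is an $[n+1,n+2]$-difference labelling, and Theorem~\ref{thm:char} yields that the $n\times m$ skewed grid lies in $\catp{n+3}{n+4}$.
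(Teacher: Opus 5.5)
Your proof is correct and follows the paper's approach: you give an explicit linear labelling and verify it by splitting on the change in the anti-diagonal index $i+j$, using $|i'-i|\le n-1$ to rule out stray differences. The paper uses $p(i,j)=(i+j-n-1)(n+1)+i$, which is $(n+2)i+(n+1)j$ up to a shift, so edges within a row realize $n+1$ and edges within a column realize $n+2$; your $(n+1)i+(n+2)j=(n+2)(i+j)-i$ swaps these roles, so your aside that it matches Figure~\ref{fig:skewed-grid} is probably not literally true, though nothing depends on it.
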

\begin{proof}
Let $G$ be the $n\times m$ skewed grid. For a vertex $(i,j)$, set
$r=j-n+i$. Notice that $r\in[m]$. Define $p(i,j)=(r-1)(n+1)+i.$
Equivalently, $p(i,j)=(j-n+i-1)(n+1)+i$. We claim that $p$ is an $[n+1,n+2]$-difference labelling. Let
$(i,j)$ and $(i',j')$ be two distinct vertices, and set
$r=j-n+i$, $r'=j'-n+i'$.
Assume without loss of generality that $p(i',j')>p(i,j)$. Then
$$
p(i',j')-p(i,j)=(r'-r)(n+1)+(i'-i).
$$
Since $i,i'\in[n]$, we have
$$
-(n-1)\leq i'-i\leq n-1.
$$
\noindent
If $p(i',j')-p(i,j)=n+1$, then necessarily $r'-r=1$ and $i'-i=0$. Hence $i'=i$ and $r'=r+1$, which means $j'=j+1$. Thus the two vertices are adjacent in the same row.

\noindent
Otherwise if $ p(i',j')-p(i,j)=n+2$, then necessarily $r'-r=1$ and $i'-i=1$. Hence $i'=i+1$, and since
$j'=n-i'+r'=n-(i+1)+(r+1)=n-i+r=j$,
the two vertices are adjacent in the same column.

Conversely, if two vertices are adjacent in the same row, say $(i,j)$ and $(i,j+1)$, then their labels differ by $n+1$. If two vertices are adjacent in the same column, say $(i,j)$ and $(i+1,j)$, then their labels differ by $n+2$.

Therefore two vertices are adjacent exactly when their label difference belongs to
$[n+1,n+2]$. Hence $p$ is an $[n+1,n+2]$-difference labelling of $G$.
\end{proof}

Since $ n\times m $ grid is an induced subgraph of $ n\times (n+m) $ skewed grid, the next theorem holds.
\begin{theorem}
For every $n, m \in \mathbb{N}$, the $n \times m$ grid is a \catp{n+3}{n+4}.
\end{theorem}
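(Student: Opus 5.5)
The plan is to deduce the statement from Lemma~\ref{lem:skewed grid} together with closure under induced subgraphs (Proposition~\ref{prop:close}, item~1). The class $\catp{n+3}{n+4}$ corresponds, by Theorem~\ref{thm:char}, to graphs with an $[n+1,n+2]$-difference labelling. Restricting such a labelling to a subset of vertices yields a labelling of the induced subgraph with the same interval. It therefore suffices to exhibit the $n\times m$ grid as an induced subgraph of a skewed grid with the same number $n$ of rows. Keeping $n$ fixed is important, because the interval in the lemma depends only on the number of rows.

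Concretely, I would take the $n\times(n+m)$ skewed grid $S$. Its row $i$ consists of the vertices $(i,j)$ with $n-i+1\le j\le 2n-i+m$. I would map the grid vertex $(i,j)$, with $i\in[n]$ and $j\in[m]$, to the vertex $(i,j+n)$ of $S$. This vertex exists, since $n-i+1\le n+1\le j+n$ and $j+n\le n+m\le 2n-i+m$ because $i\le n$. The image is the set of vertices of $S$ with second coordinate in $[n+1,n+m]$, which is a full $n\times m$ rectangle.

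Both graphs use the same adjacency rule $|i-k|+|j-\ell|=1$, and a uniform shift of the second coordinate preserves it. Hence the subgraph of $S$ induced on the image is isomorphic to the $n\times m$ grid. Applying Lemma~\ref{lem:skewed grid} with $m$ replaced by $n+m$ gives $S\in\catp{n+3}{n+4}$. Restricting the $[n+1,n+2]$-difference labelling of $S$ constructed in that lemma then gives one for the grid, and Theorem~\ref{thm:char} concludes.

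There is no real obstacle. The only point needing care is to verify the column range, namely that every shifted column $j+n$ lies inside row $i$ of $S$ for every $i\in[n]$, so that the embedded rectangle is complete. This is exactly why the number of columns of the skewed grid is enlarged from $m$ to $n+m$ while the number of rows stays $n$.
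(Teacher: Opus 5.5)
Your proposal is correct and matches the paper's argument: the paper also embeds the $n\times m$ grid as an induced subgraph of the $n\times(n+m)$ skewed grid and applies the skewed-grid lemma together with heredity. You additionally write out the coordinate check for $(i,j)\mapsto(i,j+n)$, which the paper leaves implicit, and you apply heredity to the restricted $[n+1,n+2]$-difference labelling rather than to the caterpillar itself, which is the right way to make that step rigorous.
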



\begin{figure}[p]
\centering

\begin{tabular*}{\textwidth}{@{\extracolsep{\fill}}ccccc@{}}
\gcell{\(K_1\)}{\onegraph{1}}{}{[0,0]}
&
\gcell{\(K_2\)}{\twograph{1}{2}}{\ed{1}{2}}{[1,1]}
&
\gcell{\(P_3\)}{\threegraph{1}{2}{3}}{\ed{1}{2}\ed{2}{3}}{[1,1]}
&
\gcell{\(K_3\)}{\threegraph{1}{2}{3}}{\ed{1}{2}\ed{1}{3}\ed{2}{3}}{[1,2]}
&
\medskip
\\[2ex]

\gcell{\(K_{1,3}\)}{\fourgraph{1}{1}{1}{2}}{\ed{1}{4}\ed{2}{4}\ed{3}{4}}{[1,1]}
&
\gcell{\(P_4\)}{\fourgraph{1}{4}{2}{3}}{\ed{1}{2}\ed{2}{3}\ed{3}{4}}{[2,3]}
&
\gcell{paw}{\fourgraph{1}{3}{3}{2}}{\ed{1}{2}\ed{1}{3}\ed{2}{3}\ed{3}{4}}{[0,1]}
&
\gcell{\(C_4\)}{\fourgraph{1}{2}{1}{2}}{\ed{1}{2}\ed{2}{3}\ed{3}{4}\ed{4}{1}}{[1,1]}
&
\gcell{\(K_4-e\)}{\fourgraph{1}{2}{3}{2}}{\ed{1}{2}\ed{1}{3}\ed{1}{4}\ed{2}{3}\ed{3}{4}}{[1,2]}
\medskip
\\[2ex]

\gcell{\(K_4\)}{\fourgraph{1}{1}{1}{1}}{\ed{1}{2}\ed{1}{3}\ed{1}{4}\ed{2}{3}\ed{2}{4}\ed{3}{4}}{[0,0]}
&
\gcell{\(G_1\)}{\fivegraph{1}{1}{1}{1}{2}}{\ed{1}{5}\ed{2}{5}\ed{3}{5}\ed{4}{5}}{[1,1]}
&
\gcell{\(G_2\)}{\fivegraph{1}{4}{4}{3}{2}}{\ed{1}{5}\ed{2}{4}\ed{3}{4}\ed{4}{5}}{[1,1]}
&
\gcell{\(G_3\)}{\fivegraph{2}{3}{4}{5}{1}}{\ed{1}{2}\ed{1}{5}\ed{2}{3}\ed{3}{4}}{[1,1]}
&
\gcell{\(G_4\)}{\fivegraph{1}{1}{4}{5}{3}}{\ed{1}{5}\ed{2}{5}\ed{3}{4}\ed{3}{5}\ed{4}{5}}{[1,2]}
\medskip
\\[2ex]

\gcell{\(G_5\)}{\fivegraph{1}{4}{7}{3}{5}}{\ed{1}{2}\ed{1}{3}\ed{1}{5}\ed{2}{3}\ed{3}{4}}{[3,6]}
&
\gcell{\(G_6\)}{\fivegraph{1}{4}{4}{3}{2}}{\ed{1}{5}\ed{2}{3}\ed{2}{4}\ed{3}{4}\ed{4}{5}}{[0,1]}
&
\gcell{\(G_7\)}{\fivegraph{1}{2}{4}{3}{3}}{\ed{1}{2}\ed{2}{4}\ed{2}{5}\ed{3}{4}\ed{3}{5}}{[1,1]}
&
\gcell{\(C_5\)}{\fivegraph{1}{3}{5}{2}{4}}{\ed{1}{2}\ed{2}{3}\ed{3}{4}\ed{4}{5}\ed{5}{1}}{[2,3]}
&
\gcell{\(G_9\)}{\fivegraph{1}{3}{4}{5}{5}}{\ed{1}{2}\ed{2}{3}\ed{2}{4}\ed{2}{5}\ed{3}{4}\ed{3}{5}}{[1,2]}
\medskip
\\[2ex]

\gcell{\(G_{10}\)}{\fivegraph{1}{2}{4}{3}{3}}{\ed{1}{2}\ed{2}{4}\ed{2}{5}\ed{3}{4}\ed{3}{5}\ed{4}{5}}{[0,1]}
&
\gcell{\(G_{11}\)}{\fivegraph{1}{1}{3}{3}{2}}{\ed{1}{2}\ed{1}{5}\ed{2}{5}\ed{3}{4}\ed{3}{5}\ed{4}{5}}{[0,1]}
&
\gcell{\(G_{12}\)}{\fivegraph{1}{4}{2}{5}{3}}{\ed{1}{2}\ed{1}{4}\ed{1}{5}\ed{2}{3}\ed{3}{4}\ed{4}{5}}{[2,4]}
&
\gcell{\(K_{2,3}\)}{\fivegraph{1}{1}{2}{2}{2}}{\ed{1}{3}\ed{1}{4}\ed{1}{5}\ed{2}{3}\ed{2}{4}\ed{2}{5}}{[1,1]}
&
\gcell{\(G_{14}\)}{\fivegraph{1}{3}{3}{3}{2}}{\ed{1}{5}\ed{2}{3}\ed{2}{4}\ed{2}{5}\ed{3}{4}\ed{3}{5}\ed{4}{5}}{[0,1]}
\medskip
\\[2ex]
\gcell{\(G_{15}\)}{\fivegraph{1}{1}{1}{2}{3}}{\ed{1}{4}\ed{1}{5}\ed{2}{4}\ed{2}{5}\ed{3}{4}\ed{3}{5}\ed{4}{5}}{[1,2]}
&
\gcell{\(G_{16}\)}{\fivegraph{1}{3}{5}{4}{2}}{\ed{1}{2}\ed{1}{5}\ed{2}{3}\ed{2}{4}\ed{2}{5}\ed{3}{4}\ed{4}{5}}{[1,2]}
&
\gcell{\(G_{17}\)}{\fivegraph{1}{1}{3}{4}{5}}{\ed{1}{3}\ed{1}{4}\ed{1}{5}\ed{2}{3}\ed{2}{4}\ed{2}{5}\ed{3}{5}}{[2,4]}
&
\gcell{\(G_{18}\)}{\fivegraph{1}{1}{3}{2}{2}}{\ed{1}{2}\ed{1}{4}\ed{1}{5}\ed{2}{4}\ed{2}{5}\ed{3}{4}\ed{3}{5}\ed{4}{5}}{[0,1]}
&
\gcell{\(G_{19}\)}{\fivegraph{1}{2}{1}{2}{3}}{\ed{1}{2}\ed{1}{4}\ed{1}{5}\ed{2}{3}\ed{2}{5}\ed{3}{4}\ed{3}{5}\ed{4}{5}}{[1,2]}
\medskip
\\[2ex]
\gcell{\(K_5-e\)}{\fivegraph{1}{2}{3}{2}{2}}{\ed{1}{2}\ed{1}{4}\ed{1}{5}\ed{2}{3}\ed{2}{4}\ed{2}{5}\ed{3}{4}\ed{3}{5}\ed{4}{5}}{[0,1]}
&
\gcell{\(K_5\)}{\fivegraph{1}{1}{1}{1}{1}}{\ed{1}{2}\ed{1}{3}\ed{1}{4}\ed{1}{5}\ed{2}{3}\ed{2}{4}\ed{2}{5}\ed{3}{4}\ed{3}{5}\ed{4}{5}}{[0,0]}
\medskip
\\[2ex]
\end{tabular*}

\caption{All non-isomorphic connected graphs on at most five vertices. The numbers inside the vertices give an \([a,b]\)-difference labelling, and the corresponding interval \(I=[a,b]\) is shown below each graph.}
\label{fig:connected_graphs_at_most_five}
\end{figure}

We now look at the class \(\cat\) on small graphs. We show that all the graphs with at most $5$ vertices are \cat. We will use the following simple closure property to reduce the disconnected small graphs to the connected ones.

\begin{corollary}\label{cor:add_small_components}
Let $0\leq a\leq b$ with $b\geq 2$, and let
$G\in\catp{a}{b}$. Then
$G\dot\cup K_1$, $G\dot\cup K_2$ and $G\dot\cup \overline{K_2}$
also belong to $\catp{a}{b}$.
\end{corollary}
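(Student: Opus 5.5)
The plan is to pass to difference labellings via Theorem~\ref{thm:char}, add the small component far away using Proposition~\ref{prop:close}, and then return to a caterpillar witness. The only subtlety is that \(a\) may be smaller than \(2\), so Theorem~\ref{thm:char} cannot be applied directly with the pair \(a-2,b-2\).

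\textbf{Step 1 (shifting the interval).} Let \(G=\mathrm{PCG}(T,[a,b])\) with \(T\) an unweighted caterpillar. Since \(d_T(u,v)=|\pi_T(u)-\pi_T(v)|+2\geq 2\) for distinct leaves, the interval \([a,b]\) selects exactly the same pairs as \([a^*,b]\), where \(a^*=\max\{a,2\}\). This interval is nonempty because \(b\geq 2\). Set \(a'=a^*-2\) and \(b'=b-2\), so that \(0\leq a'\leq b'\). Then \(\pi_T\) is an \([a',b']\)-difference labelling of \(G\), exactly as in the first half of the proof of Theorem~\ref{thm:char}. Conversely, any graph with an \([a',b']\)-difference labelling lies in \(\catp{a^*}{b}\) by Theorem~\ref{thm:char}. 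By the same observation about distances being at least \(2\), the caterpillar constructed there is also a witness with interval \([a,b]\). Hence it suffices to show that \(G\dot\cup H\) admits an \([a',b']\)-difference labelling for each \(H\in\{K_1,K_2,\overline{K_2}\}\).

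\textbf{Step 2 (labelling the small graphs).} Each of the three small graphs admits an \([a',b']\)-difference labelling:
\begin{itemize}
\item for \(K_1\), any single label works;
\item for \(K_2\), use the labels \(1\) and \(1+a'\), whose difference \(a'\) lies in \([a',b']\);
\item for \(\overline{K_2}\), use the labels \(1\) and \(b'+2\), whose difference \(b'+1\) exceeds \(b'\).
\end{itemize}
When \(a'=0\), the graph \(K_2\) gets two equal labels. This is consistent with Observation~\ref{obs:same_label_twins}, since the two vertices are then true twins.

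\textbf{Step 3 (combining).} By the disjoint-union closure in Proposition~\ref{prop:close}, translate the labels of \(H\) so that every difference between a vertex of \(G\) and a vertex of \(H\) exceeds \(b'\). This gives an \([a',b']\)-difference labelling of \(G\dot\cup H\), and Step~1 then places \(G\dot\cup H\) in \(\catp{a}{b}\).

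I expect the only real obstacle to be the bookkeeping in Step~1: the target class is \(\catp{a}{b}\) itself rather than \(\catp{a'+2}{b'+2}\). This is resolved by noting that leaf distances in a caterpillar are never below \(2\). The hypothesis \(b\geq 2\) is exactly what guarantees that the shifted interval \([a',b']\) is nonempty, so the construction for \(K_2\) is available.
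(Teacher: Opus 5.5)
Your proposal is correct and follows the paper's proof essentially step for step: reduce to $a\geq 2$ using that leaf distances in a caterpillar are at least $2$, give $[a-2,b-2]$-difference labellings of $K_1$, $K_2$ and $\overline{K_2}$, and combine them via Proposition~\ref{prop:close} and Theorem~\ref{thm:char}. The only cosmetic difference is that your labels $1$ and $b'+2$ for $\overline{K_2}$ work in every case, whereas the paper splits into the cases $A>0$ (labels $1,1$) and $A=0$ (labels $1,B+2$).
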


\begin{proof}
Since the distance between any two distinct leaves of an unweighted tree is at least $2$, if $a<2$ then
$\catp{a}{b}=\catp{2}{b}$. Thus, assume $a\geq 2$ and set $A=a-2$ and $B=b-2$.
By Theorem~\ref{thm:char}, $G$ admits an $[A,B]$-difference labelling. Clearly $K_1$ admits an $[A,B]$-difference labelling. If $A>0$, label the vertices of $K_2$ by $1,1+A$ and those of $\overline{K_2}$ by $1,1$. If $A=0$, use respectively the labels $1,1$ and $1,B+2$.

Thus $K_1$, $K_2$, and $\overline{K_2}$ admit an $[A,B]$-difference labelling. The claim follows from Proposition~\ref{prop:close} and Theorem~\ref{thm:char}.
\end{proof}

\begin{theorem}\label{theo:five_vertices}
Every graph on at most five vertices is in \cat.
\end{theorem}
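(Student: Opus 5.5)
The plan is to split the graphs on at most five vertices into connected and disconnected ones. Figure~\ref{fig:connected_graphs_at_most_five} handles the connected case, and Corollary~\ref{cor:add_small_components} together with Proposition~\ref{prop:close} handles the disconnected case.

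\emph{Connected graphs.} Every connected graph on at most five vertices appears in Figure~\ref{fig:connected_graphs_at_most_five}, together with an $[a,b]$-difference labelling. For each of them I will check the labelling directly: for every pair of vertices, the absolute difference of their labels lies in the indicated interval exactly when the pair is an edge. This is a finite verification over at most ten pairs per graph, and Theorem~\ref{thm:char} then gives membership in $\catp{a+2}{b+2}$. One also has to confirm that the list is complete, namely $1,1,2,6,21$ connected graphs on $1,\dots,5$ vertices, which matches the known counts. I regard this enumeration together with the label checking as the main, though routine, burden.

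\emph{Disconnected graphs.} A disconnected graph $G$ on $n\le 5$ vertices has a component $H$ with at least $3$ vertices, or else all its components have at most $2$ vertices.

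\begin{itemize}
\item \textbf{Case 1: a component $H$ with $|V(H)|\ge 3$.} The remaining components have at most $2$ vertices in total. So $G$ is one of $H\dot\cup K_1$, $H\dot\cup K_2$, $H\dot\cup \overline{K_2}$, or, when $|V(H)|\ge 4$, $H\dot\cup K_1$. In the two-vertex case the remainder is $K_2$ or $K_1\dot\cup K_1=\overline{K_2}$. Since $H$ is connected, it lies in $\catp{a'+2}{b'+2}$ by the previous step, and here $b'+2\ge 2$. Corollary~\ref{cor:add_small_components} then puts $G$ in the same class.
\item \textbf{Case 2: all components have at most $2$ vertices.} Then $G$ is a disjoint union of copies of $K_1$ and $K_2$. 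I will label it directly with a $[1,1]$-difference labelling: place each component far apart, give the two vertices of a $K_2$ labels $t,t+1$, and separate components by gaps of at least $2$. Alternatively, start from $K_1$ or $K_2$ in $\catp{3}{3}$ and apply the corollary repeatedly. This uses the fact that $\catp{a}{b}$ with $b\ge 2$ is closed under each added small component, so the argument iterates.
\end{itemize}

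The only subtlety is that the corollary requires $b\ge 2$. This always holds, since every witness interval can be taken to be $[a+2,b+2]$ with $b+2\ge 2$. I therefore expect no real obstacle beyond carefully verifying the figure's labellings. The most error-prone instances are the few with larger intervals, such as $G_5$ with $[3,6]$ and $G_{12}$, $G_{17}$ with $[2,4]$, which I would check pair by pair.
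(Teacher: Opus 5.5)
Your proposal is correct and is essentially the paper's own proof: connected graphs are handled by the explicit labellings in Figure~\ref{fig:connected_graphs_at_most_five}, disconnected graphs with a component on at least three vertices by Corollary~\ref{cor:add_small_components}, and disjoint unions of copies of $K_1$ and $K_2$ by a $[1,1]$-difference labelling. One small point for your pair-by-pair check: for $P_4$ and the paw, the labels in the figure realise the drawn graph only up to isomorphism (for instance, the labels $1,4,2,3$ with interval $[2,3]$ give the path $v_4v_1v_2v_3$ rather than $v_1v_2v_3v_4$), which is all that is needed.
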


\begin{proof}
The connected graphs on at most five vertices are shown in Figure~\ref{fig:connected_graphs_at_most_five},
together with explicit difference labellings. Let $G$ be disconnected. If $G$ has a connected component $H$ with at least three vertices, then $G-H$ has at most two vertices. Hence $G$ is obtained from $H$ by adding $K_1$, $K_2$, or
$\overline{K_2}$, and the result follows from
Corollary~\ref{cor:add_small_components}. Otherwise, every component of $G$ is either $K_1$ or $K_2$. Since both admit a $[1,1]$-difference labelling, repeated application of Proposition~\ref{prop:close} shows that $G$ also admits a $[1,1]$-difference labelling. Hence $G\in\cat$ by
Theorem~\ref{thm:char}.
\end{proof}

\subsection{Graphs that are not Cat-PCGs}\label{subsec:not_catPCGS}
In this section we prove that the graphs shown in Figure~\ref{fig:graphs-section4} are not Cat-PCGs.  These examples show, in particular, that not every tree is a \cat, that the positive result for two-dimensional grids does not extend to higher-dimensional grids, and that the smallest graphs that are not in \cat~have six vertices.

\begin{figure}[!ht]
\centering

\begin{subfigure}[b]{0.45\textwidth}
\centering
\begin{minipage}[c][4.3cm][c]{\linewidth}
\centering
\begin{tikzpicture}[scale=0.75, every node/.style={circle, draw, inner sep=1.5pt}]
  \node (r) at (0,0) {$r$};

  \node (u1) at (90:1.5) {$u_1$};
  \node (u2) at (18:1.5) {$u_2$};
  \node (u3) at (-54:1.5) {$u_3$};
  \node (u4) at (-126:1.5) {$u_4$};
  \node (u5) at (162:1.5) {$u_5$};

  \node (l1) at (90:3) {$\ell_1$};
  \node (l2) at (18:3) {$\ell_2$};
  \node (l3) at (-54:3) {$\ell_3$};
  \node (l4) at (-126:3) {$\ell_4$};
  \node (l5) at (162:3) {$\ell_5$};

  \draw (r) -- (u1) -- (l1);
  \draw (r) -- (u2) -- (l2);
  \draw (r) -- (u3) -- (l3);
  \draw (r) -- (u4) -- (l4);
  \draw (r) -- (u5) -- (l5);
\end{tikzpicture}
\end{minipage}
\caption{$S_{2,2,2,2,2}$}
\end{subfigure}
\hfill
\begin{subfigure}[b]{0.45\textwidth}
\centering
\begin{minipage}[c][4.3cm][c]{\linewidth}
\centering
\begin{tikzpicture}[scale=1,
    every node/.style={circle, draw, inner sep=1.2pt, minimum size=5mm}]
    
    \node (a1) at (0,0) {$000$};
    \node (a2) at (1.4,0) {$100$};
    \node (a3) at (1.4,1.4) {$110$};
    \node (a4) at (0,1.4) {$010$};

    \node (b1) at (0.6,0.6) {$001$};
    \node (b2) at (2.0,0.6) {$101$};
    \node (b3) at (2.0,2.0) {$111$};
    \node (b4) at (0.6,2.0) {$011$};

    \draw (a1)--(a2)--(a3)--(a4)--(a1);
    \draw (b1)--(b2)--(b3)--(b4)--(b1);
    \draw (a1)--(b1);
    \draw (a2)--(b2);
    \draw (a3)--(b3);
    \draw (a4)--(b4);
\end{tikzpicture}
\end{minipage}
\caption{$Q_3$}
\end{subfigure}

\vspace{1em}

\begin{subfigure}[b]{0.45\textwidth}
\centering
\begin{minipage}[c][4.3cm][c]{\linewidth}
\centering
\begin{tikzpicture}[scale=1,
    every node/.style={circle, draw, inner sep=1.2pt, minimum size=5mm}]
    
    \node (u) at (0,0) {$u$};
    \node (v1) at (90:1.5) {$v_1$};
    \node (v2) at (18:1.5) {$v_2$};
    \node (v3) at (-54:1.5) {$v_3$};
    \node (v4) at (-126:1.5) {$v_4$};
    \node (v5) at (162:1.5) {$v_5$};

    \draw (v1)--(v2)--(v3)--(v4)--(v5)--(v1);
    \draw (u)--(v1);
    \draw (u)--(v2);
    \draw (u)--(v3);
    \draw (u)--(v4);
    \draw (u)--(v5);
\end{tikzpicture}
\end{minipage}
\caption{$W_6$}
\end{subfigure}
\hfill
\begin{subfigure}[b]{0.45\textwidth}
\centering
\begin{minipage}[c][4.3cm][c]{\linewidth}
\centering
\begin{tikzpicture}[
    scale=1,
    every node/.style={circle, draw, inner sep=1.2pt, minimum size=5.5mm}
]
\node (w3) at (0,2) {$w_3$};
\node (v1) at (-1,1) {$v_1$};
\node (v2) at (1,1) {$v_2$};
\node (w2) at (-2,0) {$w_2$};
\node (v3) at (0,0) {$v_3$};
\node (w1) at (2,0) {$w_1$};

\draw (w3) -- (v1);
\draw (w3) -- (v2);
\draw (v1) -- (v2);
\draw (v1) -- (w2);
\draw (v1) -- (v3);
\draw (v2) -- (v3);
\draw (v2) -- (w1);
\draw (w2) -- (v3);
\draw (v3) -- (w1);
\end{tikzpicture}
\end{minipage}
\caption{$\overline{\mathrm{sunlet}_3}$}
\end{subfigure}
\caption{Graphs discussed in Section~4. The subdivided star $S_{2,2,2,2,2}$ is shown in (a). The graphs in (b), (c), and (d) are the cube graph $Q_3$, the wheel $W_6$, and the complement of $\mathrm{sunlet}_3$.}
\label{fig:graphs-section4}
\end{figure}

In the following proofs we will use two types of obstructions. 

\begin{lemma}[Betweenness obstruction]\label{lem:betweenness_obstruction}
Let $G$ have an $[a,b]$-difference labelling $p$, and let
$x,u,v,w\in V(G)$ be distinct vertices such that $xu,xw\in E(G)$. If either
\[
p(x)\leq p(u)<p(v)<p(w),
\]
or
\[
p(u)<p(v)<p(w)\leq p(x),
\]
then \(xv\in E(G)\).
\end{lemma}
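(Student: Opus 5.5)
The argument is a monotonicity statement about distances on the integer line. It only uses that the adjacency condition is membership of $|p(\cdot)-p(x)|$ in an interval, which is convex. I would treat the first case, $p(x)\leq p(u)<p(v)<p(w)$, directly and obtain the second by symmetry.

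In the first case all three vertices $u,v,w$ have labels at least $p(x)$. Hence
\[
|p(u)-p(x)|=p(u)-p(x),\quad |p(v)-p(x)|=p(v)-p(x),\quad |p(w)-p(x)|=p(w)-p(x).
\]
Subtracting the common quantity $p(x)$ from $p(u)<p(v)<p(w)$ gives
\[
|p(u)-p(x)|<|p(v)-p(x)|<|p(w)-p(x)|.
\]
Since $xu,xw\in E(G)$, Definition~\ref{def:ab-labelling} yields $a\leq p(u)-p(x)$ and $p(w)-p(x)\leq b$. Therefore $a< p(v)-p(x)< b$, so $|p(v)-p(x)|\in[a,b]$. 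Since $v\neq x$, the definition of the labelling gives $xv\in E(G)$.

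For the second case, $p(u)<p(v)<p(w)\leq p(x)$, I would either repeat the computation with $|p(y)-p(x)|=p(x)-p(y)$ for $y\in\{u,v,w\}$, or apply the first case to the reflected labelling $p'(y)=C-p(y)$. Here $C$ is a large constant chosen so that $p'$ still takes values in $\mathbb{N}$; $p'$ preserves all absolute differences, so it is again an $[a,b]$-difference labelling. Under $p'$ the order reverses to $p'(x)\leq p'(w)<p'(v)<p'(u)$, which is the first case with the roles of $u$ and $w$ exchanged, and the hypotheses $xu,xw\in E(G)$ are symmetric in $u$ and $w$.

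There is no real obstacle. The one point needing care is that the non-strict inequality $p(x)\le p(u)$ still leaves every label on the same side of $p(x)$, so the absolute values open up consistently. The case $p(u)=p(x)$ needs no separate treatment: it forces $a=0$ because $xu$ is an edge, and the same computation still goes through.
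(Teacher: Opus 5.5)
Your proof is correct and takes the same route as the paper. The paper also sandwiches $p(v)-p(x)$ between $p(u)-p(x)\geq a$ and $p(w)-p(x)\leq b$ in the first case, and gets the second case by reversing the line, which is what your reflected labelling $C-p$ makes precise.
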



\begin{proof}
We prove the first case; the second follows by reversing the order of the line. Since \(xu,xw\in E(G)\), we have
$a\leq p(u)-p(x)$  and $p(w)-p(x)\leq b.$
Moreover, $p(u)-p(x)\leq p(v)-p(x)\leq p(w)-p(x).$
Thus
$a\leq p(v)-p(x)\leq b,$ and hence \(xv\in E(G)\).
\end{proof}

From this lemma we have the following straightforward corollary that will be used to simplify later proofs.
\begin{corollary}\label{cor:corollary_betweeness}
Let \(p(u)<p(v)<p(w)\). If a vertex \(x\) is adjacent to \(u\)
and \(w\), but not to \(v\), then
\[
p(u)<p(x)<p(w).
\]
\end{corollary}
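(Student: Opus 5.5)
The plan is to argue by contraposition, using Lemma~\ref{lem:betweenness_obstruction} directly. Assume \(p(u)<p(v)<p(w)\), that \(x\) is adjacent to \(u\) and \(w\), and that \(x\) is not adjacent to \(v\). We must then rule out the two placements \(p(x)\leq p(u)\) and \(p(x)\geq p(w)\).

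First take \(p(x)\leq p(u)\). Then \(p(x)\leq p(u)<p(v)<p(w)\), which is exactly the first configuration of Lemma~\ref{lem:betweenness_obstruction}. The lemma forces \(xv\in E(G)\), a contradiction.

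Next take \(p(x)\geq p(w)\). Then \(p(u)<p(v)<p(w)\leq p(x)\), which is the second configuration of Lemma~\ref{lem:betweenness_obstruction}. Again the lemma forces \(xv\in E(G)\), a contradiction.

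Since both placements are excluded, \(p(u)<p(x)<p(w)\), which is the claim. The only step needing care is checking that the lemma's hypotheses hold. The vertices \(x,u,v,w\) are distinct: \(x\ne v\) because they are nonadjacent as distinct vertices in the statement's intended setting, and \(x\ne u,w\) because \(x\) is adjacent to them. We also need the strict and non-strict inequalities to match the lemma's form, and they do. No real obstacle is expected.
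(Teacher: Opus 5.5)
Your proof is correct and is the intended argument: the paper gives no separate proof and treats this as an immediate consequence of Lemma~\ref{lem:betweenness_obstruction}, and ruling out $p(x)\le p(u)$ and $p(x)\ge p(w)$ via the lemma's two configurations is all that is needed. One small tidy-up is that $x\neq v$ needs no appeal to the ``intended setting'': in either excluded case $p(x)\neq p(v)$, so $x\ne v$ automatically (and if $x=v$, the conclusion $p(u)<p(x)<p(w)$ holds trivially).
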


\begin{lemma}[Neighbourhood inclusion]\label{lem:inclusion_neigborhood}
Let $G=(V,E)$ have an $ [a,b] $-difference labelling $p$. Let $u_1,u_2,u_3\in V$ be distinct vertices such that $p(u_1)\le p(u_2)\le p(u_3)$ and  $p(u_3)-p(u_1)\le b-a$.  Then $N(u_2)\subseteq N(u_1)\cup N(u_3).$
\end{lemma}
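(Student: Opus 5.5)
Take $y\in N(u_2)$; we show $y\in N(u_1)\cup N(u_3)$. If $y=u_1$ or $y=u_3$ we need a separate check; otherwise $y$ is distinct from all three. Since $yu_2\in E(G)$, we have $a\le |p(y)-p(u_2)|\le b$. By symmetry (reversing the line), assume $p(y)\le p(u_2)$, so $a\le p(u_2)-p(y)\le b$. The plan is to show that $y$ is adjacent to $u_1$ or to $u_3$ by a short case analysis on where $p(y)$ lies relative to $p(u_1)$.

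\emph{Case $p(y)\le p(u_1)$.} Then $p(u_1)-p(y)\le p(u_2)-p(y)\le b$. If moreover $p(u_1)-p(y)\ge a$, then $yu_1\in E(G)$. Otherwise $p(u_1)-p(y)<a$, and we look at $u_3$: $p(u_3)-p(y)\ge p(u_2)-p(y)\ge a$, and
\[
p(u_3)-p(y)=(p(u_3)-p(u_1))+(p(u_1)-p(y))< (b-a)+a=b.
\]
Hence $yu_3\in E(G)$. This is where the hypothesis $p(u_3)-p(u_1)\le b-a$ enters, and it is the only real step.

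\emph{Case $p(u_1)<p(y)\le p(u_2)$.} Then $0\le p(u_2)-p(y)<p(u_2)-p(u_1)\le b-a$. Combined with $p(u_2)-p(y)\ge a$, this forces $a\le b-a$. Also $p(u_3)-p(y)\ge p(u_2)-p(y)\ge a$, and $p(u_3)-p(y)<p(u_3)-p(u_1)\le b-a\le b$, so $yu_3\in E(G)$. The mirrored case $p(y)\ge p(u_2)$ is symmetric: swap the roles of $u_1$ and $u_3$ and get $y$ adjacent to $u_3$ or $u_1$ by the same inequalities.

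For $y\in\{u_1,u_3\}$ the inclusion is meant for neighbours outside the triple, i.e.\ $N(u_2)\setminus\{u_1,u_3\}\subseteq N(u_1)\cup N(u_3)$. If $y=u_1\in N(u_2)$, we have $a\le p(u_2)-p(u_1)\le p(u_3)-p(u_1)\le b-a\le b$, so $u_1u_3\in E(G)$ and $u_1\in N(u_3)$. The case $y=u_3$ is symmetric. So the statement holds even literally.

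The main obstacle is simply organising the inequality bookkeeping cleanly. Mixing a strict bound ($<a$) with the width bound $b-a$ yields the needed $\le b$, and the equality cases ($p(y)=p(u_1)$) fall into whichever case is convenient, so I expect no genuine difficulty.
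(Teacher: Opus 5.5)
Your proof is correct and is essentially the paper's argument. For a neighbour $y$ of $u_2$ lying to the left of $u_2$, the paper also asks whether $p(u_1)-p(y)\ge a$ (then $y\in N(u_1)$), and otherwise uses $p(u_3)-p(y)=(p(u_3)-p(u_1))+(p(u_1)-p(y))<(b-a)+a=b$ to get $y\in N(u_3)$; your separate case $p(u_1)<p(y)\le p(u_2)$ is just a redundant sub-case of that second alternative.

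Your explicit check of $y\in\{u_1,u_3\}$ is a small improvement in care. The paper's case split silently degenerates there when $a=0$: it would conclude $u_1\in N(u_1)$.
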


\begin{proof}
For simplicity we first set:
\[
\alpha=p(u_2)-p(u_1),\qquad \beta=p(u_3)-p(u_2).
\]
Then $\alpha,\beta\ge 0$, and by assumption $\alpha+\beta=p(u_3)-p(u_1)\le b-a.$

Now, let $x\in N(u_2)$. Then $a\le |p(x)-p(u_2)|\le b.$ We need to consider the following cases:

\paragraph{Case I: $p(x)\ge p(u_2)$.}  Then  let $d=p(x)-p(u_2)$ and clearly  $a\le d \le b$. Now 

\begin{itemize}
    \item Suppose $d\ge a+\beta$. Then $a\le d-\beta \leq d$  but as 
$d-\beta=p(x)-p(u_2)+p(u_2)-p(u_3)=p(x)-p(u_3)$, we have that $p(x)-p(u_3) \in [a,b]$ thus, $x\in N(u_3)$.
    \item Suppose $d<a+\beta$.  Then $a \le d+\alpha < a+\alpha+\beta\le a+(b-a)=b$ but $d+\alpha= p(x)-p(u_2)+p(u_2)-p(u_1)=p(x)-p(u_1)$. Thus, we have that $p(x)-p(u_1) \in [a,b]$ thus, $x\in N(u_1)$.
\end{itemize}

\paragraph{Case II: $p(x) < p(u_2)$.}  Then  let $d=p(u_2)-p(x)$ and again we have  $a\le d \le b$. Now,

\begin{itemize}
    \item Suppose $d\ge a+\alpha$. Then $a\le d-\alpha\le b$ but as $d-\alpha= p(u_2)-p(x)+p(u_1)-p(u_2)=p(u_1)-p(x)$ we have $p(u_1)-p(x) \in [a,b]$ thus, $x\in N(u_1)$.
    \item Suppose $d<a+\alpha$.  Then $a\le d+\beta<a+\alpha+\beta\le a+(b-a)\le b$ but as $d+\beta=p(u_2)-p(x) + p(u_3)-p(u_2)= p(u_3)-p(x)$. Thus, we have that $p(u_3) -p(x) \in [a,b]$ thus, $x\in N(u_3)$.
\end{itemize}

Therefore every neighbor $x$ of $u_2$ belongs to
$N(u_1)\cup N(u_3)$. Hence $N(u_2)\subseteq N(u_1)\cup N(u_3).$
\end{proof}

\begin{theorem}\label{theo:tree_cat_pcgs}
The subdivided star $S_{2,2,2,2,2}$, depicted in Figure~\ref{fig:graphs-section4}(a), is not a Cat-PCG.
\end{theorem}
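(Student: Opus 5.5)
The plan is to argue by contradiction via Theorem~\ref{thm:char}. Suppose $S_{2,2,2,2,2}\in\catp{a'}{b'}$ for some $a'\le b'$. Since distinct leaves of an unweighted tree are at distance at least $2$, we may assume $a'\ge 2$, as in the proof of Corollary~\ref{cor:add_small_components}. Then, with $a=a'-2$ and $b=b'-2$, Theorem~\ref{thm:char} gives an $[a,b]$-difference labelling $p$ of $S_{2,2,2,2,2}$. We use the vertex names of Figure~\ref{fig:graphs-section4}(a): centre $r$, middle vertices $u_1,\dots,u_5$, and leaves $\ell_1,\dots,\ell_5$ with $\ell_i$ adjacent only to $u_i$. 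The whole argument should follow from Lemma~\ref{lem:inclusion_neigborhood} and a pigeonhole step, with no case analysis on $a$.

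First, every $u_i$ is adjacent to $r$, so $|p(u_i)-p(r)|\in[a,b]$. Hence each $p(u_i)$ lies either in the right window $[p(r)+a,\,p(r)+b]$ or in the left window $[p(r)-b,\,p(r)-a]$. There are five middle vertices and two windows, so by pigeonhole at least three of them lie in the same window. After renaming, and reflecting the line if necessary, we may take $u_1,u_2,u_3$ in the right window with $p(u_1)\le p(u_2)\le p(u_3)$. Labels are allowed to coincide here, since Lemma~\ref{lem:inclusion_neigborhood} only needs the weak order. Both $p(u_1)$ and $p(u_3)$ lie in an interval of length $b-a$, so $p(u_3)-p(u_1)\le b-a$.

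Lemma~\ref{lem:inclusion_neigborhood} then gives $N(u_2)\subseteq N(u_1)\cup N(u_3)$. However, $\ell_2\in N(u_2)$, while the only neighbour of $\ell_2$ is $u_2$, so $\ell_2\notin N(u_1)\cup N(u_3)$. This contradiction shows that $S_{2,2,2,2,2}$ admits no $[a,b]$-difference labelling for any $a\le b$, and therefore is not a Cat-PCG.

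There is no serious obstacle. The only points needing care are the reduction to $a'\ge 2$ before invoking Theorem~\ref{thm:char}, and checking the span bound $b-a$ needed by Lemma~\ref{lem:inclusion_neigborhood}, which holds because all three labels lie in one window. The argument also explains why five branches are needed: with only four, pigeonhole could force just two middle vertices into one window.
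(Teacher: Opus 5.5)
Your proof is correct and follows the paper's argument exactly: pigeonhole three of $u_1,\dots,u_5$ onto one side of $p(r)$, bound their span by $b-a$, apply Lemma~\ref{lem:inclusion_neigborhood}, and derive a contradiction from the private leaf $\ell_2$. Your explicit reduction to $a'\ge 2$ before invoking Theorem~\ref{thm:char} is a point the paper leaves implicit; to finish it, note that $b'\ge 2$ because the graph has edges.
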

\begin{proof}
Suppose, to the contrary, that $S_{2,2,2,2,2}$ admits an $[a,b]$-difference labelling $p$. Consider the center $r$, whose neighbors are $u_1,\dots,u_5$. At least three of these neighbors lie on the same side of $p(r)$. Assume
$ p(r)\leq p(u_1)\leq p(u_2)\leq p(u_3)$ (the other case  $p(u_1)\leq p(u_2)\leq p(u_3) \leq p(r)$ is symmetric).  Since $r$ is adjacent to $u_1,u_2,u_3$, we have $a\leq p(u_i)-p(r)\leq b$ for all $i=1,2,3$.  In particular, $p(u_3) - p(r) \leq b$ and $p(u_1) - p(r) \geq a$. Hence $p(u_3)-p(u_1)\leq b-a$. Thus, the vertices $u_1,u_2$ and $u_3$ satisfy the conditions of Lemma~\ref{lem:inclusion_neigborhood} and thus it must hold that $N(u_2)\subseteq N(u_1)\cup N(u_3).$
But $u_2$ has a private leaf $\ell_2$, adjacent to $u_2$ and to neither $u_1$ nor $u_3$ reaching a contradiction.
\end{proof}

\begin{corollary}\label{cor:trees}
Not every tree is a Cat-PCG.
\end{corollary}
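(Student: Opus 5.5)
The corollary follows at once from Theorem~\ref{theo:tree_cat_pcgs}. The plan is simply to exhibit one tree outside \cat, and the natural witness is the subdivided star $S_{2,2,2,2,2}$ itself.

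First, observe that $S_{2,2,2,2,2}$ is a tree. It is connected, since every vertex $u_i$ is joined to the centre $r$ and every leaf $\ell_i$ is joined to $u_i$. It has $11$ vertices and $10$ edges, namely the five edges $ru_i$ and the five edges $u_i\ell_i$. Hence it is acyclic.

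Second, invoke Theorem~\ref{theo:tree_cat_pcgs}, which states that $S_{2,2,2,2,2}$ admits no $[a,b]$-difference labelling for any $0\le a\le b$. By Theorem~\ref{thm:char}, a graph lies in \catp{a+2}{b+2} exactly when it admits an $[a,b]$-difference labelling. So $S_{2,2,2,2,2}$ lies in no class \catp{a'}{b'} with $a'\ge 2$.

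One small point needs checking, which is the case $a'<2$. As noted in the proof of Corollary~\ref{cor:add_small_components}, leaf distances in an unweighted tree are always at least $2$. Therefore $\catp{a'}{b'}=\catp{2}{b'}$ whenever $a'<2$, and these classes are already covered by the previous step. It follows that $S_{2,2,2,2,2}\notin\cat$, so not every tree is a Cat-PCG.

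There is no real obstacle here; the only care needed is this bookkeeping between $[a,b]$-labellings and the full union defining \cat.
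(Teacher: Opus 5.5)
Your argument is correct and is the paper's own: the corollary is read off directly from Theorem~\ref{theo:tree_cat_pcgs}, since $S_{2,2,2,2,2}$ is a tree. The interval bookkeeping is redundant because that theorem already asserts non-membership in all of \cat{}; if you keep it, note that your reduction to $\catp{2}{b'}$ needs $b'\ge 2$, and for $b'<2$ the class $\catp{a'}{b'}$ contains only edgeless graphs, which $S_{2,2,2,2,2}$ is not.
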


In Section~\ref{subsec:catPCGs} we showed that all two-dimensional grids are in \cat.  The next result shows that this does not extend to three-dimensional grids, since the cube graph is not a Cat-PCG. 

\begin{theorem}\label{thm:cube_not_cat}
The cube graph \(Q_3\) depicted in Figure~\ref{fig:graphs-section4}(b) is not in \(\cat\).
\end{theorem}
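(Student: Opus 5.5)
We will argue by contradiction. Suppose $Q_3$ admits an $[a,b]$-difference labelling $p$; by Theorem~\ref{thm:char} this is exactly what membership in $\cat$ would provide. The plan is to look at the vertex with the smallest label and derive a contradiction by combining Corollary~\ref{cor:corollary_betweeness} with Lemma~\ref{lem:betweenness_obstruction}.

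First we make $p$ injective. $Q_3$ has no false twins, since distinct vertices of $Q_3$ have distinct neighbourhoods. It has no true twins either, since it is triangle-free and so adjacent vertices cannot have equal closed neighbourhoods. Hence $Q_3$ is twin-free, and Observation~\ref{obs:same_label_twins} shows that $p$ is injective.

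Let $x$ be the vertex with minimum label. By the symmetry of $Q_3$ we may take $x=000$. Its three neighbours $100,010,001$ all have labels larger than $p(x)$. Order them as $u_1,u_2,u_3$ with $p(u_1)<p(u_2)<p(u_3)$.

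The key structural fact about $Q_3$ is this. Any two neighbours $u_1,u_3$ of $x$ have exactly one common neighbour $y$ other than $x$, obtained as the coordinatewise sum of $u_1$ and $u_3$. This $y$ has weight $2$, so it is not adjacent to $x$. It also differs from the third neighbour $u_2$ in all three coordinates, so it is not adjacent to $u_2$. For example, for $u_1=100$, $u_2=010$, $u_3=001$ we get $y=101$.

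Now apply Corollary~\ref{cor:corollary_betweeness} to $u_1,u_2,u_3$ and the vertex $y$. Since $y$ is adjacent to $u_1$ and $u_3$ but not to $u_2$, it gives $p(u_1)<p(y)<p(u_3)$. Because $x$ has the minimum label, we have
\[
p(x)\le p(u_1)<p(y)<p(u_3).
\]
Moreover $x$ is adjacent to both $u_1$ and $u_3$. Lemma~\ref{lem:betweenness_obstruction}, applied with the vertices $x,u_1,y,u_3$, then forces $xy\in E(Q_3)$. This contradicts the fact that $y$ lies at distance $2$ from $x$.

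The only step that needs care is the combinatorial check on $Q_3$: for every choice of which neighbour of $x$ lands in the middle, the outer two neighbours must have a common neighbour $y\neq x$ that is adjacent neither to $x$ nor to the middle neighbour. Since the automorphism group of $Q_3$ acts transitively on ordered triples of neighbours of a fixed vertex, this reduces to the single computation with $y=101$ above.

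Note that Lemma~\ref{lem:inclusion_neigborhood} alone gives no contradiction here, because $N(u_2)\subseteq N(u_1)\cup N(u_3)$ does hold in $Q_3$. The betweenness argument is what is needed.
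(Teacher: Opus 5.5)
Your proof is correct and is essentially the paper's argument. The paper first uses Lemma~\ref{lem:betweenness_obstruction} to show that no vertex other than the middle neighbour can have a label strictly between the outer two neighbours of the minimum-label vertex, and then uses Corollary~\ref{cor:corollary_betweeness} to place the bitwise OR of the outer neighbours (your $y$) there; you apply the same two steps to $y$ directly.
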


\begin{proof}
Suppose, on the contrary that \(Q_3\) admits an \([a,b]\)-difference labelling \(p\). 
Since \(Q_3\) is twin-free, Observation~\ref{obs:same_label_twins} implies that \(p\) is injective. Let \(u\) be the vertex with minimum label. Let
$N(u)=\{v_1,v_2,v_3\}$, where $p(v_1)<p(v_2)<p(v_3).$

We first claim that no vertex different from \(v_2\) can have a label strictly between \(p(v_1)\) and \(p(v_3)\). Indeed, suppose that there exists a vertex \(y\neq v_2\) such that
$p(v_1)<p(y)<p(v_3).$ Because the inequalities are strict and $y \neq v_1, v_3$ and by assumption $y\neq v_2$ we have \(y\notin \{v_1,v_2,v_3\}=N(u)\). Since $u$ has the minimum label then $p(u)< p(v_1) < p(y) < p(v_3)$ and $u$ is adjacent to $v_1$ and $v_3$. Thus, we can apply Lemma~\ref{lem:betweenness_obstruction} with \(u\)
as the distinguished vertex and \(v_1,y,v_3\) as the ordered
triple and we obtain $uy \in E(Q_3)$ which contradicts $y \notin N(u)$.

We now use the structure of the cube. Represent the vertices of
\(Q_3\) by binary strings of length three, with two vertices adjacent if and only if they differ in exactly one coordinate. We may assume that $u=000$. Therefore, $\{v_1,v_2,v_3\}=\{100,010,001\}$, where the indices \(1,2,3\) are determined by the ordering $p(v_1)<p(v_2)<p(v_3)$. Let \(x\) be the vertex obtained by taking the bitwise OR of the binary strings corresponding to \(v_1\) and \(v_3\). Then \(x\) is adjacent to both \(v_1\) and \(v_3\), but not to \(v_2\). Since $p(v_1)<p(v_2)<p(v_3)$, Corollary~\ref{cor:corollary_betweeness} implies that
$p(v_1)<p(x)<p(v_3)$. Moreover, \(x\neq v_2\), contradicting the claim above.

Therefore, $Q_3\notin\mathrm{Cat\text{-}PCG}$.
\end{proof}

As any \(d\)-dimensional grid with at least three side lengths greater than \(1\) contains an induced copy of the cube graph \(Q_3\), the following holds.

\begin{corollary}\label{cor:high_dimensional_grids}
Let \(G\) be a \(d\)-dimensional grid. If at least three side lengths of \(G\) are greater than \(1\), then \(G\notin\cat\).
\end{corollary}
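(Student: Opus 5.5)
The statement to prove is Corollary~\ref{cor:high_dimensional_grids}: a $d$-dimensional grid with at least three side lengths greater than $1$ is not in $\cat$. The plan is to reduce it to Theorem~\ref{thm:cube_not_cat} via closure under induced subgraphs. By Theorem~\ref{thm:char} together with Proposition~\ref{prop:close}(1), the class of graphs admitting an $[a,b]$-difference labelling is closed under induced subgraphs. Hence $\cat$, being the union of these classes, is hereditary. It therefore suffices to exhibit $Q_3$ as an induced subgraph of $G$.

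Let $G$ have vertex set $[n_1]\times\cdots\times[n_d]$, with two vertices adjacent iff their $\ell_1$-distance is $1$. Assume without loss of generality that $n_1,n_2,n_3\ge 2$. Consider the set
\[
S=\{(x_1,x_2,x_3,1,\ldots,1): x_1,x_2,x_3\in\{1,2\}\}.
\]
The map $(x_1,x_2,x_3,1,\dots,1)\mapsto (x_1-1,x_2-1,x_3-1)\in\{0,1\}^3$ is a bijection onto $V(Q_3)$. For two vertices of $S$, the $\ell_1$-distance equals the number of coordinates among the first three in which they differ, since each such coordinate differs by exactly $1$ and the remaining coordinates agree. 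So two vertices of $S$ are adjacent in $G$ iff they differ in exactly one coordinate, and $G[S]\cong Q_3$.

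If $G$ were in $\cat$, then by Theorem~\ref{thm:char} it would admit some $[a,b]$-difference labelling. Restricting that labelling to $S$ gives one for $Q_3$, so $Q_3\in\cat$, contradicting Theorem~\ref{thm:cube_not_cat}. There is no real obstacle here. The only point needing care is checking that the subgraph is induced rather than merely a subgraph, and that is exactly the $\ell_1$-distance computation above.
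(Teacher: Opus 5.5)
Your proposal is correct and is the paper's approach: the paper justifies the corollary in one line, by noting that such a grid contains an induced copy of $Q_3$ and appealing to Theorem~\ref{thm:cube_not_cat}. You supply the details it leaves implicit, namely the explicit vertex set $S=\{1,2\}^3\times\{1\}^{d-3}$ with the $\ell_1$-distance check that $G[S]\cong Q_3$ is induced, and heredity of $\cat$ by restricting a difference labelling.
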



Calamoneri et al.~\cite{calamoneri2014pairwise} showed that
$W_n$, for $7\leq n\leq11$, does not admit a caterpillar witness and  thus clearly is not a \cat. Moreover, wheels $W_n$ with $n\geq9$ are not PCGs
\cite{Baiocchi2019,BeaudouinLafon2019}. We now show directly that $W_6$ is not in Cat-PCG, thereby settling the only remaining case.

\begin{theorem}\label{theo:w6}
The wheel graph \(W_6\) depicted in Figure~\ref{fig:graphs-section4}(c) is not in \(\cat\).
\end{theorem}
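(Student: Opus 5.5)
The plan is to assume that $W_6$ has an $[a,b]$-difference labelling $p$ and derive a contradiction. Write $u$ for the hub and $v_1,\dots,v_5$ for the rim $C_5$. First, $W_6$ is twin-free: the hub is the only universal vertex, and the rim vertices have pairwise distinct neighbourhoods. So by Observation~\ref{obs:same_label_twins} the labelling $p$ is injective.

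Since $u$ is adjacent to all five rim vertices, by pigeonhole at least three of them lie on the same side of $p(u)$. By symmetry I assume these are to the right. Let $w_1,\dots,w_k$ ($k\ge 3$) be the rim vertices with $p(w_1)<\dots<p(w_k)$ and $p(w_i)>p(u)$. Each of them satisfies $a\le p(w_i)-p(u)\le b$, so $p(w_k)-p(w_1)\le b-a$. Lemma~\ref{lem:inclusion_neigborhood} then gives $N(w_j)\subseteq N(w_1)\cup N(w_k)$ for every $1<j<k$.

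Next I run a short case analysis on $C_5$. If $w_1$ and $w_k$ are non-adjacent on the rim, their rim neighbourhoods have union of size $3$, and no vertex of $C_5$ has both rim neighbours in that union. So $w_1w_k$ must be a rim edge. Then the only vertex $w_j$ that can satisfy the inclusion is the vertex opposite that edge. This forces $k=3$, with $w_2$ opposite the edge $w_1w_3$. Consequently the remaining two rim vertices $x,y$ lie to the left of $u$, and they are exactly the rim neighbours of $w_2$ besides none of $w_1,w_3$. Also, each of $x,y$ is adjacent to one of $w_1,w_3$.

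The finishing step uses the betweenness obstruction. Take $x$; it is adjacent to $u$ and to $w_2$. Its other rim neighbour is $w_1$ or $w_3$, and in either case it is adjacent to some $w\in\{w_2,w_3\}$ lying to the right of $w_1$. We have $p(x)<p(u)<p(w_1)<p(w)$, and $x$ is adjacent to both $u$ and $w$. So Lemma~\ref{lem:betweenness_obstruction}, with $x$ as the distinguished vertex and $u,w_1,w$ as the ordered triple, forces $xw_1\in E$. More generally, $x$ becomes adjacent to every rim vertex strictly between $u$ and its farthest right neighbour. Then $x$ is adjacent to all of $w_1,w_2$ and possibly $w_3$. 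If $x$'s second neighbour was $w_3$, $x$ gets three rim neighbours, which is impossible. If it was $w_1$, apply the same argument to $y$. One of $x,y$ is adjacent to $w_3$, since $w_3$ needs a rim neighbour other than $w_1$ among $\{x,y\}$. That vertex then gets $w_1,w_2,w_3$ as rim neighbours, a contradiction.

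The main obstacle I expect is keeping the $C_5$ case analysis clean, and in particular pinning down that $k=3$ with the opposite-vertex configuration. The final betweenness step is then mechanical.
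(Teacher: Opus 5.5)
Your proof is correct, but it takes a different route from the paper's.

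\textbf{The paper's route.} The paper never uses the neighbourhood-inclusion lemma for $W_6$. It takes $v_1,v_2,v_3$ to be the three rim vertices \emph{nearest} to $u$ on the right. Then the two remaining rim vertices $r,s$ automatically avoid the open interval $(p(u),p(v_3))$, without knowing on which side of $u$ they lie. A two-case check, on whether $\{v_1,v_2,v_3\}$ induces one or two edges of $C_5$, produces some $x\in\{r,s\}$ and $i<j$ with $xv_i\notin E$ and $xv_j\in E$. Corollary~\ref{cor:corollary_betweeness} then places $x$ strictly between $u$ and $v_j$, a contradiction.

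\textbf{Your route.} You take \emph{all} rim vertices to the right of $u$. Lemma~\ref{lem:inclusion_neigborhood} and a $C_5$ case analysis force exactly three of them, with $w_1w_3$ a rim edge and $w_2$ opposite it. Only then, knowing that $x,y$ lie to the left of $u$, do you apply Lemma~\ref{lem:betweenness_obstruction} anchored at $x$.

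\textbf{Comparison.} The paper's choice of nearest vertices buys brevity and needs only one tool. Your route is longer, but it extracts a forced configuration and shows that the same two-lemma toolkit used for $S_{2,2,2,2,2}$ also handles $W_6$.

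\textbf{Two small remarks.} First, the sentence ``no vertex of $C_5$ has both rim neighbours in that union'' is literally false for $w_1$ and $w_k$ themselves; you mean the vertices other than these two. Second, your ending can be shortened. Both $x$ and $y$ are rim neighbours of $w_2$ and lie to the left of $u$, so $p(x)<p(u)<p(w_1)<p(w_2)$ and the same for $y$. Betweenness then forces both to be adjacent to $w_1$. This gives $w_1$ the three rim neighbours $x$, $y$, $w_3$, which is impossible in $C_5$.
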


\begin{proof}
Let \(u\) be the center of \(W_6\), and let the other five vertices induce a
cycle \(C_5\). Suppose, on the contrary that \(W_6\) admits an
\([a,b]\)-difference labelling \(p\). Since \(W_6\) is twin-free, from 
Observation~\ref{obs:same_label_twins} we have that \(p\) is injective.

Since \(u\) is adjacent to the five vertices of the cycle, at least three of
them lie on the same side of \(p(u)\). Wlog let
\(v_1,v_2,v_3\) be the first three cycle vertices to the right of \(u\) (the case when they are on the left is symmetric), so that
$$
p(u)<p(v_1)<p(v_2)<p(v_3).
$$
Let \(r,s\) be the two remaining cycle vertices. By the choice of
\(v_1,v_2,v_3\), neither \(r\) nor \(s\) lies in the interval
\((p(u),p(v_3))\).

Notice that subgraph induced by \(\{v_1,v_2,v_3\}\) in the cycle \(C_5\) has either one edge or two edges.

\medskip
\noindent
\textbf{Case 1.} Suppose \(C_5[\{v_1,v_2,v_3\}]\) has exactly one edge.
Then one among \(v_1,v_2,v_3\) is isolated in this induced subgraph. In a \(5\)-cycle, each of the two remaining vertices \(r,s\) is adjacent to this isolated vertex and to exactly one endpoint of the unique edge. Hence there exist
\(x\in\{r,s\}\) and indices \(i<j\) such that
$$
xv_i\notin E(W_6)
\qquad\text{and}\qquad
xv_j\in E(W_6).
$$
Clearly,  \(x\) is also adjacent to \(u\). Applying
Corollary~\ref{cor:corollary_betweeness} to the triple
\((u,v_i,v_j)\) and the vertex \(x\), we obtain
$$
p(u)<p(x)<p(v_j)\leq p(v_3),
$$
contradicting the  fact that neither \(r\) nor \(s\) lies in the interval
\((p(u),p(v_3))\).

\medskip
\noindent
\textbf{Case 2.} Suppose \(C_5[\{v_1,v_2,v_3\}]\) has exactly two edges.
This means that these three vertices induce a path. In a \(5\)-cycle, each of the two remaining vertices \(r,s\) is adjacent to exactly one endpoint of this path and to neither its middle vertex. Hence again there exist \(x\in\{r,s\}\) and
indices \(i<j\) such that
$$
xv_i\notin E(W_6)
\qquad\text{and}\qquad
xv_j\in E(W_6).
$$
As before, \(xu\in E(W_6)\). Applying
Corollary~\ref{cor:corollary_betweeness} to \((u,v_i,v_j)\) and \(x\), we get
$$
p(u)<p(x)<p(v_j)\leq p(v_3),
$$
again contradicting the  fact that neither \(r\) nor \(s\) lies in the interval
\((p(u),p(v_3))\).

In both cases we obtain a contradiction. Therefore \(W_6\notin\cat\).
\end{proof}

\begin{theorem}\label{thm:co_sunlet3_not_cat}
The graph \( \overline{\emph{sunlet}_3} \) depicted in Figure~\ref{fig:graphs-section4}(d) is not in \(\cat\).
\end{theorem}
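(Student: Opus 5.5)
We argue by contradiction, following the pattern of the proofs for $Q_3$ and $W_6$. Suppose $\overline{\mathrm{sunlet}_3}$ admits an $[a,b]$-difference labelling $p$. The graph consists of the triangle $v_1v_2v_3$ together with $w_1,w_2,w_3$, where $w_i$ is adjacent exactly to the two $v_j$ with $j\neq i$. The $w_i$ have pairwise distinct neighbourhoods, and no two vertices are true or false twins, so the graph is twin-free. By Observation~\ref{obs:same_label_twins}, $p$ is therefore injective.

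\textbf{Normalising the order.} Rename the triangle so that $p(v_\alpha)<p(v_\beta)<p(v_\gamma)$. The vertex $w_\beta$ is adjacent to $v_\alpha$ and $v_\gamma$ but not to $v_\beta$, so Corollary~\ref{cor:corollary_betweeness} gives $p(v_\alpha)<p(w_\beta)<p(v_\gamma)$. Reflecting the line swaps the roles of $\alpha$ and $\gamma$, so we may assume $p(w_\beta)<p(v_\beta)$. This gives the chain
\[
p(v_\alpha)<p(w_\beta)<p(v_\beta)<p(v_\gamma).
\]
From the adjacencies we read off $a\le p(v_\gamma)-p(w_\beta)\le b$. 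Since $w_\beta v_\beta\notin E$ and $0<p(v_\beta)-p(w_\beta)<p(v_\gamma)-p(w_\beta)\le b$, we get $p(v_\beta)-p(w_\beta)<a$. We also have $p(v_\beta)-p(v_\alpha)\ge a$, $p(v_\gamma)-p(v_\beta)\ge a$, and $p(w_\beta)-p(v_\alpha)\ge a$.

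\textbf{Locating $w_\alpha$ and $w_\gamma$.} We apply Lemma~\ref{lem:betweenness_obstruction} with triangle vertices as the distinguished vertex.
\begin{itemize}
\item With $x=v_\gamma$ and its neighbours $v_\alpha,v_\beta$ to the left, every vertex strictly between $v_\alpha$ and $v_\beta$ is adjacent to $v_\gamma$. Hence $w_\gamma\notin(p(v_\alpha),p(v_\beta))$.
\item With $x=v_\alpha$ and neighbours $w_\beta,v_\gamma$ to the right, every vertex in $(p(w_\beta),p(v_\gamma))$ is adjacent to $v_\alpha$. Hence $w_\alpha\notin(p(w_\beta),p(v_\gamma))$.
\end{itemize}
Corollary~\ref{cor:corollary_betweeness} gives further exclusions: $w_\gamma$ is not adjacent to $w_\beta$, and $w_\alpha$ is not adjacent to $w_\beta$ or to $w_\gamma$. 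This leaves only a few regions for each of $w_\alpha,w_\gamma$: to the left of $v_\alpha$, between $v_\alpha$ and $w_\beta$, or to the right of $v_\beta$ (respectively of $v_\gamma$). We then treat each region separately.

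\textbf{Eliminating the cases.} In each region we write the distance inequalities forced by the edges and non-edges. For example, if $p(w_\alpha)>p(v_\gamma)$, then $w_\alpha v_\beta\in E$ forces $p(w_\alpha)-p(v_\beta)\le b$, while $w_\alpha v_\alpha\notin E$ forces $p(w_\alpha)-p(v_\alpha)>b$. Combining such constraints with $p(v_\beta)-p(w_\beta)<a$ should yield a contradiction, typically through the non-edge $w_\alpha w_\beta$ having difference inside $[a,b]$, or through Lemma~\ref{lem:inclusion_neigborhood}. In particular, whenever three vertices fall in a window of length at most $b-a$, the middle one must have its neighbourhood covered by the other two, which the private adjacencies of the $w_i$ violate. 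The main obstacle is this final bookkeeping. I would first pin down $w_\gamma$: it must lie to the right of $v_\beta$, since left of $v_\alpha$ it would sit inside a window forcing adjacency to $v_\gamma$. Fixing $w_\gamma$ this way should reduce $w_\alpha$ to one or two cases, each closed by a single interval inequality.
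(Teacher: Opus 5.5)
Your setup is correct: the graph is twin-free, the normalisation $p(v_\alpha)<p(w_\beta)<p(v_\beta)<p(v_\gamma)$ is legitimate, and both bulleted exclusions are valid applications of Lemma~\ref{lem:betweenness_obstruction}. The gap is that you never reach a contradiction. The decisive step is postponed to ``bookkeeping'' that ``should'' work, and the one concrete move you commit to is wrong.

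You assert that $w_\gamma$ must lie to the right of $v_\beta$, because to the left of $v_\alpha$ it would be forced to be adjacent to $v_\gamma$. This does not follow from anything you have established. Both neighbours $v_\alpha,v_\beta$ of $w_\gamma$ lie to the left of $v_\gamma$, and nothing places $w_\gamma$ between two neighbours of $v_\gamma$. What actually rules out the left region is the non-edge $w_\gamma w_\beta$. That same non-edge also rules out the region to the right of $v_\beta$, so the position you propose to ``fix'' for $w_\gamma$ is itself impossible. Your list of candidate regions for $w_\gamma$ also includes the interval between $v_\alpha$ and $w_\beta$, which your own first bullet already excludes.

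The missing step is to apply Corollary~\ref{cor:corollary_betweeness} to an observation you already wrote down. The vertex $w_\gamma$ is adjacent to $v_\alpha$ and $v_\beta$ but not to $w_\beta$, and $p(v_\alpha)<p(w_\beta)<p(v_\beta)$. Hence $p(v_\alpha)<p(w_\gamma)<p(v_\beta)$, which immediately contradicts your first bullet.

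The paper phrases the finish slightly differently. It applies the corollary once more, to the triple $(v_\alpha,w_\gamma,v_\beta)$ and the vertex $v_\gamma$, which is adjacent to $v_\alpha,v_\beta$ but not to $w_\gamma$. This forces $p(v_\alpha)<p(v_\gamma)<p(v_\beta)$, which is absurd. That is the paper's whole proof: three applications of the corollary. It needs no interval inequalities, no use of Lemma~\ref{lem:inclusion_neigborhood}, and never locates $w_\alpha$.

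As submitted, your proposal does not prove the statement. The planned case analysis on $w_\alpha$ is both unexecuted and unnecessary.
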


\begin{proof}
Let \(G=\overline{\emph{sunlet}_3}\). We denote its vertices by
$\{v_1,v_2,v_3,w_1,w_2,w_3\}$, where \(v_1,v_2,v_3\) induce a clique, \(w_1,w_2,w_3\) induce an independent set, and $v_iw_j\in E(G)$ if and only if  $i\neq j.$ Thus \(v_iw_i\notin E(G)\) for each \(i\).

Suppose, on the contrary that \(G\) admits an \([a,b]\)-difference labelling \(p\). Since \(G\) is twin-free, Observation~\ref{obs:same_label_twins} implies that \(p\) is injective. Relabelling if necessary, assume
$$
p(v_1)<p(v_2)<p(v_3).
$$

Now \(w_2\) is adjacent to \(v_1\) and \(v_3\), but not to \(v_2\). By Corollary~\ref{cor:corollary_betweeness}, we get $p(v_1)<p(w_2)<p(v_3).$
Hence either \(p(w_2)<p(v_2)\) or \(p(v_2)<p(w_2)\).

Suppose first that $p(v_1)<p(w_2)<p(v_2).$ The vertex \(w_3\) is adjacent to \(v_1\) and \(v_2\), but not to \(w_2\). Applying Corollary~\ref{cor:corollary_betweeness} to the triple \((v_1,w_2,v_2)\) and the vertex \(w_3\), we obtain $p(v_1)<p(w_3)<p(v_2).$ But \(v_3\) is adjacent to \(v_1\) and \(v_2\), and not adjacent to \(w_3\). Applying the same lemma again to \((v_1,w_3,v_2)\) and \(v_3\), we get $p(v_1)<p(v_3)<p(v_2),$ contradicting $p(v_2)<p(v_3)$.

The case $p(v_2)<p(w_2)<p(v_3)$  is symmetric. Indeed, using \(w_1\) in place of \(w_3\), the lemma first gives $p(v_2)<p(w_1)<p(v_3)$, and then, using \(v_1\), gives $p(v_2)<p(v_1)<p(v_3)$,
contradicting \(p(v_1)<p(v_2)\).

Therefore no such difference labelling exists, and hence \(G\notin\cat\).
\end{proof}

The result in Theorem~\ref{theo:w6} and Theorem~\ref{thm:co_sunlet3_not_cat} together with Theorem~\ref{theo:five_vertices}  show that the graphs $W_6$ and  $\overline{\mathrm{sunlet}_3}$ are graphs not in \cat~that  have the smallest number of vertices.

Finally in Figure~\ref{fig:summary}  we depict the schematic representation of the results presented in this section. 

\begin{figure}[h]
    \centering
\includegraphics[scale=0.6]{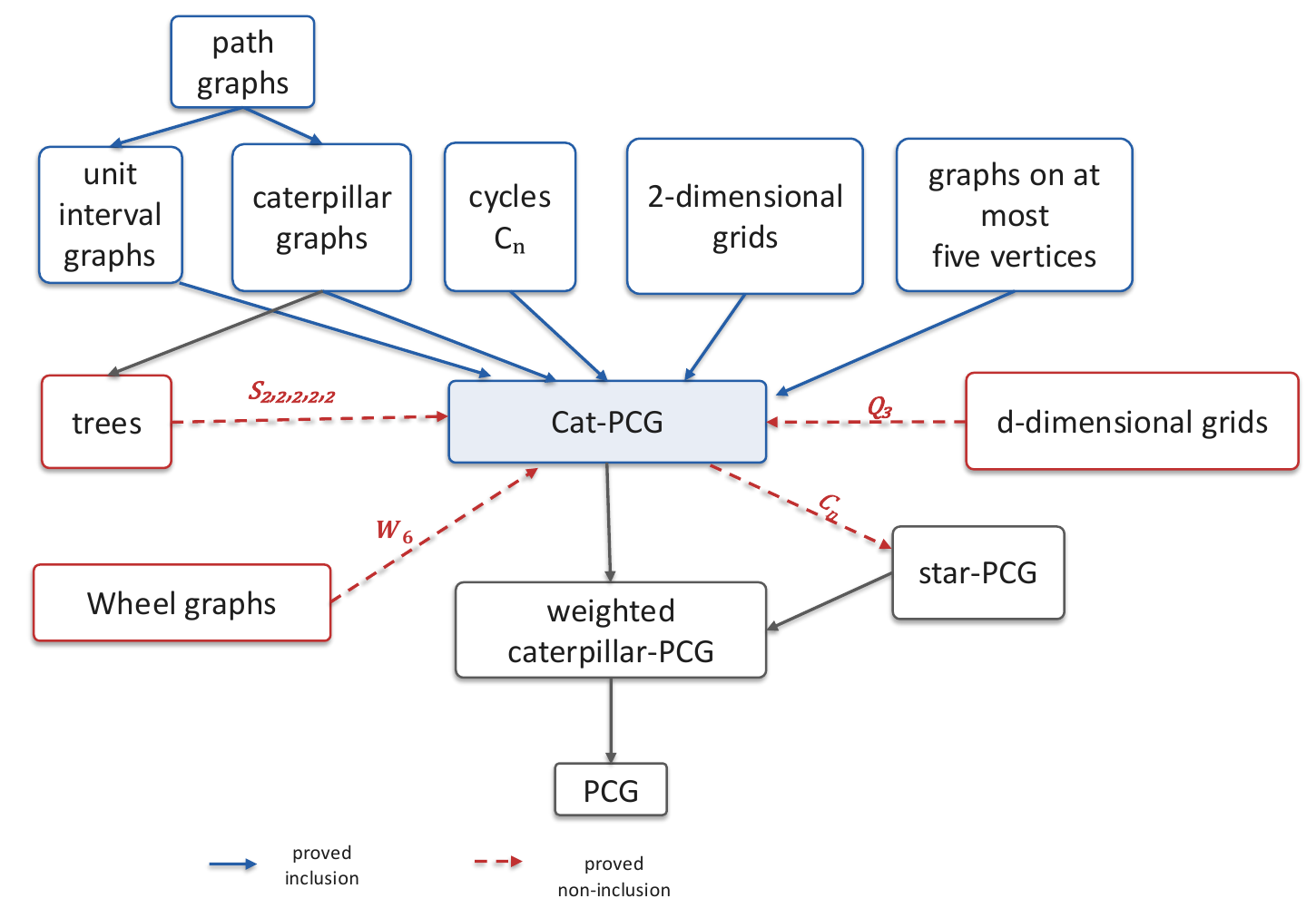}    
\caption{Schematic overview of the relationships between
\cat~and related graph classes.}
    \label{fig:summary}
\end{figure}

\section{Restrictions on Cat-PCG representations}\label{sec:restrictions} 

In this section we consider some particular subclasses of \cat s by restricting either the topology of the caterpillar or the interval used in the representation.

\subsection{Restricting the caterpillar: centipede witnesses}
A \emph{centipede} is a caterpillar in which every spine vertex has exactly one leaf.  We consider the \cat s for which the caterpillar is in fact a centipede.  Let \(P_n=v_1,v_2,\ldots,v_n\) be the path with $n$ vertices.  Given an integer \(r\geq 0\),  we denote by \(P_n^r\) the \(r\)-th power of \(P_n\), that is, the graph on vertex set
\([n]\) in which \(ij\) is an edge if and only if
\[
1\leq |i-j|\leq r.
\]
We also set \(P_n^r\) to be the empty graph when \(r\leq 0\).


The following characterization is essentially
\cite[Theorem~3.2]{calamoneri2014pairwise}; with our convention for $P_n^r$ when $r\leq0$, it can be written as follows.

\begin{theorem}[\cite{{calamoneri2014pairwise}}]\label{thm:centipede_characterization}
Let $G$ be a graph on $n$ vertices. Then $G$ has a centipede witness with interval
$[a,b]$ if and only if $G$ is isomorphic to $P_n^{\,b-2}-P_n^{\,a-3}$.
\end{theorem}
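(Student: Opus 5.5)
In a centipede with spine $s_1,\dots,s_n$, every spine vertex carries exactly one leaf. So a centipede witness for a graph $G$ on $n$ vertices is the same thing as a bijection $\pi\colon V(G)\to[n]$, with $v$ attached to $s_{\pi(v)}$. The plan is to use the distance formula $d_T(u,v)=|\pi_T(u)-\pi_T(v)|+2$ from the preliminaries, just as in the proof of Theorem~\ref{thm:char}. This reduces the statement to the following fact: $G=\mathrm{PCG}(T,[a,b])$ for the centipede $T$ given by $\pi$ if and only if $\pi$ is an isomorphism from $G$ onto the graph on $[n]$ in which $ij$ is an edge exactly when $|i-j|+2\in[a,b]$. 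Because $\pi$ is a bijection, this graph is determined by $n$, $a$ and $b$ alone.

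The remaining step is to identify this graph with $P_n^{\,b-2}-P_n^{\,a-3}$. For distinct $i,j\in[n]$, set $k=|i-j|\geq 1$. Then $k+2\in[a,b]$ holds exactly when $a-2\leq k\leq b-2$, that is, when $1\leq k\leq b-2$ and not $1\leq k\leq a-3$. The first condition says $ij\in E(P_n^{\,b-2})$ and the second says $ij\notin E(P_n^{\,a-3})$. Whenever $a-3\leq b-2$ we have $E(P_n^{\,a-3})\subseteq E(P_n^{\,b-2})$, so the difference $H_1-H_2$ is well defined. The convention that $P_n^r$ is empty for $r\leq 0$ covers the degenerate cases. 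When $a\leq 3$, no distance is excluded from below, since $k\geq1$ already gives $k+2\geq 3\geq a$. When $b\leq 2$, the graph is edgeless, since leaf distances are at least $3$ in a centipede with $n\geq2$; the case $n=1$ is trivial.

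For the converse, suppose $G\cong P_n^{\,b-2}-P_n^{\,a-3}$ via an isomorphism $\varphi\colon V(G)\to[n]$. Take the centipede on $n$ spine vertices and attach $v$ to $s_{\varphi(v)}$. The computation above shows that this centipede is a witness with interval $[a,b]$. The only point needing care is the boundary bookkeeping in the previous paragraph. In particular, $a$ and $b$ are nonnegative integers here, per Definition~\ref{def:cat_PCGs}, and with $k\geq1$ the inequality $k\geq a-2$ is equivalent to $\neg(1\leq k\leq a-3)$. No real obstacle is expected beyond this.
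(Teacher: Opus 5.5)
Your proposal is correct and follows the paper's proof. Both arguments use $d_T(u,v)=|\pi_T(u)-\pi_T(v)|+2$, note that a centipede places the $n$ leaves bijectively on the spine positions $[n]$, read off that $ij$ is an edge exactly when $|i-j|\in[a-2,b-2]$, i.e.\ $G\cong P_n^{\,b-2}-P_n^{\,a-3}$, and obtain the converse by attaching leaves along an isomorphism. Your explicit treatment of the boundary cases ($a\le 3$, $b\le 2$) and of the integrality of $a$ needed to turn $k>a-3$ into $k\ge a-2$ is a useful addition that the paper leaves implicit.
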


\begin{proof}
Suppose first that \(G\) has a centipede witness \(T\) with interval \([a,b]\). Let
\(s_1,\dots,s_n\) be the spine vertices in order, and let \(v_i\) be the unique leaf adjacent to
\(s_i\). Then, for \(i<j\), $d_T(v_i,v_j)=j-i+2$. Therefore $v_iv_j\in E(G)$
if and only if $j-i+2\in [a,b]$ or equivalently  $j-i\in [a-2,b-2]$.
Thus $G$ is isomorphic to $P_n^{\,b-2}-P_n^{\,a-3}$.

Conversely, suppose that $G$ is isomorphic to $P_n^{\,b-2}-P_n^{\,a-3}$. Build a centipede with spine
\(s_1,\dots,s_n\), and attach one leaf \(v_i\) to each \(s_i\). Then
$d_T(v_i,v_j)=|i-j|+2$,
so the interval \([a,b]\) realizes exactly the edges of \(G\). Hence \(G\) has a centipede witness.
\end{proof}

Notice that this class should not be confused with unit interval graphs. 
By Theorem~\ref{theo:unit_inerval}, every unit interval graph belongs to \(\cat\), and more precisely has a \([0,b]\)-difference labelling for some \(b\).  However, centipede witnesses with a general interval do not necessarily give unit interval graphs.  For example, a centipede on five leaves with interval \([4,5]\) gives the graph
with edges \(v_iv_j\) whenever \(|i-j|\in[2,3]\). This graph is the cycle
\(v_1-v_3-v_5-v_2-v_4-v_1\), that is, \(C_5\). Hence centipede witnesses can
produce graphs that are not unit interval graphs.

\subsection{Restricting the interval: the case $[a,a]$}
Here we characterize the class of \catp{a}{a} for all integer $a\geq 2$.

\begin{theorem}\label{theo:char_cat_2_2}
A graph \(G\) belongs to \(\catp{2}{2}\) if and only if \(G\) is a disjoint union of cliques.
\end{theorem}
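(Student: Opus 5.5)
By Theorem~\ref{thm:char} with $a=b=0$, a graph $G$ belongs to $\catp{2}{2}$ if and only if it admits a $[0,0]$-difference labelling. So the plan is to show that $G$ admits a $[0,0]$-difference labelling exactly when $G$ is a disjoint union of cliques. In such a labelling, $uv\in E(G)$ if and only if $|p(u)-p(v)|=0$, that is, if and only if $p(u)=p(v)$.

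For the forward direction, assume $p$ is a $[0,0]$-difference labelling of $G$. The adjacency relation then coincides, on distinct vertices, with the relation ``having the same label''. This relation is an equivalence relation. Hence $V(G)$ is partitioned into the level sets $p^{-1}(\ell)$. Each level set induces a clique, and there is no edge between two different level sets. So $G$ is a disjoint union of cliques.

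For the converse, let $G$ be a disjoint union of cliques $K^{(1)},\dots,K^{(m)}$. Assign label $i$ to every vertex of $K^{(i)}$. Two distinct vertices receive the same label exactly when they lie in the same clique, which is exactly when they are adjacent. So this is a $[0,0]$-difference labelling. Equivalently, one can note that $K_1$ has a $[0,0]$-labelling, and then apply Proposition~\ref{prop:close}: adding true twins (allowed since $a=0$) builds each clique, and disjoint unions combine them.

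I do not expect a real obstacle. The only point needing care is that the correspondence in Theorem~\ref{thm:char} applies with $a=b=0$, which is allowed since $0\le a\le b$. Geometrically, this means a caterpillar with interval $[2,2]$ makes two leaves adjacent exactly when they hang from the same spine vertex.
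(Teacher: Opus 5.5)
Your proof is correct and follows essentially the same route as the paper: reduce via Theorem~\ref{thm:char} to $[0,0]$-difference labellings, take the label classes as the cliques, and for the converse give each clique its own label. Your alternative converse via Proposition~\ref{prop:close} (building each clique from $K_1$ by adding true twins, then taking disjoint unions) is also valid.
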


\begin{proof}
By Theorem~\ref{thm:char}, \(G\in\catp{2}{2}\) if and only if \(G\) admits a
\([0,0]\)-difference labelling \(p\). In such a labelling,
$uv\in E(G)$ if and only if  $|p(u)-p(v)|=0$
or equivalently $p(u)=p(v).$ Thus vertices with the same label form a clique, and vertices with different
labels are non-adjacent. Therefore \(G\) is a disjoint union of cliques.
Conversely, if \(G\) is a disjoint union of cliques, assign the same label to all
vertices in each clique and assign distinct labels to different cliques. Then two
vertices have label difference \(0\) exactly when they belong to the same clique.
Hence this is a \([0,0]\)-difference labelling, and so \(G\in\catp{2}{2}\).
\end{proof}

Before moving to the more general case we need the following definition. A \emph{false-twin blow-up} of a graph \(H\) is obtained by replacing each vertex \(x\in V(H)\) by a nonempty independent set \(X_x\), and replacing each edge \(xy\in E(H)\) by all edges between \(X_x\) and \(X_y\).

\begin{theorem}\label{thm:exact_interval}
A graph \(G\) belongs to \(\catp{a}{a}\), for \(a\geq 3\), if and only if
\(G\) is a false-twin blow-up of a disjoint union of paths.
\end{theorem}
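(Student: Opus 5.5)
By Theorem~\ref{thm:char}, \(G\in\catp{a}{a}\) with \(a\geq 3\) if and only if \(G\) admits a \([c,c]\)-difference labelling \(p\) with \(c=a-2\geq 1\). In such a labelling \(uv\in E(G)\) exactly when \(|p(u)-p(v)|=c\). The plan is to prove both directions directly from this characterization.

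\textbf{Sufficiency.} Let \(G\) be a false-twin blow-up of \(H=P^{(1)}\dot\cup\cdots\dot\cup P^{(k)}\), where each \(P^{(t)}\) is a path \(x^{(t)}_1,\dots,x^{(t)}_{m_t}\). First label \(H\). On the path \(P^{(t)}\), set \(p(x^{(t)}_i)=1+(i-1)c\). Consecutive path vertices then differ by exactly \(c\), and non-consecutive ones differ by a multiple of \(c\) that is at least \(2c\). This gives a \([c,c]\)-labelling of each path. Proposition~\ref{prop:close}(2) then handles the disjoint union by translating the components far apart. Next, each vertex \(x\) is replaced by the independent set \(X_x\), and every vertex of \(X_x\) receives the label of \(x\). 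This step is exactly Proposition~\ref{prop:close}(3), which applies because \(c>0\).

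\textbf{Necessity.} Let \(p\) be a \([c,c]\)-labelling of \(G\). Group the vertices by label, setting \(X_\ell=p^{-1}(\ell)\) for each label \(\ell\) that is used. Since \(c>0\), each \(X_\ell\) is independent. Moreover, either every vertex of \(X_\ell\) is adjacent to every vertex of \(X_{\ell'}\), or no such pair is adjacent, according to whether \(|\ell-\ell'|=c\). So \(G\) is a false-twin blow-up of the quotient graph \(H\), whose vertex set is the set of used labels and whose edges are the pairs \(\ell\ell'\) with \(|\ell-\ell'|=c\). It remains to show that \(H\) is a disjoint union of paths. Each label \(\ell\) has at most two potential neighbours in \(H\), namely \(\ell-c\) and \(\ell+c\), so \(\Delta(H)\leq 2\). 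Also, \(H\) is acyclic. Every edge of \(H\) joins two labels in the same residue class modulo \(c\). Within a residue class \(r\), the map \(\ell\mapsto(\ell-r)/c\) turns \(H\) into an induced subgraph of the path on the integers, in which two integers are adjacent exactly when they differ by \(1\). An induced subgraph of a path is a disjoint union of paths, and a graph with maximum degree at most two and no cycles is a disjoint union of paths.

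The only point that needs care is to check that the blow-up is genuine, that is, that edges between the classes \(X_\ell\) are all-or-nothing and that there are no edges inside a class. Both facts follow immediately because adjacency depends only on labels. The rest is routine, so I expect no real obstacle. The main thing to watch is that isolated vertices, which form single-vertex paths, are allowed as components of \(H\).
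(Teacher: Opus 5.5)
Your proof is correct and follows essentially the same route as the paper. You reduce via Theorem~\ref{thm:char} to a $[c,c]$-labelling, quotient by label classes (which are independent sets of false twins), and show the quotient has maximum degree two and no cycles; for the converse you label paths with consecutive multiples of $c$ and give twins equal labels. The only cosmetic difference is in the acyclicity step: you embed each residue class modulo $c$ into the integer path, whereas the paper argues that labels along a cycle would have to move monotonically by $\pm c$ and, by injectivity, could never return.
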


\begin{proof}
By Theorem~\ref{thm:char}, \(G\in\catp{a}{a}\) if and only if \(G\) admits a \([d,d]\)-difference labelling \(p\), where \(d=a-2>0\).

Suppose first that \(G\) has a \([d,d]\)-difference labelling \(p\). If two vertices
have the same label, then they are non-adjacent, since \(d>0\), and they have the
same neighbourhood. Hence vertices with the same label form independent sets of
false twins.

Contract each such independent set to one vertex. The resulting graph has an
injective \([d,d]\)-difference labelling. In such a labelling, a vertex with label
\(t\) can be adjacent only to vertices with labels \(t-d\) and \(t+d\). Therefore
every vertex has degree at most two. Moreover, no cycle can occur, since along
any path, the labels must move by \(+d\) or \(-d\), and an injective labelling cannot
return to the same label. Hence the contracted graph is a disjoint union of
paths. Thus \(G\) is obtained from a disjoint union of paths by replacing vertices
by independent sets of false twins.

Conversely, suppose that \(G\) is obtained from a disjoint union of paths by
replacing vertices by independent sets of false twins. Label each original path
with consecutive multiples of \(d\), leaving a gap of size larger than \(d\)
between different path components. Then adjacent vertices on a path have label
difference exactly \(d\), while vertices from different components have label
difference different from \(d\). Finally, give all vertices in the independent
set replacing a path vertex the same label. The resulting labelling is a
\([d,d]\)-difference labelling of \(G\). Hence \(G\in\catp{a}{a}\).
\end{proof}

We note that the characterization of Theorem~\ref{thm:exact_interval} yields a linear-time algorithm to decide whether a graph $G$ is in \catp{a}{a} for some $a \geq 3$.  Indeed, computing the partition of $V(G)$ into twin classes can be done in linear time using the \emph{partition refinement} approach~\cite{HABIB201041}.  In the same amount of time, it is easy to contract each false-twin class into a single vertex and verify that this results in a disjoint union of paths.

\subsection{Universal difference labellings}

We now consider graphs that admit a difference labelling for every positive
integer interval. If single point intervals (i.e. intervals of the type $[a,a]$) are allowed, the answer is completely
forced by the interval \([1,1]\), as shown by the next theorem.

\begin{definition}
We say that a graph \(G\) is \emph{universally difference-labellable} if, for
every pair of integers \(a,b\) with \(0<a\leq b\), \(G\) admits an
\([a,b]\)-difference labelling.
\end{definition}

\begin{theorem}\label{thm:universal_difference}
A graph \(G\) is universally difference-labellable if and only if \(G\) is a false-twin blow-up of a disjoint union of paths.
\end{theorem}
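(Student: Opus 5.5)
Both directions reduce to earlier results, since Theorem~\ref{thm:exact_interval} already characterizes false-twin blow-ups of disjoint unions of paths.

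\emph{Necessity.} If $G$ is universally difference-labellable, then in particular $G$ admits a $[1,1]$-difference labelling. By Theorem~\ref{thm:char}, $G\in\catp{3}{3}$. Theorem~\ref{thm:exact_interval} with $a=3$ then shows that $G$ is a false-twin blow-up of a disjoint union of paths. This direction is immediate.

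\emph{Sufficiency.} Let $G$ be a false-twin blow-up of a disjoint union of paths, and fix $0<a\le b$. I will construct an $[a,b]$-difference labelling in three steps.

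\begin{enumerate}
\item Remove the blow-up. Since $a>0$, Proposition~\ref{prop:close}(3) says the class of graphs with an $[a,b]$-difference labelling is closed under adding false twins. So it suffices to treat the underlying disjoint union of paths.
\item Remove the disjoint union. By Proposition~\ref{prop:close}(2) the class is also closed under disjoint unions, so it suffices to label a single path $P_n$.
\item Label $P_n$ directly by $p(v_i)=1+(i-1)b$, so consecutive labels differ by $b$.
\end{enumerate}

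For the path labelling in step 3, consecutive vertices have difference exactly $b\in[a,b]$, so every edge is realized. Non-consecutive vertices have difference at least $2b>b$, so no non-edge is realized. Hence $P_n$ has an $[a,b]$-difference labelling for every $0<a\le b$.

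Combining the three steps, $G$ has an $[a,b]$-difference labelling for every such pair, so $G$ is universally difference-labellable.

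I do not expect a real obstacle here. The only point needing care is that $a>0$ is used in step 1: equal labels must correspond to \emph{non}-adjacent twins, which Observation~\ref{obs:same_label_twins} guarantees precisely when $a>0$. The definition of universally difference-labellable excludes $a=0$, so this condition always holds.
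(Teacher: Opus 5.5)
Your proof is correct. Necessity matches the paper's proof; you are slightly more careful, passing through Theorem~\ref{thm:char} to land in $\catp{3}{3}$ before invoking Theorem~\ref{thm:exact_interval}.

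For sufficiency the paper takes a shorter route. It applies the converse of Theorem~\ref{thm:exact_interval} to obtain a $[1,1]$-difference labelling $p$ of $G$ and sets $q=b\,p$. A difference of $1$ becomes $b\in[a,b]$, while a difference of $0$ or at least $2$ becomes $0<a$ or at least $2b>b$.

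You instead rebuild the labelling directly. You use the closure properties of Proposition~\ref{prop:close} (adding false twins when $a>0$, and disjoint unions) to reduce to a single path, which you label with step $b$. The final labellings coincide in substance, since the construction in Theorem~\ref{thm:exact_interval} is exactly this: consecutive multiples along each path, well-separated components, and shared labels for twins.

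The paper's scaling trick yields a slightly more general and reusable fact: every graph with a $[1,1]$-difference labelling is universally difference-labellable, with no structural input. Your version does not depend on the converse of Theorem~\ref{thm:exact_interval}. It also makes explicit that $a>0$ is needed only in the twin step, while the path step needs only $b>0$.
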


\begin{proof}
Suppose first that \(G\) is universally difference-labellable. Then, in
particular, \(G\) admits a \([1,1]\)-difference labelling. By
Theorem~\ref{thm:exact_interval}, this means that \(G\) is a false-twin blow-up
of a disjoint union of paths. Conversely, suppose that \(G\) is a false-twin blow-up of a disjoint union of paths. Again by Theorem~\ref{thm:exact_interval}, \(G\) admits a
\([1,1]\)-difference labelling \(p\). Let \(a,b\) be integers with
\(0<a\leq b\). Define $q(v)=b\,p(v)$ for every \(v\in V(G)\).

If \(uv\in E(G)\), then \(|p(u)-p(v)|=1\), and hence $|q(u)-q(v)|=b\in[a,b].$
If \(uv\notin E(G)\), then \(|p(u)-p(v)|\neq 1\). Since \(p\) is integer-valued,
we have either \(|p(u)-p(v)|=0\) or \(|p(u)-p(v)|\geq 2\). Therefore either
$|q(u)-q(v)|=0<a,$
or
$|q(u)-q(v)|\geq 2b>b.$
Thus \(q\) is an \([a,b]\)-difference labelling of \(G\). Since \(a,b\) were
arbitrary, \(G\) is universally difference-labellable.
\end{proof}

If instead we exclude single point intervals we have the following definition.

\begin{definition}
We say that a graph \(G\) is \emph{strictly universally difference-labellable}
if, for every pair of integers \(a,b\) with \(0<a<b\), \(G\) admits an
\([a,b]\)-difference labelling.
\end{definition}

\begin{observation}\label{obs:triangle_strict_universal}
No graph containing a triangle is strictly universally difference-labellable.
\end{observation}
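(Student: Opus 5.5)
It suffices to exhibit one pair \(0<a<b\) for which a graph containing a triangle has no \([a,b]\)-difference labelling. Since the class of graphs admitting an \([a,b]\)-difference labelling is closed under induced subgraphs (Proposition~\ref{prop:close}), and a triangle in \(G\) is an induced \(K_3\), it is enough to show that \(K_3\) admits no \([a,b]\)-difference labelling for some such \(a,b\). I will take \(a=1\), \(b=2\), or more generally any pair with \(b<2a\); concretely \([2,3]\) keeps the argument clean.

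Suppose \(p\) is an \([a,b]\)-difference labelling of a triangle \(xyz\) with \(0<a<b<2a\). Since \(a>0\), every edge forces distinct labels, so \(p(x),p(y),p(z)\) are pairwise distinct. Order them as \(p(x)<p(y)<p(z)\). Then \(p(y)-p(x)\geq a\) and \(p(z)-p(y)\geq a\), so \(p(z)-p(x)\geq 2a>b\). Hence \(xz\notin E(G)\), which is a contradiction.

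Thus for \(a=2,b=3\) (indeed any \(0<a<b<2a\)) no graph containing a triangle has an \([a,b]\)-difference labelling, and therefore none is strictly universally difference-labellable. There is no real obstacle here. The only care needed is that \(a>0\) excludes equal labels, and that the chosen interval satisfies \(a<b\) strictly, as the definition requires.
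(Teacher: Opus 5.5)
Your proof is correct and is essentially the paper's argument: ordering the three labels of the triangle forces $p(z)-p(x)\geq 2a$, so any interval with $0<a<b<2a$, such as $[2,3]$, admits no labelling. One slip: $[1,2]$ does not satisfy $b<2a$ (indeed $K_3$ has the $[1,2]$-difference labelling $1,2,3$), so drop that example and keep $[2,3]$.
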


\begin{proof}
Let \(G\) contain a triangle with vertices \(x,y,z\), and suppose that \(G\)
admits an \([a,b]\)-difference labelling \(p\). Assume, without loss of
generality, that \(p(x)<p(y)<p(z).\)
Since \(xy,yz\in E(G)\), we have
\(p(y)-p(x)\geq a\) and  \(p(z)-p(y)\geq a.\)
Therefore $p(z)-p(x)\geq 2a.$ But \(xz\in E(G)\), so we must also have
$p(z)-p(x)\leq b.$ Hence \(b\geq 2a\). Taking, for example, the interval \([2,3]\), this is impossible. Thus no graph containing a triangle can be strictly universally difference-labellable.
\end{proof}

\section{Recognition of \catp{a}{b}}\label{sec:recognition}
The question of whether Cat-PCG graphs can be recognized in polynomial
time appears to be difficult and is left open. In this section, we
consider \catp{a}{b} for fixed integers $0\leq a\leq b$. The main idea is to reconstruct the witness caterpillar locally by
encoding the positions at which its leaves are attached to the spine.
Thus, rather than working directly with trees, we work with sequences.

Since the distance between two distinct leaves is at least $2$, we set $A=a-2$  and $B=b-2$. Notice that we can assume that $b\geq a\geq 2$ as the distance between any two leaves in the caterpillar is at least 2.
By Theorem~\ref{thm:char}, the recognition problem can then be equivalently formulated in terms of $[A,B]$-difference labellings. Throughout this section, $(a,b)$ denote the parameters of the
caterpillar representation, while $(A,B)$ denote the corresponding
parameters of the sequence representation.

We show that one can decide whether a graph belongs to
\catp{a}{b} in time $O((n+1)^{4(b-2)})$ for $b\geq 3$. We now introduce the required definitions.


Let $S=(s_1,\dots,s_k)$ be a sequence. Its length is denoted by \(|S|=k\).
We write \(S[i]\) for its \(i\)-th element and define
$ set(S)=\{S[i]: i\in[k]\}.$ 
Slightly abusing notation, we may sometimes write $v \in S$ instead of $v \in set(S)$.  If an element $u$ occurs exactly once in $S$, we denote by
$pos_S(u)$ the unique index $i$ such that $S[i]=u$.


For a graph $G$, a \emph{partial \(G\)-sequence} is a sequence
$ S=(s_1,\dots,s_k)$ such that
$s_i \in V(G)\cup\{\bot\}$ for all $i\in[k],$
where $\bot\notin V(G)$ denotes an empty position. We say that $S$ is \emph{well-formed} if every vertex of \(G\) appears at most once in $S$. For a well-formed sequence \(S\) and vertices \(u,v\in set(S)\cap V(G)\), we define
$d_S(u,v)=|pos_S(u)-pos_S(v)|$.


For two partial \(G\)-sequences
$ S=(s_1,\dots,s_k), T=(t_1,\dots,t_\ell),$ 
their concatenation is defined as
$S\circ T = (s_1,\dots,s_k,t_1,\dots,t_\ell).$


Given a graph \(G\) and integers $0\leq A\leq B$, a partial $G$-sequence $S$ is \emph{valid with respect to \((G,A,B)\)} if the following hold:

\begin{enumerate}

\item\label{valid:well}
$S$ is well-formed;

\item\label{valid:in}
for all distinct \(u,v\in set(S)\cap V(G)\),
$ uv\in E(G) \iff
d_S(u,v)\in[A,B];
$ 

\item\label{valid:far}
for every \(u\in set(S)\cap V(G)\) such that $B < pos_S(u) \le |S|-B$,
all neighbors of \(u\)  appear in $S$, i.e.
$N_G(u)\subseteq set(S).$

\end{enumerate}


Condition~\ref{valid:in} enforces that the sequence respects the distance constraints of a Cat-PCG representation.
Condition~\ref{valid:far} is a locality condition: vertices sufficiently far from both ends of the sequence have already “seen” all their neighbors inside the sequence.


We use the following proposition.

\begin{proposition}
    Let $G$ be a graph without false twins if $a > 2$, or without true twins if $a \leq 2$. Then $G$ is in \catp{a}{b}~if and only if there exists a partial $G$-sequence $S$ with $V(G) \subseteq set(S)$ that is valid with respect to $(G, A, B)$.  
\end{proposition}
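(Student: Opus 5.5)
The plan is to prove both directions by translating between injective $[A,B]$-difference labellings and valid sequences, using Theorem~\ref{thm:char} as the bridge. Throughout, $a\geq 2$ is assumed, as stated at the start of this section, so $A=a-2\geq 0$. Positions in a sequence are positive integers, so they can serve directly as labels in $\mathbb{N}$.

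For the direction ($\Leftarrow$), let $S$ be valid with respect to $(G,A,B)$ and satisfy $V(G)\subseteq set(S)$. Define $p(v)=pos_S(v)$ for every $v\in V(G)$. This is well defined because $S$ is well-formed, so each vertex occurs exactly once. For distinct $u,v$ we have $|p(u)-p(v)|=d_S(u,v)$. Condition~\ref{valid:in} then says precisely that $uv\in E(G)$ if and only if $|p(u)-p(v)|\in[A,B]$. Hence $p$ is an $[A,B]$-difference labelling, and Theorem~\ref{thm:char} gives $G\in\catp{a}{b}$. Condition~\ref{valid:far} is not needed here.

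For the direction ($\Rightarrow$), Theorem~\ref{thm:char} gives an $[A,B]$-difference labelling $p$. I would first argue that $p$ is injective, which is where the twin hypothesis enters. If $p(u)=p(v)$ for distinct $u,v$, then Observation~\ref{obs:same_label_twins} shows that $u,v$ are twins of a specific type. They are false twins when $A>0$, which is exactly the case $a>2$, and true twins when $A=0$, which is the case $a=2$. Either way this contradicts the assumption on $G$.

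Next I would shift labels, as in the proof of Theorem~\ref{thm:char}, so that $\min p=1$, and let $k=\max p$. Define $S=(s_1,\dots,s_k)$ by $s_i=v$ if $p(v)=i$ and $s_i=\bot$ otherwise. Injectivity makes $S$ well-formed, and $V(G)\subseteq set(S)$ by construction. Since $pos_S(v)=p(v)$, Condition~\ref{valid:in} is just the labelling property. Condition~\ref{valid:far} holds trivially, because every vertex, and hence every neighbour, already appears in $S$.

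The only delicate step is the injectivity argument. One must match the twin type excluded in the hypothesis ($a>2$ versus $a\leq 2$) with the type produced by Observation~\ref{obs:same_label_twins} ($A>0$ versus $A=0$). This also explains why the hypothesis excludes only one kind of twin: the other kind can never share a label, so it causes no difficulty. Everything else is bookkeeping.
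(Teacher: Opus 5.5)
Your proof is correct and follows the paper's argument essentially step for step. You get injectivity of the labelling from Observation~\ref{obs:same_label_twins} and the twin hypothesis, place each vertex at position $p(v)$ with $\bot$ elsewhere (Condition~\ref{valid:far} holding trivially), and for the converse read off $p(v)=pos_S(v)$ and apply Theorem~\ref{thm:char}; your extra shift to $\min p=1$ is harmless but unnecessary, since labels already lie in $\mathbb{N}$.
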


\begin{proof}
Suppose first that $G\in\catp{a}{b}$. By Theorem~\ref{thm:char}, $G$ admits an $[A,B]$-difference labelling $p$. By Observation~\ref{obs:same_label_twins}, and by the assumptions on $G$, $p$ is injective.

Let $r=\max_{v\in V(G)} p(v)$, and define $S=(s_1,\ldots,s_r)$ by placing each vertex $v$ in position $p(v)$ and $\bot$ in every unused position. Since $p$ is injective, $S$
is well-formed and contains all vertices of $G$.
Define a partial $G$-sequence $S$ by placing each vertex $v$ in
position $p(v)$ and $\bot$ in every unused position. Since $p$ is
injective, $S$ is well-formed and contains all vertices of $G$.
Moreover, $d_S(u,v)=|p(u)-p(v)|$, so Condition~\ref{valid:in} follows from the definition of an $[A,B]$-difference labelling. Condition~\ref{valid:far} holds immediately since $V(G)\subseteq set(S)$. Hence $S$ is valid.

Conversely, let $S$ be a valid partial $G$-sequence containing all
vertices of $G$, and define $p(v)=pos_S(v)$. By Condition~\ref{valid:in}, $p$ is an $[A,B]$-difference labelling of $G$. Hence, by Theorem~\ref{thm:char},
$G\in\catp{A+2}{B+2}=\catp{a}{b}$. 
\end{proof}



\begin{observation}\label{obs:subpcat}
Let $S$ be a valid partial \(G\)-sequence with respect to $(G,A,B)$.
Then every contiguous subsequence of $S$ is also valid.
\end{observation}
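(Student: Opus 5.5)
We verify the three validity conditions directly for a contiguous subsequence. Let $S=(s_1,\dots,s_k)$ be valid with respect to $(G,A,B)$, and let $S'=(s_{i},s_{i+1},\dots,s_{j})$ with $1\le i\le j\le k$. Every position of $S'$ corresponds to a position of $S$ shifted by a constant: for every $u\in set(S')\cap V(G)$ we have $pos_{S'}(u)=pos_S(u)-i+1$. We will use this shift relation throughout.

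Conditions~\ref{valid:well} and~\ref{valid:in} are immediate. Every vertex occurs at most once in $S'$ because it occurs at most once in $S$. For distinct $u,v\in set(S')\cap V(G)$, the shift cancels in the difference, so $d_{S'}(u,v)=d_S(u,v)$. Since $set(S')\subseteq set(S)$, Condition~\ref{valid:in} for $S$ gives $uv\in E(G)$ if and only if $d_{S'}(u,v)\in[A,B]$.

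Condition~\ref{valid:far} needs more care, because being far from the ends of $S'$ must guarantee that all neighbours of $u$ lie inside $S'$, not merely inside $S$. Take $u\in set(S')$ with $B<pos_{S'}(u)\le |S'|-B$. Translating, this means $pos_S(u)\ge i+B$ and $pos_S(u)\le j-B$.

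First, $pos_S(u)\in(B,k-B]$: we have $i\ge 1$, so $pos_S(u)>B$, and $j\le k$, so $pos_S(u)\le k-B$. Condition~\ref{valid:far} for $S$ then gives $N_G(u)\subseteq set(S)$.

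Second, take $w\in N_G(u)$. By Condition~\ref{valid:in} for $S$, we get $d_S(u,w)\le B$, so $pos_S(w)\in[pos_S(u)-B,\,pos_S(u)+B]\subseteq[i,j]$. Hence $w$ lies in $S'$, so $N_G(u)\subseteq set(S')$.

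The only step requiring thought is this last containment. It combines the locality condition on $S$, which places all neighbours of $u$ somewhere in $S$, with the distance bound $B$, which confines them to a window that sits inside $S'$. All other steps are bookkeeping.
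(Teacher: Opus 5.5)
Your proof is correct and follows essentially the same route as the paper: Conditions~\ref{valid:well} and~\ref{valid:in} are inherited directly, and for Condition~\ref{valid:far} you show that a vertex far from the ends of $S'$ is also far from the ends of $S$, so all its neighbours lie in $S$, and then the bound $d_S(u,w)\le B$ places them inside $S'$. Your explicit bookkeeping ($pos_S(u)\ge i+B$ and $pos_S(u)\le j-B$) makes precise what the paper states informally as ``at distance at least $B$ from both ends.''
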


\begin{proof}
Well-formedness and Condition~\ref{valid:in} are preserved immediately.
We verify Condition~\ref{valid:far}. Let \(S'\) be a contiguous subsequence of \(S\), and let $ u\in set(S')\cap V(G)$ 
satisfy $ B < pos_{S'}(u) \le |S'|-B.$  Then $u$ is also at distance at least $B$ from both ends of $S$, so Condition~\ref{valid:far} for $S$ implies $N_G(u) \subseteq set(S)$.  Moreover, by Condition~\ref{valid:in} every neighbor $v$ of $u$ satisfies $d_S(u,v)\leq B$. Since $u$ is at distance at least $B$ from both ends of $S'$, every such neighbor also belongs to $S'$. Hence $N_G(u)\subseteq set(S')$, and Condition~\ref{valid:far} holds.
\end{proof}


Notice that if $b=2$, then $\catp{a}{2}=\catp{2}{2}$,
which is characterized by Theorem~\ref{theo:char_cat_2_2}. Hence, in the remainder of the algorithmic construction, we assume $b\geq3$, and therefore
$B\geq1$.

\begin{lemma}[Locality of neighbors]\label{lem:locality}
Let \(S\) be a valid partial \(G\)-sequence with respect to \((G,A,B)\).
Let $uv \in E(G)$ and suppose that $u, v \in S$. Then
$|pos_S(u)-pos_S(v)|\le B.$

Moreover, suppose that \(S\) is partitioned into consecutive blocks
$S=S_1\circ S_2\circ\cdots \circ S_l$
where $|S_i| = B$ for each $i \in [l-1]$ and $|S_l| \leq B$.
If $u \in S_i$, then 
\[
v \in S_{i-1} \circ S_i \circ S_{i+1}
\]
where here we consider that $S_0$ and $S_{l+1}$ are empty sequences. 
\end{lemma}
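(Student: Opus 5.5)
The plan is to read the first claim directly off Condition~\ref{valid:in} of validity, and then obtain the block statement by a short index computation.

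For the first part, let $uv\in E(G)$ with $u,v\in S$. Since $S$ is well-formed (Condition~\ref{valid:well}), each of $u,v$ occurs exactly once, so $d_S(u,v)=|pos_S(u)-pos_S(v)|$ is well defined. The vertices are distinct because $G$ has no loops. Condition~\ref{valid:in} then gives $d_S(u,v)\in[A,B]$, and in particular $|pos_S(u)-pos_S(v)|\le B$.

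For the block statement, I would use that every block except possibly the last has length exactly $B$. Hence block $S_i$ occupies positions $(i-1)B+1,\dots,\min(iB,|S|)$. Equivalently, a position $t$ lies in block $\lceil t/B\rceil$, and this uses $B\ge 1$, which we assumed. Set $t=pos_S(u)$ and $t'=pos_S(v)$, so that $u\in S_i$ means $(i-1)B+1\le t\le iB$.

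By the first part, $t-B\le t'\le t+B$, and so
\[
(i-2)B+1\le t'\le (i+1)B .
\]
Every position in this range lies in block $i-1$, $i$ or $i+1$. Since $t'$ is an actual position of $S$, it automatically lies in $S_1,\dots,S_l$. Therefore $\lceil t'/B\rceil\in\{i-1,i,i+1\}$ and, moreover, $1\le\lceil t'/B\rceil\le l$.

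The boundary cases $i=1$ and $i=l$ then cause no trouble. The nonexistent indices $0$ and $l+1$ correspond to the empty sequences $S_0$ and $S_{l+1}$. This gives $v\in S_{i-1}\circ S_i\circ S_{i+1}$.

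The only mildly delicate point is this bookkeeping at the boundary: the last block may be short, and $S_0$ and $S_{l+1}$ are empty. I would handle it by the observation above, that $t'$ lies within $1..|S|$. No real obstacle is expected.
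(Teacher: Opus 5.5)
Your proposal is correct and follows essentially the same argument as the paper: the first claim comes straight from Condition~\ref{valid:in}, and the block claim comes from the blocks having length exactly $B$. The only difference is phrasing: the paper argues by contradiction (if $v\in S_j$ with $|j-i|\ge 2$, a full intermediate block forces distance at least $B+1$), while you compute the admissible position range $[(i-2)B+1,(i+1)B]$ directly, which is equivalent.
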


\begin{proof}
Since \(uv\in E(G)\) and $u,v \in S$, Condition~\ref{valid:in} gives
$ d_S(u,v)\in[A,B],$
hence $d_S(u,v)\le B$, proving the first claim.

For the block statement, assume $v\in S_j$ with $j\le i-2$.
Then between $u$ and $v$ there is at least one full block of size $B$, hence $|pos_S(u)-pos_S(v)| \ge B+1,$  contradicting the previous bound. The case $j\ge i+2$ is symmetric. Thus $v$ must lie in $S_{i-1}\circ S_i\circ S_{i+1}$.
\end{proof}

We now show that deciding whether a graph $G$ is in \catp{a}{b}~reduces to the problem of finding a path in a directed graph, in which such paths represent the concatenation of valid partial $G$-sequences.

Let $\mathcal S$ be the set of all partial $G$-sequences of length $B$ that are valid with respect to $(G, A, B)$.    
We denote by $\tnil$ the partial $G$-sequence of length $B$ with $set(\tnil) = \{\bot\}$. Note that $\tnil \in \mathcal S$. 

We define $H := H(G)$ as the directed graph with vertex set
\begin{align*}
    V(H) &= \{ (S_1, S_2, S_3) : S_1, S_2, S_3 \in \mathcal S \mbox{ and $S_1 \circ S_2 \circ S_3$ is valid w.r.t. $(G, A, B)$} \}.
\end{align*}
Then, for each $(S_1, S_2, S_3) \in V(H)$, add a directed edge from $(S_1, S_2, S_3)$ to $(S_2, S_3, S_4)$, for each $S_4 \in \mathcal S$ such that $(S_2, S_3, S_4) \in V(H)$. 

The intuition is that paths in $H$ correspond to partial $G$-sequences,
and the goal is to find such a path that corresponds to a valid sequence
with all of $V(G)$ that starts and ends with $\tnil$. One issue is
that the vertices represent local sequences $S_1\circ S_2\circ S_3$,
whereas the desired sequence needs to include all vertices. However,
if $G$ is connected, Condition~\ref{valid:far} ensures that if a vertex
of $G$ appears in the middle block, then all of its neighbors appear in
the corresponding three-block window, and in turn the neighbors of
neighbors, and so on, ensuring all vertices are present before we finish
with $\tnil$. Another issue is that sequences corresponding to paths
of $H$ may contain vertices multiple times, but repetitions can be
removed in a post-processing step.

\begin{theorem}\label{thm:abpath}
Let $G$ be a connected graph. There exists a valid partial $G$-sequence $S$
such that $V(G)\subseteq set(S)$ if and only if $H$ contains a directed path
from a vertex of the form $(\tnil,S_1,S_2)$ with $S_1\neq\tnil$, to a vertex of the form $ (S_{k-1},S_k,\tnil).$ 
\end{theorem}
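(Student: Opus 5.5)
We prove the two directions separately. In both, a path in $H$ is identified with the concatenation of its blocks: a path $(S_1,S_2,S_3)\to(S_2,S_3,S_4)\to\cdots\to(S_{k-2},S_{k-1},S_k)$ spells the partial $G$-sequence $S_1\circ S_2\circ\cdots\circ S_k$, and every window of three consecutive blocks is valid.

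\emph{Forward direction.} Let $S$ be valid with $V(G)\subseteq set(S)$.

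1. Trim $S$ so that its first and last entries are vertices. Contiguous subsequences remain valid by Observation~\ref{obs:subpcat}, and trimming loses no vertex.
2. Pad $S$ with $\bot$'s: put $\tnil$ (length $B$) in front, and append enough $\bot$'s that the length becomes a multiple of $B$ and the sequence ends with a full $\tnil$ block. Condition~\ref{valid:well} and Condition~\ref{valid:in} are unaffected, since $\bot$ carries no vertex. Condition~\ref{valid:far} holds trivially, because every vertex already lies in the sequence.
3. Cut the padded sequence into blocks $S_0=\tnil,S_1,\dots,S_k,S_{k+1}=\tnil$ of length $B$. Each block and each window $S_{i-1}\circ S_i\circ S_{i+1}$ is contiguous, hence valid by Observation~\ref{obs:subpcat}. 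So the triples are vertices of $H$ and consecutive triples are joined by arcs.
4. $S_1\neq\tnil$, because the first entry after the leading $\tnil$ is a vertex.

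This gives the required path from $(\tnil,S_1,S_2)$ to $(S_{k},\tnil,\cdot)$. If the last triple must be exactly $(S_{k-1},S_k,\tnil)$, we pad one extra $\tnil$ block at the end.

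\emph{Backward direction.} Take a path from $(\tnil,S_1,S_2)$ to $(S_{k-1},S_k,\tnil)$ and let $W=\tnil\circ S_1\circ\cdots\circ S_k\circ\tnil$. We choose a \emph{shortest} such path; this is the main obstacle, since $W$ itself need not be well-formed. The plan has three steps.

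\textbf{(i) Vertices near a present vertex are present.} If $u$ occurs in block $S_i$ with $1\le i\le k$, then $u$ is the middle block of the window $(S_{i-1},S_i,S_{i+1})$. Here the convention $S_0=S_{k+1}=\tnil$ lets the end windows be the first and last vertices of the path. By Condition~\ref{valid:far} applied to that window, and Lemma~\ref{lem:locality}, every neighbour of $u$ occurs in $S_{i-1}\circ S_i\circ S_{i+1}$, at distance in $[A,B]$. Every vertex occurring anywhere in $W$ occurs in some $S_i$ with $1\le i\le k$. Since $S_1\neq\tnil$, some vertex occurs. Because $G$ is connected, induction along paths of $G$ shows every vertex of $G$ occurs in $W$.

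\textbf{(ii) Handling repeated occurrences.} Suppose some vertex occurs at two positions. We aim to show the path revisits a triple, or else derive that the occurrences are far apart, in which case the local copies form independent copies of components. Connectivity of $G$ should force all of $G$ to lie locally between the two occurrences. The first thing to try is this: take the first occurrence of $u$, and grow the set of vertices reached through windows, always moving to neighbours within distance $B$, maintaining one consistent occurrence per vertex. This yields a connected block of $W$ carrying each vertex at least once with consistent pairwise distances.

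\textbf{(iii) Extracting the final sequence.} Keep one occurrence of each vertex and replace all other occurrences by $\bot$, or restrict to a contiguous segment in which every vertex occurs exactly once. We then verify Condition~\ref{valid:in} for pairs lying in different windows. Pairs at distance $\le B$ lie in a common three-block window and are correct by validity of that window. Pairs at distance $>B$ must be non-edges. If such a pair were an edge, the locality in (i) would give another occurrence of one endpoint within distance $B$ of the other, and this is the point where the chosen representatives must be consistent. Condition~\ref{valid:far} holds trivially once all vertices are present.

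I expect making step (ii) rigorous, choosing consistent occurrences, to be the crux. The fallback is to argue that the "reachable component" from one occurrence, built via windows, is itself a contiguous segment of $W$ whose vertex occurrences are distinct.
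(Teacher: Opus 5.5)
Your forward direction is correct. Padding with $\tnil$ on both sides before cutting into blocks lets the boundary windows follow directly from Observation~\ref{obs:subpcat}, which is slightly cleaner than the paper's explicit check of $\tnil\circ S_1\circ S_2$. Step (i) of the backward direction (every vertex occurs, by connectivity and Condition~3 on middle blocks) also matches the paper.

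The gap is in steps (ii)--(iii). You never specify which occurrence of each vertex to keep. You also leave open exactly the case that matters: an edge $uv$ whose two kept occurrences end up more than $B$ apart. Choosing a shortest path does not settle this. Minimality is never used in your sketch, and repeated occurrences of a vertex of $G$ need not force a repeated triple of $H$. The ``independent copies of components'' idea is a hope, not an argument. As written, Condition~2 for the extracted sequence is unproved, and that is the heart of the backward direction.

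The missing idea is to keep the \emph{first} occurrence of every vertex. Let $i(v)$ be the index of the first block containing $v$, and take $uv\in E(G)$ with $i(u)\le i(v)$. Set $S_0=S_{k+1}=\tnil$.
\begin{itemize}
\item The first occurrence of $u$ lies in the middle block of the valid window $S_{i(u)-1}\circ S_{i(u)}\circ S_{i(u)+1}$. Condition~3 therefore puts an occurrence of $v$ in this window.
\item That occurrence cannot lie in $S_{i(u)-1}$, since $i(v)\ge i(u)$. Hence $i(v)\in\{i(u),i(u)+1\}$, and this occurrence is the first occurrence of $v$.
\item The window is well-formed, so the first occurrences of $u$ and $v$ are their unique occurrences in it. Condition~2 for the window then places their distance in $[A,B]$, and this distance is unchanged in the final sequence.
\end{itemize}
The converse, that kept occurrences at distance in $[A,B]$ lie in a common window and so form an edge, is the part you already have. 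Condition~3 is trivial once all vertices are present. With this choice no shortest-path or component argument is needed.
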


\begin{proof}

\textbf{(Forward direction).}
Suppose that $G$ admits a partial $G$-sequence $S$ with $V(G) \subseteq set(S)$.   By deleting leading $\bot$ symbols and adding $\bot$ symbols at the end if necessary, we may assume that the first element of $S$ is not $\bot$, and that $|S|$ is a multiple of $B$ and at least $2B$. So denote $S=(s_1,\dots,s_{Bp})$, where $p$ is some integer. For each \(i\in[p]\), define the block $S_i=(s_{(i-1)B+1},\dots,s_{iB}).$ By Observation~\ref{obs:subpcat}, each \(S_i\) is valid, hence \(S_i\in \mathcal S\).

For every \(i\in\{2,\dots,p-1\}\), the concatenation $S_{i-1}\circ S_i\circ S_{i+1}$  is a contiguous subsequence of \(S\), and therefore valid. Hence each triple \((S_{i-1},S_i,S_{i+1})\) is a vertex of \(H\).

We now handle the boundary. Consider the sequence $\tnil\circ S_1\circ S_2$.
It is well-formed, since \(\tnil\) introduces no vertices of \(G\),
and distances among vertices in \(S_1\circ S_2\) are unchanged, so Condition~\ref{valid:in} holds. We verify Condition~\ref{valid:far}.
Let \(u\in S_1\) satisfy
$$
B<pos_{\tnil\circ S_1\circ S_2}(u)\le 2B.
$$
Then $u$ is at distance at least $B$ from both ends of the window
and therefore every neighbor of $u$ in $G$ lies within distance at most $B$
in the original sequence $S$.
Since no vertex of \(G\) can lie outside the first two blocks while being adjacent to such a vertex,
we conclude that
$ N_G(u)\subseteq set(S_1\circ S_2).$ 
Thus $(\tnil,S_1,S_2)\in V(H)$. A symmetric argument shows that \((S_{p-1},S_p,\tnil)\in V(H)\). Therefore $H$ contains the directed path
$$
(\tnil,S_1,S_2),
(S_1,S_2,S_3),
\dots,
(S_{p-1},S_p,\tnil).
$$
This is the desired path, since $S_1 \neq \tnil$ because it contains $s_1 \neq \bot$, in particular.

\medskip

\noindent
\textbf{(Backward direction).}
Assume that $H$ contains a directed path
$$
(\tnil,S_1,S_2),(S_1,S_2,S_3),\ldots,(S_{k-1},S_k,\tnil),
$$
with $S_1\neq\tnil$.
Set $S_0=S_{k+1}=\tnil$ and, for every $i\in[k]$, let $W_i=S_{i-1}\circ S_i\circ S_{i+1}$.
By the definition of $H$, each $W_i$ is valid with respect to $(G,A,B)$.

\medskip
\noindent
\textbf{Step 1: Coverage of vertices.}
Let $r$ be the first non-$\bot$ vertex appearing in $S_1$.
We prove by induction on $d_G(r,v)$ that every vertex $v\in V(G)$
appears in some block $S_i$.

The claim is immediate for $v=r$. Let $v\neq r$, and let $w$ be the
neighbor of $v$ preceding it on a shortest $r$--$v$ path. By induction,
$w\in S_i$ for some $i$. Since $S_i$ is the middle block of $W_i$, $B<pos_{W_i}(w)\leq 2B.$
As $W_i$ is valid, Condition~\ref{valid:far} gives $N_G(w)\subseteq set(W_i).$
Since $v\in N_G(w)$, the vertex $v$ appears in one of
$S_{i-1},S_i,S_{i+1}$. Hence every vertex of $G$ appears in at least
one block.

\medskip
\noindent
\textbf{Step 2: First occurrences.}
For each $v\in V(G)$, let $i(v)=\min\{i:v\in S_i\}.$
Starting from $S_1\circ\cdots\circ S_k$, keep only the first occurrence
of each vertex and replace every later occurrence by $\bot$.
Let $S^*$ be the resulting sequence. Then $S^*$ is well-formed and $V(G)\subseteq set(S^*).$

\medskip
\noindent
\textbf{Step 3: Validity of $S^*$.}
Condition~\ref{valid:well} holds by construction. Condition~\ref{valid:far}  also holds, since
$S^*$ contains every vertex of $G$. It remains to verify Condition~\ref{valid:in}. Let $u,v\in V(G)$ be distinct, and assume without loss of generality that $i(u)\leq i(v)$.

Suppose first that $uv\in E(G)$. The first occurrence of $u$ lies in
the middle block $S_{i(u)}$ of the valid sequence $W_{i(u)}$. Hence,
by Condition~\ref{valid:far}, $v$ appears in $W_{i(u)}$. Since $i(v)\geq i(u)$, its
first occurrence lies in either $S_{i(u)}$ or $S_{i(u)+1}$. Therefore
the first occurrences of $u$ and $v$ both belong to $W_{i(u)}$.
By Condition~\ref{valid:in} for $W_{i(u)}$, $d_{S^*}(u,v)\in[A,B].$

Conversely, suppose that $d_{S^*}(u,v)\in[A,B].$
Then $d_{S^*}(u,v)\leq B$. Since every block has length $B$, the first
occurrences of $u$ and $v$ lie in the same block or in two consecutive
blocks. Hence they both occur in some valid window $W_i$. By
Condition~\ref{valid:in} for $W_i$, $uv\in E(G).$ Thus
$$uv\in E(G)
\quad\Longleftrightarrow\quad
d_{S^*}(u,v)\in[A,B],
$$
and therefore $S^*$ is a valid partial $G$-sequence containing all
vertices of $G$.
\end{proof}

\begin{lemma}\label{lem:twins}
Let \(G\) be a graph and let \(a,b\) be integers. Suppose that there exist distinct vertices \(u,v\in V(G)\) such that either:
\begin{itemize}
\item \(a>2\) and \(u,v\) are false twins, or
\item \(a\le 2\le b\) and \(u,v\) are true twins.
\end{itemize}

Then
$$
G\in\mathrm{CatPCG}(a,b)
\iff
G-v\in\mathrm{CatPCG}(a,b).
$$
\end{lemma}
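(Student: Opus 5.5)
The plan is to prove the two implications separately. Both go through Theorem~\ref{thm:char}, which moves the question to $[A,B]$-difference labellings with $A=a-2$ and $B=b-2$. First I will dispose of the degenerate values of $a$. If $a<2$, then $\catp{a}{b}=\catp{2}{b}$, as noted in the proof of Corollary~\ref{cor:add_small_components}. So in the second case I may replace $a$ by $\max\{a,2\}$; this gives $A=0$ and $B=b-2\ge 0$, which uses the hypothesis $b\ge 2$. In the first case $a>2$, so $A>0$.

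The forward implication $G\in\catp{a}{b}\Rightarrow G-v\in\catp{a}{b}$ follows from closure under induced subgraphs. Concretely, restrict the $[A,B]$-difference labelling of $G$ to $V(G)\setminus\{v\}$ (Proposition~\ref{prop:close}, item~1) and apply Theorem~\ref{thm:char} again.

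For the backward implication, take an $[A,B]$-difference labelling $p$ of $G-v$ and extend it by setting $p(v)=p(u)$. I then need to check that the extended map is an $[A,B]$-difference labelling of $G$.

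For $x\notin\{u,v\}$, we have $|p(v)-p(x)|=|p(u)-p(x)|$, and this lies in $[A,B]$ if and only if $ux\in E(G)$. The twin hypothesis says $N(u)\setminus\{v\}=N(v)\setminus\{u\}$, so this holds if and only if $vx\in E(G)$.

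For the pair $u,v$ itself, the difference is $0$. In the false-twin case $A>0$, so $0\notin[A,B]$, which matches $uv\notin E(G)$. In the true-twin case $A=0\le B$, so $0\in[A,B]$, which matches $uv\in E(G)$.

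This is exactly items~3 and~4 of Proposition~\ref{prop:close}, so I can simply invoke that proposition after the reduction above.

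The only step that needs care is the boundary bookkeeping in the true-twin case. There I must make sure that replacing $a\le 2$ by $2$ does not change the class, and that $b\ge 2$ guarantees the interval $[0,B]$ is nonempty, so that Theorem~\ref{thm:char} applies. Everything else is routine.
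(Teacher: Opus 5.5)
Your proof is correct and is essentially the paper's argument: setting $p(v)=p(u)$ in the $[A,B]$-difference labelling corresponds, via Theorem~\ref{thm:char}, to attaching $v$ to the spine vertex carrying $u$, which is what the paper does directly in the witness caterpillar. The one difference is that you must first replace $a$ by $\max\{a,2\}$ because you pass through Theorem~\ref{thm:char}, whereas the paper simply observes that $d_T(u,v)=2$ lies in $[a,b]$ exactly in the true-twin case.
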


\begin{proof}
The forward implication is immediate.
Conversely, suppose that \(G-v\) admits a witness caterpillar \(T\).
Let \(w\) be the neighbor of \(u\) in \(T\), and attach \(v\) as a new leaf adjacent to \(w\).
If \(a>2\), then \(d(u,v)=2\notin[a,b]\), so \(u\) and \(v\) are nonadjacent.
If \(a\le 2\le b\), then \(d(u,v)=2\in[a,b]\), so \(u\) and \(v\) are adjacent.
Moreover, for every other leaf $z$, $d(v,z)=d(u,z)$, so $u$ and $v$
have the same neighbors outside the pair $\{u,v\}$. Thus the resulting
caterpillar is a witness for $G$.
\end{proof}


\begin{theorem}\label{thm:recognition}
Let $G$ be a graph with $n$ vertices, and let $0\leq a\leq b$ be fixed
integers. Then one can decide in polynomial time whether
$G\in\catp{a}{b}$. In particular, for $b\geq 3$, the problem can be
solved in time $O((n+1)^{4(b-2)}).$
\end{theorem}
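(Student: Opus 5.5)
The plan is to reduce, step by step, to the setting of Theorem~\ref{thm:abpath} and then bound the size of the auxiliary digraph $H$.

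\textbf{Trivial cases.} If $a<2$, replace $a$ by $2$; this does not change the class, since leaf distances are at least $2$. If $b=2$, Theorem~\ref{theo:char_cat_2_2} applies, and checking that $G$ is a disjoint union of cliques takes linear time. From now on assume $b\ge 3$, so $B=b-2\ge 1$ and $A=a-2\ge 0$.

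\textbf{Removing twins.} If $a>2$, we repeatedly delete a vertex having a false twin. If $a=2$, we repeatedly delete a vertex having a true twin; here $a\le 2\le b$ holds. By Lemma~\ref{lem:twins}, membership in $\catp{a}{b}$ is preserved at each step. The twin classes can be computed by partition refinement, and the reduced graph $G'$ has no false twins (respectively, no true twins). Each deletion is justified by Lemma~\ref{lem:twins}. Deleting all but one vertex of a twin class keeps the remaining ones twins, so one pass over the twin classes suffices.

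\textbf{Splitting into components.} $G'$ may be disconnected. By Proposition~\ref{prop:close}, with translation, $G'$ admits an $[A,B]$-difference labelling if and only if every component does. Induced subgraphs inherit labellings, and conversely the components can be placed far apart. Each component is again twin-free in the relevant sense. Twins inside a component would be twins in $G'$: for false twins, neighbourhoods lie inside the component; for true twins, they are adjacent, so they lie in the same component. So it suffices to decide each connected component $C$ separately. By the Proposition preceding Observation~\ref{obs:subpcat}, $C\in\catp{a}{b}$ if and only if a valid partial $C$-sequence containing $V(C)$ exists. By Theorem~\ref{thm:abpath}, this holds if and only if $H(C)$ has a directed path from some $(\tnil,S_1,S_2)$ with $S_1\ne\tnil$ to some $(S_{k-1},S_k,\tnil)$.

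\textbf{Building and searching $H$.} Enumerate all partial sequences of length $B$ over $V(C)\cup\{\bot\}$; there are at most $(n+1)^{B}$ of them. Test each for validity in time polynomial in $B$ and $n$ to obtain $\mathcal S$. The vertices of $H$ are triples whose concatenation is valid. For each pair $(S_2,S_3)$, the out-neighbours of $(S_1,S_2,S_3)$ are exactly the valid triples $(S_2,S_3,S_4)$. So $H$ is a de Bruijn-like graph: we can identify a triple with an arc from the pair $(S_1,S_2)$ to the pair $(S_2,S_3)$, giving at most $(n+1)^{2B}$ pair-nodes and $(n+1)^{3B}$ arcs. A directed walk in $H$ corresponds to a walk in this pair graph, since consecutive triples overlap in a pair. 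Validity of each triple costs $O(B^2+Bn)$. We then run BFS from all pairs $(\tnil,S_1)$ with $S_1\ne\tnil$ whose arc to some $S_2$ is valid, and test whether we reach an arc into a pair $(S_k,\tnil)$. With crude accounting (enumerating triples, with pairs-of-pairs checks), the total is $O((n+1)^{4B})=O((n+1)^{4(b-2)})$ for fixed $b$, comfortably within the stated bound. Summing over components multiplies this by at most $n$, which is absorbed since $4B\ge 4$ and component sizes sum to $n$ (convexity of $x^{4B}$).

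\textbf{Main obstacle.} The delicate point is not the search but making sure the hypotheses of Theorem~\ref{thm:abpath} and of the sequence Proposition apply. That is, twin removal must be done with the correct twin type relative to $a$, and connectivity must be restored by the component decomposition. We must also confirm that the path-existence criterion is a plain reachability query, which holds because $H$ is finite and explicitly constructed. The running-time bookkeeping is routine: I would simply bound the number of triples by $(n+1)^{3B}$ and the edges by $(n+1)^{4B}$.
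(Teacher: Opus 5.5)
Your proof is correct and follows the paper's route: dispose of small $b$, delete twins via Lemma~\ref{lem:twins}, pass to valid sequences via the sequence Proposition, and decide reachability in $H$ via Theorem~\ref{thm:abpath}. The one substantive addition is your split into connected components, and it matters.

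The paper's proof applies Theorem~\ref{thm:abpath} to the whole twin-reduced graph, although that theorem assumes connectivity. Without the split, the reachability test can accept graphs outside the class. For example, take $W_6\mathbin{\dot\cup}K_1$ with isolated vertex $w$ and any $b\geq 3$. The window $\tnil\circ(w,\bot,\dots,\bot)\circ\tnil$ is valid and is both a start and an end vertex of $H$. So the search accepts a graph that is not in $\catp{a}{b}$, by Theorem~\ref{theo:w6}.

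Two of your observations are exactly what repairs this. Components remain twin-free in the relevant sense, and labellings of components combine under the same interval (Proposition~\ref{prop:close}). Your line-digraph view of $H$, with triples as arcs between pairs of blocks, even gives $O((n+1)^{3B+1})$, though the cruder count already suffices.

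There are two minor points to fix:
\begin{itemize}
\item You never treat $b\leq 1$. Then no leaf distance lies in $[a,b]$, and the answer is simply whether $G$ is edgeless.
\item Your reason why one pass of twin deletion suffices only shows that each deletion is legitimate. What is needed is that no new twin pairs arise. This holds because a deleted vertex distinguishing two survivors has a surviving twin that distinguishes them as well. Alternatively, just iterate the deletion, which remains polynomial.
\end{itemize}
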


\begin{proof}
If $b<2$, then $G\in\catp{a}{b}$ if and only if $G$ is edgeless.
If $b=2$, then the result follows from Theorem~\ref{theo:char_cat_2_2}.
Thus, assume $b\geq3$. We first repeatedly apply Lemma~\ref{lem:twins}:
\begin{itemize}
    \item remove false twins when $a>2$;
    \item remove true twins when $a=2$.
\end{itemize}
Each removal strictly decreases the number of vertices, and the resulting
instance is equivalent by Lemma~\ref{lem:twins}.

After the twin reduction, we apply Theorem~\ref{thm:abpath} by constructing
$H:=H(G)$. Recall that $\mathcal S$ is the set of valid partial
$G$-sequences of length $B$. Since each position contains either a vertex
or $\bot$, $|\mathcal S|\leq(n+1)^B.$ Therefore
$|V(H)|\leq(n+1)^{3B}.$
Each vertex $(S_1,S_2,S_3)$ has at most $(n+1)^B$ outgoing transitions
to a vertex $(S_2,S_3,S_4)$, and hence $|E(H)|\leq(n+1)^{4B}.$

By Theorem~\ref{thm:abpath}, it suffices to determine whether there is a
path from some vertex of the form $(\tnil,S_1,S_2)$ to some vertex of the
form $(S_{k-1},S_k,\tnil)$. We add a new source vertex adjacent to all
possible starting vertices and run a single BFS from this source. This
takes time linear in the size of $H$. Thus the recognition problem can be solved in time $ O((n+1)^{4B})=O((n+1)^{4(b-2)}).$
\end{proof}


\section{Graph operations}\label{sec:graph_operations}
In this section, we study the behaviour of
\(\mathrm{Cat\text{-}PCG}\) under some standard graph operations. Recall that by Theorem~\ref{thm:char}, it is enough to study these
operations in terms of difference labellings.

The full class \(\mathrm{Cat\text{-}PCG}\) is also hereditary and
closed under adding isolated vertices. However, allowing different
intervals for different components does not preserve closure under
disjoint union.

\begin{theorem}
The class \(\mathrm{Cat\text{-}PCG}\) is not closed under disjoint
union.
\end{theorem}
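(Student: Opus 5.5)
The plan is to exhibit two graphs $G_1,G_2\in\cat$ whose admissible intervals are incompatible. The key observation is that, by Theorem~\ref{thm:char}, $G_1\dot\cup G_2$ is in $\cat$ if and only if it has an $[A,B]$-difference labelling for some $0\le A\le B$. Restricting such a labelling to each component gives labellings of $G_1$ and of $G_2$ with the \emph{same} interval. So it suffices to find $G_1,G_2$ such that the sets
\[
\mathcal I(G_i)=\{(A,B): G_i \text{ has an } [A,B]\text{-difference labelling}\}
\]
are both non-empty but disjoint. Note that Proposition~\ref{prop:close} already shows that such a pair cannot share an interval.

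The constraints will come from two mechanisms already in the paper.

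\emph{Forcing $A>0$.} An $[0,B]$-difference labelling places all vertices on a line with adjacency given by distance at most $B$. Hence the graph is a unit interval graph, in line with Theorem~\ref{theo:unit_inerval} and~\cite{brandsadt2008ptolemaic}. Any graph with an induced claw, or an asteroidal triple such as the net, therefore forces $A\ge1$.

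\emph{Forcing $B\ge 2A$.} As in the argument of Observation~\ref{obs:triangle_strict_universal}, any triangle together with $A>0$ forces $B\ge 2A$.

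So my first candidate for $G_1$ is the net (a triangle with a pendant vertex at each corner). It has six vertices, so membership is not covered by Theorem~\ref{theo:five_vertices}, and I would exhibit a labelling explicitly. Every labelling of $G_1$ then satisfies $A\ge1$ and $B\ge 2A$.

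For $G_2$ I need a graph in $\cat$ all of whose labellings satisfy either $A=0$ or $B<2A$. The natural source is the ``sparse'' members of the class, such as suitable cycles or grids, whose known labellings use intervals like $[3,5]$ or $[n+1,n+2]$ with $B<2A$.

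To prove that $B\ge 2A$ with $A>0$ is impossible for the chosen $G_2$, I would argue via Lemma~\ref{lem:betweenness_obstruction} and Lemma~\ref{lem:inclusion_neigborhood}. The idea is that when $B-A\ge A$, any three neighbours of a vertex lying on one side fall in a window of length at most $B-A$. This produces neighbourhood inclusions and forced adjacencies, as in the proofs for $S_{2,2,2,2,2}$ and $W_6$. For instance, if $x$ has two neighbours $u,w$ on the same side with $|p(w)-p(u)|\ge A$, then $uw$ is tempted to be an edge when $B\ge 2A$. A triangle-free graph with vertices of degree at least three, such as a suitable grid patch, should then yield a contradiction. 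I would also have to rule out $A=0$ separately for $G_2$, unless $G_2$ is a unit interval graph; in that case I would instead choose $G_2$ non-unit-interval.

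The main obstacle is this last step: choosing $G_2$ and proving that it admits no labelling with $0<2A\le B$ while still lying in $\cat$. If the betweenness-based argument for a grid proves awkward, I would try to replace $G_2$ by a small hand-picked graph with degree-three vertices and no triangles (for example a subdivided claw with extra structure). I would verify its possible label orders by the same case analysis used for $W_6$ and $\overline{\mathrm{sunlet}_3}$.
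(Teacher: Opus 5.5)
Your reduction is correct: $G_1\dot\cup G_2\in\cat$ exactly when $G_1$ and $G_2$ admit difference labellings with a common interval. The net also does force $A\ge 1$ and $B\ge 2A$. However, there is a genuine gap. The proposal never produces the second graph, and choosing it, together with proving that it avoids the net's intervals, is the whole content of the theorem.

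Worse, the candidates you point to cannot work, because $B\ge 2A$ is too weak a constraint. The ratio $B/A$ of a labelling is not intrinsic. If $p$ is an $[A,B]$-difference labelling, then $3p$ is a $[3A-2,\,3B+2]$-difference labelling, and $3B+2\ge 2(3A-2)$ as soon as $B\ge 2A-2$.

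So the $[3,5]$-labellings of Theorem~\ref{theo:cycles} become $[7,17]$-labellings of every $C_n$, $n\ge 4$. Directly, $C_6$ also has the $[4,8]$-labelling $1,5,11,18,14,8$ in cyclic order. The same rescaling turns the $[3,4]$-labellings of two-row grids into $[7,14]$-labellings.

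On the other side, write the net as the triangle $x,y,z$ with pendant vertices $x',y',z'$. Then the labels $1,\,B+1,\,A+B+1,\,2B+1,\,A+2B,\,A+2B+2$ on $x',x,y,z,y',z'$ form an $[A,B]$-difference labelling for every $A\ge 3$ and $B\ge 2A$.

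Hence the disjoint union of the net with any cycle lies in \cat. The fact that the known cycle and grid labellings happen to have $B<2A$ says nothing about forced inequalities. Any partner that works must have $B\le 2A-3$ in each of its labellings with $A\ge 2$. For grids with at least three rows, this is exactly the step you leave unproved, and nothing in the proposal indicates how to prove it.

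The paper avoids the problem by strengthening the dense side instead of searching for a sparser partner. In $K_4$ with $A>0$ there are three consecutive label gaps of size at least $A$ inside a span of at most $B$, so $B\ge 3A$.

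For $C_6$, $A>0$ because $C_6$ is not unit interval. Take $c_0$ of minimum label. Lemma~\ref{lem:betweenness_obstruction} and a short case check force, up to symmetry, the label order $p(c_0)<p(c_1)<p(c_5)<p(c_2)<p(c_4)<p(c_3)$.

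The three gaps $p(c_5)-p(c_1)$, $p(c_2)-p(c_5)$, $p(c_4)-p(c_2)$ are nonedge differences smaller than $B$, hence each is less than $A$. The nonedge $c_1c_4$ has difference exceeding $A$ (since $c_1c_2$ is an edge), and therefore exceeding $B$. This gives $B<3A$, contradicting $K_4$.

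Within your framework, the fix is to replace the net by $K_4$ and the unspecified sparse graph by $C_6$. The $[4,8]$-labelling of $C_6$ above shows that $B=2A$ is attainable, so the threshold $3A$ rather than $2A$ is essential.
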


\begin{proof}
Both \(K_4\) and \(C_6\) belong to
\(\mathrm{Cat\text{-}PCG}\). We show that $K_4\mathbin{\dot\cup}C_6
\notin \mathrm{Cat\text{-}PCG}$. Suppose, on the contrary, that \(K_4\mathbin{\dot\cup}C_6\) admits an \([a,b]\)-difference labelling \(p\). Notice first that \(a>0\). Indeed, graphs admitting a \([0,b]\)-difference labelling are unit interval graphs, whereas \(C_6\) is not a unit interval graph. Let $x_1,x_2,x_3,x_4$ be the vertices of the $K_4$ component,
ordered so that $p(x_1)<p(x_2)<p(x_3)<p(x_4)$ (note that for distinct $i, j \in [4]$, $p(x_i) \neq p(x_j)$ since $x_i, x_j$ are neighbors and $a > 0$). Since every pair is adjacent, each consecutive difference is at least \(a\), while $p(x_4)-p(x_1)\leq b$.  Therefore $3a\leq p(x_4)-p(x_1) \leq b$,
and hence $b\geq 3a$.

Since $C_6$ is twin-free by Observation~\ref{obs:same_label_twins} $p$ is injective on the $C_6$ component. Let $c_0$ be the vertex of minimum label and, up to symmetry, assume $p(c_1)<p(c_5)$. By Lemma~\ref{lem:betweenness_obstruction} $c_1$ and $c_5$ must be consecutive in the label order. A direct check of the possible positions of $c_2,c_3,c_4$, again using Lemma~\ref{lem:betweenness_obstruction}, leaves only the two orders
$p(c_0)<p(c_1)<p(c_5)<p(c_2)<p(c_4)<p(c_3)$ and
$p(c_0)<p(c_1)<p(c_5)<p(c_4)<p(c_2)<p(c_3).$.  The second order is impossible. Since $c_4c_5\in E(C_6)$ and $p(c_1)<p(c_5)$, we have
$p(c_4)-p(c_1)>a.$ Moreover, since $p(c_4)<p(c_2)$ and $c_1c_2\in E(C_6)$,
$p(c_4)-p(c_1)<p(c_2)-p(c_1)\leq b.$ Thus $c_1c_4\in E(C_6)$, a contradiction.
Therefore $p(c_0)<p(c_1)<p(c_5)<p(c_2)<p(c_4)<p(c_3).$


Set
\[
r=p(c_5)-p(c_1),\qquad
s=p(c_2)-p(c_5),\qquad
t=p(c_4)-p(c_2).
\]
The pairs \(c_1c_5\), \(c_5c_2\), and \(c_2c_4\) are nonedges.
Using the label order above and the edges \(c_0c_5\), \(c_1c_2\),
and \(c_2c_3\), respectively, we obtain
\[
\begin{aligned}
r&=p(c_5)-p(c_1)<p(c_5)-p(c_0)\leq b,\\
s&=p(c_2)-p(c_5)<p(c_2)-p(c_1)\leq b,\\
t&=p(c_4)-p(c_2)<p(c_3)-p(c_2)\leq b.
\end{aligned}
\]
Since \(r,s,t\) correspond to nonedges and are smaller than \(b\),
they must satisfy \(r<a, s<a, t<a\). Since \(c_1c_2\) is an edge, \(r+s=p(c_2)-p(c_1)\geq a\).
Furthermore, \(c_1c_4\) is a nonedge and \(p(c_4)-p(c_1)=r+s+t>a\).
Therefore this difference must be larger than \(b\), and hence
\(r+s+t>b.\)
Consequently, \(b<r+s+t<3a\),
so \(b<3a\) contradicting \(b\geq3a\).

Therefore \( K_4\mathbin{\dot\cup}C_6\notin\mathrm{Cat\text{-}PCG},
\) and the class is not closed under disjoint union.
\end{proof}
The class $\mathrm{Cat\text{-}PCG}$ is not closed under some other natural graph operations either.

\begin{proposition}
The class \(\mathrm{Cat\text{-}PCG}\) is not closed under:
\begin{enumerate}
    \item adding a universal vertex;
    \item taking joins;
    \item taking complements.
\end{enumerate}
\end{proposition}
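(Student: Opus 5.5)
Each item only needs one counterexample. We pick a graph in \(\cat\) such that the operation produces one of the obstructions already ruled out in Section~\ref{subsec:not_catPCGS}.

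For \emph{adding a universal vertex} we use \(C_5\). By Theorem~\ref{theo:cycles}, \(C_5\in\cat\); a \([2,3]\)-labelling is also listed in Figure~\ref{fig:connected_graphs_at_most_five}. Adding a universal vertex to \(C_5\) gives exactly the wheel \(W_6\). By Theorem~\ref{theo:w6}, \(W_6\notin\cat\), which settles item~1.

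For \emph{joins}, adding a universal vertex is the join with \(K_1\), and \(K_1\in\cat\). So the pair \(C_5\) and \(K_1\) already shows non-closure under joins. If one prefers both factors to be nontrivial, \(Q_3\) is not itself a join, since its complement is connected, so we would not use it. Instead we note that the join of \(K_1\) with any graph outside the class is also outside it. The pair \(C_5,K_1\) therefore suffices.

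For \emph{complements} we use \(\mathrm{sunlet}_3\). Its complement \(\overline{\mathrm{sunlet}_3}\) is not in \(\cat\) by Theorem~\ref{thm:co_sunlet3_not_cat}, so it only remains to show \(\mathrm{sunlet}_3\in\cat\). The graph \(\mathrm{sunlet}_3\) is a triangle \(v_1v_2v_3\) with a pendant vertex \(w_i\) attached to each \(v_i\).
\begin{itemize}
\item It is not a unit interval graph, since it contains a net. So we look for a labelling with \(a>0\).
\item By Observation~\ref{obs:triangle_strict_universal}, the triangle forces \(b\geq 2a\).
\item Try \([2,4]\) with \(p(v_1)=1\), \(p(v_2)=3\), \(p(v_3)=5\). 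The triangle differences are \(2,2,4\), all in \([2,4]\).
\end{itemize}
The pendant vertices must each hit exactly one triangle vertex, which makes this the delicate step. Take \(p(w_1)=-2\):
\begin{itemize}
\item \(|{-2}-1|=3\in[2,4]\), so \(w_1\) is adjacent to \(v_1\);
\item \(|{-2}-3|=5\) and \(|{-2}-5|=7\) exceed \(4\), so \(w_1\) misses \(v_2\) and \(v_3\).
\end{itemize}
Symmetrically take \(p(w_3)=8\):
\begin{itemize}
\item \(|8-5|=3\), so \(w_3\) is adjacent to \(v_3\);
\item \(|8-3|=5\) and \(|8-1|=7\), so \(w_3\) misses \(v_2\) and \(v_1\).
\end{itemize}
The middle pendant \(w_2\) is the hard one, because \(v_1\) and \(v_3\) sit symmetrically around \(v_2\). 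Any label \(p(w_2)\) with \(|p(w_2)-3|\in[2,4]\) lies in \([-1,1]\) or \([5,7]\), and then it is within distance \(0\) to \(2\) of \(v_1\) or \(v_3\). At distance \(2\) it would be adjacent to that vertex, and at distance \(0\) or \(1\) it would not. Both \(p(w_2)=1\) and \(p(w_2)=5\) would make \(w_2\) a twin of \(v_1\) or \(v_3\), which is wrong. So the symmetric placement of the triangle fails for this interval.

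We therefore move to an asymmetric triangle and a wider interval. Try \([3,7]\) with triangle labels \(0,3,7\), whose differences are \(3,4,7\). Then:
\begin{itemize}
\item \(w\) at \(-4\): distances \(4,7,11\), so it is adjacent to the vertices labelled \(0\) and \(3\). Not good.
\item \(w\) at \(-6\): distances \(6,9,13\), so it is adjacent only to \(0\). Good.
\item \(w\) at \(14\): distances \(14,11,7\), so it is adjacent only to \(7\). Good.
\item For the vertex labelled \(3\), we need a label \(x\) with \(|x-3|\in[3,7]\), \(|x|\notin[3,7]\) and \(|x-7|\notin[3,7]\). The candidates are \(x\in[-4,0]\cup[6,10]\). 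The range \([6,10]\) is within distance \(3\) of \(7\), giving \(x=6,7,8,9,10\); of these only distances \(0,1,2\) are allowed, so \(x\in\{6,7,8,9\}\), but \(|x|\ge6\) lies in \([3,7]\) for \(x=6,7\). For \(x\in[-4,0]\) we need \(|x|<3\), so \(x\in\{-2,-1,0\}\), but then \(|x-7|\in\{9,8,7\}\). Taking \(x=-2\) gives \(|x-7|=9\), which is fine, and \(|x-3|=5\in[3,7]\), also fine.
\end{itemize}
So the labels \(v\mapsto 0,3,7\) with pendants \(-6,-2,14\) form a \([3,7]\)-labelling; we shift to positive labels by Theorem~\ref{thm:char}. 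The last step is to check the pendant–pendant differences: \(4\), \(20\) and \(16\). The pendants \(-6\) and \(-2\) differ by \(4\), which would make them adjacent. To fix this we push the first pendant further, to \(-7\): then \(|{-7}-0|=7\) keeps it adjacent to \(0\), while its distances to \(3\), \(7\), \(-2\) and \(14\) are \(10\), \(14\), \(5\) and \(21\). The \(5\) to \(-2\) is still bad, so we keep iterating. The expected obstacle is exactly this final search, and I would finish it by a short exhaustive search over small intervals, for example \([4,9]\), as the paper does in Figure~\ref{fig:connected_graphs_at_most_five}.
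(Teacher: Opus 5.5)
Items~1 and~2 are correct and match the paper's argument: $C_5\in\cat$, and $K_1\vee C_5\cong W_6\notin\cat$.

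The gap is in item~3: you never establish $\mathrm{sunlet}_3\in\cat$. Your search ends with a failed attempt and a plan to finish ``as the paper does in Figure~\ref{fig:connected_graphs_at_most_five}''. That figure only covers graphs on at most five vertices, and $\mathrm{sunlet}_3$ has six, so nothing in the paper supplies this labelling. As written, item~3 is unproved.

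The search also contains a concrete slip that caused the dead end. With triangle labels $0,3,7$ and interval $[3,7]$, you discarded the right-hand candidates for the middle pendant. In fact $x=8$ and $x=9$ both work:
\begin{itemize}
\item their distances $8,9$ to the vertex at $0$ exceed $7$;
\item their distances $5,6$ to the vertex at $3$ lie in $[3,7]$;
\item their distances $1,2$ to the vertex at $7$ are below $3$.
\end{itemize}
Put the pendants at $-5$, $9$, $11$ for the triangle vertices at $0$, $3$, $7$ respectively. Each pendant then sees only its own triangle vertex: its distances are $5,8,12$; $9,6,2$; and $11,8,4$. The pendant--pendant differences are $14,16,2$, all outside $[3,7]$. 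Shifting by $6$ gives positive labels, so $\mathrm{sunlet}_3\in\catp{5}{9}$ by Theorem~\ref{thm:char}. This completes your route.

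With that labelling supplied, your item~3 is a valid alternative to the paper's. The paper needs no new labelling. It takes $K_1\dot\cup C_5$, which lies in the class by Proposition~\ref{prop:close} because both components admit a $[3,5]$-difference labelling. Its complement is $K_1\vee\overline{C_5}\cong W_6$, since $C_5$ is self-complementary. So the single obstruction $W_6$ handles all three items. Your version instead relies on the independent obstruction of Theorem~\ref{thm:co_sunlet3_not_cat}. It yields a connected member of the class whose complement leaves it, at the cost of an explicit six-vertex labelling that must actually be written down.
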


\begin{proof}
The cycle $C_5$ belongs to $\mathrm{Cat\text{-}PCG}$, whereas
adding a universal vertex produces $W_6$, which does not belong to
the class. Since $W_6=K_1\vee C_5$, the class is not closed under joins.

Finally, by Theorem~\ref{theo:cycles}, $C_5$ admits a $[3,5]$-difference
labelling, and $K_1$ trivially admits the same interval. Hence, by
Proposition~\ref{prop:close},
$K_1\mathbin{\dot\cup}C_5\in\mathrm{Cat\text{-}PCG}$.
Since $C_5$ is self-complementary,
$$
\overline{K_1\mathbin{\dot\cup}C_5}
\cong K_1\vee C_5
\cong W_6.
$$
Therefore, the class is not closed under complements.
\end{proof}

\section{Open problems}\label{sec:concl}

We conclude with some natural questions that arise from our results.
To the best of our knowledge, all the problems below are still open.

\begin{problem}[Recognition]
Can \cat be recognized in polynomial time?
\end{problem}

\begin{problem}[Disjoint unions]
For which pairs of graphs \(G\) and \(H\) is
\(G\mathbin{\dot\cup}H\)  in \cat?
\end{problem}

\begin{problem}[Grids with diagonals]
For which values of \(m\) and \(n\) is the strong product
\(P_m\boxtimes P_n\) in \cat ?
\end{problem}

The graph \(P_m\boxtimes P_n\) can be viewed as a rectangular grid in
which both diagonals are added to every square. Since ordinary
two-dimensional grids belong to the class, it is natural to ask whether
the same construction can be adapted when diagonal edges are present.

\begin{problem}[Toroidal grids]
For which values of \(m\) and \(n\) is the toroidal grid
\(C_m\square C_n\) in \cat?
\end{problem}

The construction used for ordinary grids relies on the boundary of the grid and on a suitable ordering of its diagonals. A toroidal grid has no boundary, so the same construction cannot be applied directly. It would already be interesting to study the families
\(C_3\square C_n\) and \(C_4\square C_n.\)

\begin{problem}[Star-PCGs]
Is every (weighted) star-PCG in \cat?
\end{problem}

Notice that the reverse inclusion does not hold. Indeed, every cycle
belongs to $\mathrm{Cat\text{-}PCG}$, whereas $C_n$ is not a
star-PCG for every $n\geq 5$. The remaining question is therefore whether
\[
\mathrm{star\text{-}PCG}
\subseteq
\mathrm{Cat\text{-}PCG}.
\]
If the inclusion is false, it would be interesting to determine the
smallest star-PCG that is not an unweighted caterpillar PCG.

\begin{problem}[Twin-width]
What is the smallest constant \(c\) such that every unweighted
caterpillar PCG has twin-width at most \(c\)?
\end{problem}

We proved that every graph in the class has twin-width at most \(10\).
It is not clear whether this bound is tight. A first step would be to
find examples with large twin-width or to improve the contraction
sequence used in the proof.

\begin{problem}[Forbidden induced subgraphs]
Determine the minimal forbidden induced subgraphs for \cat.
\end{problem}

Since the class is hereditary, it is completely determined by a family (potentially infinite) of  minimal forbidden induced subgraphs. Several concrete questions arise:
\begin{itemize}
    \item What are all minimal obstructions on $k$ vertices, with $k\geq6$?
    \item Are there infinitely many minimal obstructions?
    \item Does the class admit a finite forbidden induced-subgraph
characterization?
\end{itemize}

The nonmembers identified in this paper provide a first collection of
possible obstructions, but a systematic description is still missing.

\section{AI Disclosure}
We used AI to enhance readability and for to help with the drawing of Figures 2 and 3. No AI tools have contributed to the mathematical content of the manuscript.

\bibliographystyle{alpha}
\bibliography{references}
\end{document}